\newcounter{todocounter}
\newtheorem*{myNote}{Note}
\newtheorem*{myExample}{Example}
\DeclarePairedDelimiter\floor{\lfloor}{\rfloor}
\DeclareMathOperator{\E}{\mathbb{E}}
\DeclareMathOperator{\N}{\mathbb{N}}
\newcommand{\VERTICALSUBFIGSCALE}{0.3}
\newcommand{\VERTICALFIGSCALEINPUT}{0.3}
\newcommand{\epochLength}{\ensuremath{\ell}}
\newcommand{\posRatio}{\ensuremath{\rho}}
\newcommand{\oneMinusPosRatioBrackets}{\ensuremath{\left(1-\posRatio\right)}}
\newcommand{\factor}{\ensuremath{F}}
\newcommand{\tokenCount}{\ensuremath{t}}
\newcommand{\rewardPerBlock}{\ensuremath{r}\xspace}
\newcommand{\rewardPerBlockWithEpoch}{\ensuremath{r_{\epochIndex}}\xspace}
\newcommand{\globalStorage}{\ensuremath{S^{G}}}
\newcommand{\localStorage}[1]{\ensuremath{S^{L}_{#1}}}
\newcommand{\balanceSub}[1]{\balanceSubSuper{#1}{}}
\newcommand{\balanceNormalizedSub}[1]{\ensuremath{\balanceNormalizedSubSuper{#1}{}}}
\newcommand{\protocolGenericName}{\ensuremath{\varPi}}
\newcommand{\mainChainName}[1]{\ensuremath{C_{#1}}}
\newcommand{\lastBlockOnMainChain}[1]{\ensuremath{\textit{last}\left(#1\right)}}
\newcommand{\lengthMainChain}[1]{\ensuremath{\textit{length}\left(#1\right)}}
\newcommand{\countableVariable}{\ensuremath{m}}
\newcommand{\balanceSubSuper}[2]{\ensuremath{B_{#1}^{#2}\xspace}}
\newcommand{\balanceNormalizedSubSuper}[2]{\ensuremath{b_{#1}^{#2}\xspace}}
\newcommand{\internalCurrency}{\ensuremath{\textit{ic}}}
\newcommand{\externalCurrency}{\ensuremath{\textit{ec}}}
\newcommand{\balanceOfOthers}{\ensuremath{B_{\userSet}^{\internalCurrency}\xspace}}
\newcommand{\balanceOfMinersInternal}{\ensuremath{B_{\minerSet}^{\internalCurrency}\xspace}}
\newcommand{\balanceOfMinersExternal}{\ensuremath{B_{\minerSet}^{\externalCurrency}\xspace}}
\newcommand{\balanceOfMinersAll}{\ensuremath{B_{\minerSet}\xspace}}
\newcommand{\strategyNameProtocol}[2]{\ensuremath{\sigma_{\text{#1}}^{\text{#2}}}}
\newcommand{\funcNameAllocateBudgt}{Allocate}
\newcommand{\funcNameGenerateBlock}{Generate}
\newcommand{\funcNamePublish}{Publish}
\newcommand{\funcNameBalance}{Bal}
\newcommand{\funcNameLongestChains}{LC}
\newcommand{\funcNameLongestCommonPrefix}{LCP}
\newcommand{\funcArgsAllocateBudgt}[1]{\globalStorage, \balanceSub{#1}}
\newcommand{\funcArgsGenerateBlock}[1]{\globalStorage, \localStorage{#1}}
\newcommand{\funcArgsPublish}[1]{\globalStorage, \localStorage{#1}}
\newcommand{\funcNameAllocateBudgtWithFont}{\textsf{\funcNameAllocateBudgt}_{i}()}
\newcommand{\funcNameGenerateBlockWithFont}{\textsf{\funcNameGenerateBlock}_{i}()}
\newcommand{\funcNamePublishWithFont}{\textsf{\funcNamePublish}_{i}()}
\newcommand{\funcNameBalanceWithFont}[1]{\textit{\funcNameBalance}\left(#1\right)}
\newcommand{\funcNameBalanceWithFontAndProtocol}[2]{\textit{\funcNameBalance}^{#2}(#1)}
\newcommand{\funcNameBalanceWithFontAndProtocolAndIndex}[3]{\textit{\funcNameBalance}^{#2}_{#3}(#1)}
\newcommand{\funcAllocateBudgt}[1]{\ensuremath{\textsf{\funcNameAllocateBudgt}_{#1}\left(\funcArgsAllocateBudgt{#1}\left(\epochIndex\right)\right)}}
\newcommand{\funcGenerateBlock}[2]{\ensuremath{\textsf{\funcNameGenerateBlock}_{#1}^{#2}\left(\funcArgsGenerateBlock{#1}\right)}}
\newcommand{\funcPublish}[1]{\ensuremath{\textsf{\funcNamePublish}_{#1}\left(\funcArgsPublish{#1}\right)}}
\newcommand{\funcBalanceWithProtocol}[2]{\ensuremath{\textit{\funcNameBalance}_{#1}^{#2}\left(\funcLongestCommonPrefix\right)}}
\newcommand{\minerSet}{\ensuremath{\textit{Miners}}\xspace}
\newcommand{\userSet}{\ensuremath{\textit{Users}}\xspace}
\newcommand{\funcLongestChains}[1]{\ensuremath{\textit{\funcNameLongestChains}_{#1}\left(\globalStorage\right)}}
\newcommand{\funcLongestChainsNotGlobal}{\ensuremath{\textit{\funcNameLongestChains}\left(S\right)}}
\newcommand{\funcLongestCommonPrefix}{\ensuremath{\textit{\funcNameLongestCommonPrefix}\left(\globalStorage\right)}}
\newcommand{\funcLongestCommonPrefixNotGlobal}{\ensuremath{\textit{\funcNameLongestCommonPrefix}\left(S\right)}}
\newcommand{\epochIndex}{\ensuremath{k}\xspace}
\newcommand{\weightLetter}{W}
\newcommand{\weight}[1]{\ensuremath{\weightLetter_{#1}}}
\newcommand{\weightNormalized}[1]{\ensuremath{\tfrac{\E\left[\weightWithChain{#1}{\mainChainName{}}\right]}{ \balanceNormalizedSub{#1} }}}
\newcommand{\weightNormalizedNormalized}[1]{\ensuremath{\tfrac{1}{ \epochLength \factor } \cdot  \tfrac{\E\left[\weightWithChain{#1}{\mainChainName{}}\right]}{ \balanceNormalizedSub{#1}  }}}
\newcommand{\weightWithChain}[2]{\ensuremath{\weightLetter_{#1}\left({#2}\right)}}
\newcommand{\countFactoredBlocksWithChain}[2]{\ensuremath{N_{#1}\left({#2}\right)}}
\newcommand{\countFactoredBlocksWithChainWithEpoch}[2]{\ensuremath{N_{#1}^{\epochIndex}\left({#2}\right)}}
\newcommand{\utilityfuncName}[1]{\ensuremath{\textit{U}}_{#1}\xspace}
\newcommand{\utilityfuncNameWithProtocol}[2]{\ensuremath{\textit{U}}_{#1}^{#2}\xspace}
\newcommand{\utilityNoIcfuncName}[2]{\ensuremath{\textit{U}}_{#1,\noIcStrategyName}^{#2}\xspace}
\newcommand{\utilityYesIcfuncName}[2]{\ensuremath{\textit{U}}_{#1,\defaultStrategyName}^{#2}\xspace}
\newcommand{\hebShortName}{\ensuremath{\textit{HEB}}\xspace}
\newcommand{\hebFullName}{\ensuremath{\textit{Hybrid Expenditure Blockchain}}\xspace}
\newcommand{\protocolNameOneNoPrefixShortName}{\ensuremath{\textit{PR}}\xspace}
\newcommand{\protocolNameOneNoPrefixFullName}{\ensuremath{\textit{PartialReward}}\xspace}
\newcommand{\protocolNameBitcoin}{\ensuremath{\textit{Nakamoto}}\xspace}
\newcommand{\protocolNameBitcoinHalf}{\ensuremath{\textit{NakamotoHalf}}\xspace}
\newcommand{\protocolNameOne}{\protocolNameOneNoPrefixShortName}
\newcommand{\protocolNameMandatory}{\ensuremath{\textit{MIE}}\xspace}
\newcommand{\protocolNameMandatoryFullName}{\ensuremath{\textit{MandatoryInternalExepnditure}}\xspace}
\newcommand{\ignorantMiningStrategyName}{\ensuremath{\textit{PoW-only}}\xspace}
\newcommand{\defaultStrategyName}{\ensuremath{\textit{prescribed}}\xspace}
\newcommand{\pettyCompliantStrategyName}{\ensuremath{\textit{petty}}\xspace}
\newcommand{\bestResponseStrategyName}{\ensuremath{\textit{best}}\xspace}
\newcommand{\noIcStrategyName}{\ensuremath{\textit{prescribed-no-ic}}\xspace}
\newcommand{\actionContent}{\ensuremath{
	\left\{
		\actionFieldOneName{},
		\actionFieldTwoName{}
	\right\}
}}
\newcommand{\actionFieldOneName}{\ensuremath{\textit{chain\_manipulation}}}
\newcommand{\actionFieldTwoName}{\ensuremath{\textit{block\_type}}}
\newcommand{\nodeContent}{\ensuremath{
	\left\{
		\fieldOneName{},
		\fieldTwoName{},
		\fieldThreeName{}
	\right\}
}}
\newcommand{\fieldOneName}{\ensuremath{\textit{secret\_chain}}}
\newcommand{\fieldTwoName}{\ensuremath{\textit{public\_chain}}}
\newcommand{\fieldThreeName}{\ensuremath{\textit{fork}}}
\newcommand{\protocolPropertyDecentralization}{\ensuremath{\textit{Size bias}}}
\newcommand{\protocolPropertyPermissiveness}{\ensuremath{\textit{Permissiveness}}}
\newcommand{\protocolPropertyAdversarial}{\ensuremath{\textit{Nash threshold}}}
\newcommand{\protocolPropertyPow}{\ensuremath{\textit{External expenses}}}
\newcommand{\protocolPropertySabotageAttack}{\ensuremath{\textit{Safety-violation threshold}}}
\newcommand{\protocolPropertyDoubleSpendAttack}{\ensuremath{\textit{Free safety-violation threshold}}}
\begin{document}

\title{Tuning PoW with Hybrid Expenditure}

\sloppy



\author{Itay Tsabary}
\email{sitay@technion.ac.il}
\affiliation{%
	\institution{Technion, IC3}
	\city{Haifa}
	\country{Israel}
}

\author{Alexander Spiegleman}
\email{sashaspiegelman@fb.com}
\affiliation{%
	\institution{Faceook Novi}
	\city{Menlo Park}
	\state{California}
	\country{USA}
}

\author{Ittay Eyal}
\email{ittay@technion.ac.il}
\affiliation{%
	\institution{Technion, IC3}
	\city{Haifa}
	\country{Israel}
}

\begin{abstract} 
	
	\emph{Proof of Work} (\emph{PoW}) is a Sybil-deterrence security mechanism. 
	It introduces an \emph{external cost} to a system by requiring computational effort to perform actions.  
	However, since its inception, a central challenge was to tune this cost. 
	Initial designs for deterring spam email and DoS attacks applied overhead equally to honest participants and attackers. 
	Requiring too little effort did not deter attacks, whereas too much encumbered honest participation. 
	This might be the reason it was never widely adopted. 
	
	Nakamoto overcame this trade-off in Bitcoin by distinguishing desired from malicious behavior and introducing \emph{internal rewards} for the former. 
	This solution gained popularity in securing cryptocurrencies and using the virtual internally-minted tokens for rewards. 
	However, in existing blockchain protocols the internal rewards fund (almost) the same value of external expenses. 
	Thus, as the token value soars, so does the PoW expenditure.
	Bitcoin PoW, for example, already expends as much electricity as Colombia or Switzerland. 
	This amount of resource-guzzling is unsustainable and hinders even wider adoption of these systems.

	In this work we present \emph{Hybrid Expenditure Blockchain} (\emph{HEB}), a novel PoW mechanism. 
	\emph{HEB} is a generalization of Nakamoto's protocol that enables tuning the external expenditure by introducing a complementary \emph{internal-expenditure} mechanism.
	Thus, for the first time, HEB decouples external expenditure from the reward value.

	We show a practical parameter choice by which \emph{HEB} requires significantly less external consumption compare to Nakamoto's protocol, its resilience against rational attackers is similar, and it retains the decentralized and permissionless nature of the system.
	Taking the Bitcoin ecosystem as an example, \emph{HEB} cuts the electricity consumption by half.

\end{abstract}

\maketitle

\begin{CCSXML}
	<ccs2012>
	<concept>
	<concept_id>10002978.10003006.10003013</concept_id>
	<concept_desc>Security and privacy~Distributed systems security</concept_desc>
	<concept_significance>500</concept_significance>
	</concept>
	<concept>
	<concept_id>10010405.10003550.10003551</concept_id>
	<concept_desc>Applied computing~Digital cash</concept_desc>
	<concept_significance>500</concept_significance>
	</concept>
	</ccs2012>
\end{CCSXML}

\ccsdesc[500]{Security and privacy~Distributed systems security}
\ccsdesc[500]{Applied computing~Digital cash}

\keywords{Blockchain; Proof of work; Cryptocurrency; Wastefulness;  Environmental impact }


\section{Introduction}
\label{sec:intro}

\emph{Permissionless} systems are susceptible to~\emph{Sybil attacks}~\cite{douceur2002sybil} where a single attacker can masquerade as multiple entities.
To mitigate such attacks, \emph{Proof of Work}~(\emph{PoW})~\cite{dwork1992pricing,jakobsson1999proofs,aspnes2005exposing} security schemes introduce \emph{external costs}, making attacks expensive.
To perform operations in a PoW system, users must provide a proof of computation, whose production requires resource expenditure.
This makes attacks like email spam~\cite{rosenberg2008session} and denial of service~\cite{wang2003defending,stebila2011stronger,back2002hashcash,aura2000resistant,juels1999client} prohibitively expensive, as they require many operations.
However, honest users are also subject to these costs, and the system cannot balance deterring adversarial behavior but not honest one~\cite{laurie2004proof}.

To circumvent this trade-off, Nakamoto~\cite{nakamoto2008bitcoin} suggested introducing \emph{internal rewards} for honest behavior (Table~\ref{table:comparison} summarizes this taxonomy).
Indeed, nowadays PoW is widely used to secure decentralized and permissionless cryptocurrencies like Bitcoin~\cite{nakamoto2008bitcoin} and Ethereum~\cite{buterin2013ethereum}.
These are replicated state-machines~\cite{borg1983message,lamport1984using,castro1999practical,lamport1978implementation} that facilitate monetary ecosystems of internally-minted \emph{tokens}, maintained by principals called \emph{miners}. 
Miners that follow the protocol are rewarded with tokens;
tokens are scarce, hence a market forms~\cite{lynn1991scarcity,brock1968implications,brock1992liberalization,cryptoslatePowMarketCap};
so, miners can sell their obtained tokens for fiat currency (e.g., USD, EUR), compensating them for their PoW expenses.

\begin{table}[!b]
	
	\newcolumntype{P}[1]{>{\centering\arraybackslash}p{#1}}
	\newcolumntype{M}[1]{>{\centering\arraybackslash}m{#1}}
	
	\begin{center}
		\begin{tabular}{| M{1.2cm} | M{2.2cm} | M{2.2cm} | M{1.5cm} |}
			\hline		
			\multirow[b]{2}{*}{\makecell{ \textbf{Internal}\\ \textbf{Rewards} }} & \multicolumn{3}{c|}{\textbf{Expenses}}  \\\cline{2-4}
			& \textit{Negligible} & \textit{External} & \textit{External \& Internal} \\
			\hline
			\multirow[]{2}{*}{\makecell{ \textit{Exist} }} & PoS Blockchains & PoW Blockchains & $\hebShortName$ \\
			& \cite{kiayias2017ouroboros,gilad2017algorand,ef2020eth2Pos,goodman2014tezos} & \cite{nakamoto2008bitcoin,buterin2013ethereum} & (this~work) \\ 
			\hline
			\multirow[]{2}{*}{\makecell{ \textit{Absent} }} & Open systems & Classical PoW &  \\
			& (e.g., email) & \cite{dwork1992pricing,juels1999client,aspnes2005exposing} & \\ 			
			\hline
		\end{tabular}
	\end{center}
	
	\caption{Security scheme rewards-expenses comparison.}
	\label{table:comparison}
\end{table}

To guarantee their security~\cite{garay2015backbone,pass2017analysis,decker2013propagation}, PoW cryptocurrencies moderate their operation rate by dynamically tuning the required computation difficulty to match miner capabilities~\cite{nakamoto2008bitcoin,kiffer2018better,noda2020economic,negy2020selfish}.
Consequently, the PoW expenditure directly depends on token values~\cite{mirkin2020bdos,becker2012integrity,ciaian2016economics,kroll2013economics}~--
higher token prices imply higher mining rewards, which draw more miners to participate, leading to more expended resources.
This results with the external PoW expenditure matching the internal mining rewards, and hence balances the overhead for honest participation with high attack costs.

Indeed, with exponentially-growing token values~\cite{bitcoin2020price,ether2020price}, the amount of resources spent on PoW mining has also been growing accordingly~\cite{blockchain2019difficulty,ether2019difficulty}.
Bitcoin PoW computations alone are responsible for about~$0.3\%$ of the global electricity consumption~\cite{de2020bitcoin,digiconomist2019BitcoinEnergyConsumptionIndex}, surpassing medium-sized countries like Colombia and Switzerland~\cite{cbeci}.
This level of resource guzzling is unsustainable~\cite{fairley2017blockchain,su9122214,ong2018plattsburghBanMining,wildau2018chinaShutterBitcoinMines,truby2018decarbonizing,de2019renewable,hern2019bitcoinenergyusageishugewecantaffordtoignoreit,eckel2020blackouts,greenberg2019energy,kamiya2019bitcoinEnergyUse,crs2019bitcoin},  bears a significant ecological impact~\cite{stoll2019carbon,houy2019rational,mora2018bitcoin,goodkind2020cryptodamages,bitcoinMiningCoal,lou2019thenextbigenvironmentalfight,feige2019bitcoindoesntincentivizegreenenergy,johnson2019thenegativeenvironmentalimpactofbitcoin,hunt2018canWePrevent}, and prevents adoption~\cite{bbc2021teslaNoLongerAcceptsBitcoin,bbc2020blackrockGoesGreen}.
Unfortunately, Nakamoto's mechanism directly incentivizes external expenditure at the same rate as of the internal rewards, and offers no means of reducing its external effects.

Previous work~(\S\ref{sec:related_work}) explored PoW alternatives for cryptocurrencies, notably focusing on~\emph{Proof of Stake} (\emph{PoS})~\cite{kiayias2017ouroboros,gilad2017algorand,ef2020eth2Pos}.
Such systems avoid the external resource expenditure by replacing the computational effort with internal token ownership requirements.
However, PoS systems operate, and are secured under, qualitatively-different assumptions.
Namely, new participants need to obtain tokens, which occurs only with the permission of current stakeholders.

We note that naive adjustments of the cryptocurrency minting rate do not reduce the external expenses; that a simple reduction of rewards hampers security; and that \emph{forcing} miners to internally-spend breaks the permissionless property of the system.
We review these options in the appendix.

In this work we present~$\hebFullName$ ($\hebShortName$), the first PoW protocol with lower external costs than its internal rewards.
Despite the reduced external expenditure, $\hebShortName$ provides similar security guarantees against rational attackers compared to the more-wasteful Nakamoto protocol.
$\hebShortName$ is tunable, allowing the system designer to optimize for desired properties.

		\subsection*{$\hebShortName$ Overview}
		\label{sec:intro_protocol}

The main challenge is to reduce the external expenses while keeping attack (and  also participation) costs high.
These objectives seem to contradict, as in previous work~\cite{dwork1992pricing,jakobsson1999proofs,aspnes2005exposing,nakamoto2008bitcoin,buterin2013ethereum} the participation costs are only external.

This is the main novelty of~$\hebShortName$~-- it is a generalization of Nakamoto's protocol that \emph{enables} and \emph{incentivizes} miners to forfeit system tokens as part of the mining process.
Miners that do so increase their rewards, resulting with this being the optimal behavior.
So, the external mining expenses in~$\hebShortName$ are lower than existing PoW blockchains, while the total expenses (internal and external) are the same.

Similarly to Bitcoin, $\hebShortName$ constructs a tree data structure of elements named \emph{blocks}, where the longest path (called \emph{main chain}) defines the system state.
Miners produce PoW to create blocks and broadcast them with a p2p network.
However, unlike Bitcoin, $\hebShortName$ considers epochs of blocks, and the mining rewards for each epoch are distributed only at its end.

In $\hebShortName$ there are two types of blocks miners can create~-- \emph{regular} or \emph{factored}, which have a~\emph{weight} attribute with values~$1$ and~$\factor > 1$, respectively.
The epoch rewards are distributed among the miners proportionally to the relative weight of their blocks (in the epoch).
Miners can always create regular blocks, but have to forfeit system tokens in epoch~$\epochIndex$ (e.g., with a designated transaction) to create factored blocks in epoch~$\epochIndex+1$.
This mechanism incentives miners to divert some of their total  participation budget internally to create factored blocks, reducing their external PoW expenditure.
The ratio between internal and external expenses is tuned with a parameter~$\posRatio$, determining the token expenditure required to create a factored block.

To maintain the total circulating token supply, the internally-expended tokens are distributed proportionally among all system entities (i.e., any token holder) at the epoch conclusion.
This redistribution maintains the token value as in a standard PoW cryptocurrency (e.g., Bitcoin), as well as the proportional purchasing power~\cite{core2017economy} (i.e., relative wealth) of all system entities.
The internally-spent tokens are redistributed using a novel redistribution technique, which might be of independent interest (e.g., for regulating transaction fees, cf.~\cite{buterin2020eip1559,roughgarden2020transaction}).


We emphasize that $\hebShortName$ draws ideas from PoS, prominently the utilization of system tokens for security, but the model assumptions, the solution, and the guarantees are distinct. 
In particular, $\hebShortName$ uses the standard PoW assumptions and miners expend (lose) their tokens for mining, whereas in PoS participants derive their power from maintaining ownership.

		\subsection*{$\hebShortName$ Analysis}
		\label{sec:intro_analysis}

Reasoning about $\hebShortName$ requires a substantial groundwork.
We begin by modeling the cryptocurrency ecosystem, starting with the relation of token price and supply, and following with the underlying data structures, participants, and execution~(\S\ref{sec:model}).
Our model is based on standard economic theories, where a commodity's price is inverse to its circulating supply; to the best of our knowledge, this is the first work to apply this classic modeling to cryptocurrency mining.

As in previous work~\cite{eyal2014majority,sapirshtein2016optimal,arvindcutoff,tsabary2018thegapgame,gervais2016security,tsabary2020mad,winzer2019temporary}, we consider a set of rational miners that optimize their revenues and an adversary who is willing to expend resources in order to attack.



We instantiate $\protocolNameBitcoin$'s protocol~(\S\ref{sec:bitcoin_protocol}) and use it for a comparison baseline, following with the formal presentation of~$\hebShortName$~(\S\ref{sec:protocol_v3}).

To compare and contrast~$\hebShortName$ with~$\protocolNameBitcoin$ we consider a variety of cryptocurrency metrics~(\S\ref{sec:protocol_v3_analysis}).
These include common security metrics, namely coalition resistance and tendency to encourage coalitions~\cite{eyal2014majority,sapirshtein2016optimal,eyal2015dilemma,tsabary2018thegapgame}.
We also introduce a new metric~-- \emph{external expenses}, measuring the resources spent on PoW. 
Instead of the binary metric permissioned/permissionless (classical-consensus-protocols/Nakamoto-blockchain, respectively), we introduce the continuous metric of~\emph{permissiveness}, describing the cost of joining the system. 

Finally, we tease apart the common safety-violation security metric into two.
We observe that safety-violating chain-reorganization attacks~\cite{redman2020bitcoingold51,bitcoin2020etc51,dcimit202051attacks,voell2020etc51} in existing PoW blockchains require high resource investment from the attacker; however, once successful, they completely refund themselves.
We therefore consider this type of attacks, as well as a sabotage variant where the attacker is not refunded.
We show that in~$\hebShortName$ the attack cost for the refunded variant is linear in the total expenses (secure as~$\protocolNameBitcoin$), and that the sabotage variant costs are linear in the external ones.

$\hebShortName$ includes several parameters for the system designer to tune.
As an example we present a specific choice of parameter values~(\S\ref{sec:concrete_instantiation_parameters}).
Choosing the prominent Bitcoin ecosystem as a reference point, we analyze $\hebShortName$ and show this parameter choice is practical, achieves strong security guarantees, and leads to the equivalent of reducing the entire electricity consumption of Denmark~\cite{de2020bitcoin,countryEnergyConsumption2020citypopulation}.

In summary, we expand the PoW design space by introducing internal expenses.
We present~$\hebShortName$~-- a PoW blockchain protocol with external expenses that are lower than its internal rewards.
We prove that~$\hebShortName$ offers similar security guarantees against rational attackers compared to pure PoW solutions, and show it can significantly reduce the latter's ecological impact.

\section{Related Work}
\label{sec:related_work}

In Nakamoto's blockchain and all subsequent PoW protocols we are aware of, the incentives equal the value of the generated cryptocurrency tokens (and fees). 
We are not aware of previous work tuning PoW expenditure in cryptocurrencies~-- the main focus of this work.

We proceed to survey PoS and analysis approaches.
We defer to Appendix~\ref{app:related_pow_alternatives} a discussion of permissioned~\cite{castro1999practical,kotla2007zyzzyva,kapitza2012cheapbft,gueta2018sbft,yin2019hotstuff,baudet2018state,cachin2016architecture} and trusted hardware~\cite{zhang2017rem,chen2017security} solutions that make qualitatively-stronger assumptions; protocols that expend different external resources rather than electricity~\cite{stewart2012proof,p4titan2014slimcoin,DBLP:journals/iacr/KarantiasKZ19,miller2014permacoin,bentov2017tortoise}, for which~$\hebShortName$ applies equally well; and protocols~\cite{king2012peercoin,ali2020pox} with several types of internal tokens that do not achieve incentive compatibility nor reduced external expenses.

\paragraph{Proof of Stake}

$\hebShortName$ and PoS are fundamentally different: the latter limits miner participation to those with stake in the system, i.e., miners who own tokens; the former does not.
Moreover, in PoS the Sybil-deterrence~\cite{douceur2002sybil} is due to the cost of acquiring and holding the system tokens, which the participants maintain throughout the system execution.
In contrast, $\hebShortName$ relies on PoW, and the participants \emph{spend} the internal currency.

PoS systems like Algorand~\cite{kiayias2017ouroboros}, Ouroboros~\cite{gilad2017algorand}, Tezos~\cite{goodman2014tezos} and Ethereum2~\cite{ef2020eth2Pos} are designed and analyzed under different assumptions than PoW.
Their security is measured with respect to the number of owned tokens rather than expended resources. 
They assume a new participant wanting to join the system can acquire (or alternatively, lock as a collateral~\cite{kiayias2017ouroboros,goodman2014tezos,ef2020eth2Pos}) as many system tokens as she can afford.
That is, existing system miners authorize transactions that introduce new system miners, even if these result in a state less favorable from their perspective.
Additionally, to combat \emph{long-range attacks}~\cite{deirmentzoglou2019survey,gavzi2018stake} and \emph{nothing-at-stake}~\cite{poelstra2014distributed,bonneau2015sok}, these systems assume users voluntarily delete deprecated data~\cite{kiayias2017ouroboros}, or assume users remain online for extended periods~\cite{gilad2017algorand,goodman2014tezos,ef2020eth2Pos}.

In contrast,~$\hebShortName$ is PoW-based, and newly-joining miners do not require the cooperation of existing miners to join.
It is also resistant to said long-range attacks and the nothing-at-stake problems, and hence does not rely on voluntarily data deletion or user persistence.

A parallel work~\cite{ganesh2020virtual} draws ideas from~$\hebShortName$, suggesting to emulate PoW over PoS.
The main idea is that the stake used for the consensus degrades over time and usage, mimicking the external expenditures of PoW systems.
However, as built atop of PoS, it also requires the aforementioned assumptions.

\paragraph{Proof of Work Analysis}

We use the standard  techniques~\cite{eyal2014majority,sapirshtein2016optimal,nayak2016stubborn,arvindcutoff,tsabary2018thegapgame,liao2017incentivizing,zhang2019lay,gervais2016security,hou2019squirrl} to analyze~$\hebShortName$'s security and incentive compatibility.
The evaluation metrics used are a formalization of previous ones presented by said work, and also include definition of new ones regarding the external expenditure and permissiveness level.
To the best of our knowledge we are the first to define, evaluate and optimize for such metrics.

Chen et al.~\cite{chen2019axiomaticApproach} define and analyze desired properties of reward allocation schemes in PoW cryptocurrencies. 
Their work focuses on the reward of a single block, and does not consider environmental impact nor malicious miners.
We note that $\hebShortName$'s reward allocation rule is incentive-compatible and Sybil-resistant, satisfying these desired properties.

	\section{Cryptocurrency Model}
	\label{sec:model}

We present a model for an abstract blockchain system, instantiated with a cryptocurrency protocol.
This allows us to consider different instantiations, namely $\protocolNameBitcoin$ and $\hebShortName$.
We first define the monetary value of system tokens using an exogenous reference-point fiat currency~(\S\ref{sec:model_exchange_rate}).
We follow by presenting the blockchain, the participating entities and how the system derives its state~(\S\ref{sec:model_state}).
We then define how a cryptocurrency protocol instantiates that system, defining an internal system currency based on the blockchain~(\S\ref{sec:model_cryptocurrency}), and explain how the system makes progress~(\S\ref{sec:model_execution}).

		\subsection{Cryptocurrency Economics}
		\label{sec:model_exchange_rate}

The external expenditure of a PoW cryptocurrency system depends on the rewards it grants miners and mining costs.
We note that mining rewards are internal while PoW costs are external, hence we first define the relation, or the exchange rate, of the two.

The reward is an amount of the system's internal currency~$\internalCurrency$ (e.g., Bitcoin, Ether), and the external cost is an amount of an external currency~$\externalCurrency$ (e.g., EUR, USD, RMB). 
We assume the external currency has a market capital orders of magnitude larger than that of the internal currency~\cite{fiatmarketcap}, and it effectively represents real values.

According to the prominent macroeconomic theories~\cite{friedman1989quantity,allais1966restatement,lucas1980two,keynes2018general,clarke1988keynesianism,hall1989political,lynn1991scarcity,brock1968implications,brock1992liberalization} there is a linear relation between \emph{supply} and \emph{price levels} of commodities in equilibrium states.
Applying to cryptocurrency tokens, the steady-state real value of a single~$\internalCurrency$ token, measured in~$\externalCurrency$, is inverse to the circulating supply of~$\internalCurrency$.
This comfortably settles with intuition~--~$\internalCurrency$ tokens have value due to their scarcity, and if they were twice less scarce then their real value would have been half.

This relation was previously used to analyze a single-shot minting of tokens in Initial Coin Offerings (ICOs)~\cite{burniske2017cryptoassets,buterin2017price,catalini2016some,catalini2018initial}, but to the best of our knowledge, this work is the first to apply it to the ongoing token minting process.

Throughout this work, when comparing two cryptocurrencies, we assume their market caps are equal.
This implies that if the two cryptocurrencies have the same minting rate then their respective token real values are identical.
That is why naively changing a cryptocurrency minting rate does not affect the external expenditure~(see Appendix~\ref{app:naive_protocols}).

We assume there is an instantaneous and commission-free exchange service of~$\externalCurrency$ and~$\internalCurrency$, where the exchange rate matches token real value.
Unless explicitly mentioned, we assume this exchange is available to all participating entities.

To simplify presentation we normalize the price level so the exchange rate is one.
We often sum~$\externalCurrency$ and~$\internalCurrency$, meaning the sum of their values in real terms.

		\subsection{Blockchain and System Principals}
		\label{sec:model_state}

The system comprises a shared~\emph{global storage}~$\globalStorage$, a~\emph{scheduler}, and two types of~\emph{entities}: system~\emph{users}, and principals maintaining the system named~\emph{miners}.

\paragraph{Global storage}
The global storage is an append-only set containing elements called~\emph{blocks}.
Each block includes a reference to another block and data generated by system entities, with the only exception being a so-called~\emph{genesis block} that contains neither.
The global storage~$\globalStorage$ initially contains only the genesis block, thus defining a directed tree data structure rooted at the genesis block.

We refer to paths in the data structure starting at the genesis block as~\emph{chains}.
We denote for any chain~$\mainChainName{}$ its last block by~$\lastBlockOnMainChain{\mainChainName{}}$ and its length  by~$\lengthMainChain{\mainChainName{}}$.
We divide a chain~$\mainChainName{}$ to~\emph{epochs} of~$\epochLength$ blocks.
For any~$\epochIndex \in \N$, epoch~$\epochIndex$ includes the block series~$\left[\epochLength \epochIndex + 1,\epochLength\left(\epochIndex+1\right)\right]$ in~$\mainChainName{}$.

A function~$\funcLongestChainsNotGlobal{}$ returns the set of the~\emph{longest chains} in a block set~$S$.
A function~$\funcLongestCommonPrefixNotGlobal$ returns the~\emph{longest common prefix} chain of~$\funcLongestChainsNotGlobal{}$.
We refer to~$\funcLongestCommonPrefix$ as the~\emph{main chain}.

\paragraph{Miners and the scheduler}

As common~\cite{eyal2014majority,sapirshtein2016optimal,nayak2016stubborn,arvindcutoff,tsabary2020mad} we assume that the sets of entities are static during an epoch execution, that is, entities do not join or leave during the course of an epoch.
Denote by~$\minerSet\left(\epochIndex\right)$ and~$\userSet\left(\epochIndex\right)$ the miners and users of epoch~$\epochIndex$, respectively.

Each miner~$i$ has a local storage~$\localStorage{i}$ accessible only to her.
Like~$\globalStorage$,~$\localStorage{i}$ is an append-only block set.
The scheduler invokes miners, allowing them to create blocks in their local storage, and to publish their local blocks by copying them to~$\globalStorage$.
We denote by~$\countFactoredBlocksWithChainWithEpoch{i}{\mainChainName{}}$ the number of blocks in epoch~$\epochIndex$ created by miner~$i$ on chain~$\mainChainName{}$.
For presentation simplicity, we assume the main chain at an epoch beginning remains a prefix of the main chain throughout the entire epoch.
Note this does not rule out the main chain changing during an epoch, but only that its initial prefix does not.

\paragraph{State}
Entities derive the \emph{system state} by parsing the global storage~$\globalStorage$ according to the block order of the main chain. 

System entities might choose to infer the state based on a chain prefix, excluding potentially-volatile suffixes~\cite{garay2015backbone,pass2017analysis}, such as in the case of multiple longest chains.
Such considerations are outside the scope of this work.

\subsection{Instantiating a Cryptocurrency Protocol}
\label{sec:model_cryptocurrency}

The system is instantiated with a cryptocurrency protocol~$\protocolGenericName$ that defines a currency internal to the system,~$\internalCurrency$.
The protocol~$\protocolGenericName$ maps all its internal tokens to system entities through a function~$\funcNameBalanceWithFontAndProtocol{\mainChainName{}}{\protocolGenericName}$, taking as input a chain~$\mainChainName{}$.
The~$\funcNameBalanceWithFontAndProtocol{\mainChainName{}}{\protocolGenericName}$ function returns a vector where each element~$\funcNameBalanceWithFontAndProtocolAndIndex{\mainChainName{}}{\protocolGenericName}{i}$ is the number of tokens mapped to entity~$i$.
When the context is clear, we often omit the protocol name~$\protocolGenericName$ and simply write~$\funcNameBalanceWithFontAndProtocol{\mainChainName{}}{}$.

We say the real value of tokens mapped to an entity is her~$\internalCurrency$~\emph{balance}, and note the total number of tokens is the sum of all balances.
The protocol mints~$\rewardPerBlockWithEpoch \cdot \epochLength$ new tokens at the end of each epoch~$\epochIndex$, and we often omit the epoch index when it is clear from context.
This means the number of tokens is fixed throughout any epoch~$\epochIndex$, and increases when epoch~$\epochIndex + 1$ begins.

Formally, let~$\tokenCount_{j}$ be the total number of tokens on chain~$\mainChainName{}$ when~$\lengthMainChain{\mainChainName{}} = j$.
So, for any~$j \in \left[\epochLength \epochIndex, \epochLength \left(\epochIndex+1\right)-1\right]$ it holds that~$\tokenCount_{j } = \tokenCount_{\epochLength \epochIndex}$, and that~$\tokenCount_{\epochLength \left(\epochIndex + 1\right) } - \tokenCount_{\epochLength \epochIndex} = \rewardPerBlockWithEpoch \cdot \epochLength$.
The protocol~$\protocolGenericName$ maps the newly-minted tokens to entities using~$\funcNameBalanceWithFont{\mainChainName{}}$.

\begin{myNote}
	To avoid temporal inconsistencies, protocols often make newly-minted tokens available only after sufficiently many other blocks are created.
	For example, Bitcoin makes tokens minted in block~$j$ available only after block~$j+100$.
	We neglect such mechanisms for simplicity, but they demonstrate that delayed availability of minted tokens is both acceptable and practical.
\end{myNote}

\paragraph{Miner economics}

Aside from their~$\internalCurrency$ balances, miners also own~$\externalCurrency$.
We use the terms~\emph{internal} and~\emph{external balances} to distinguish the different currency holdings, and simply~\emph{balance} to describe their aggregate value.

Miners expend all their balance on system maintenance. 
In practice, a principal can split its balance, using some of it as a miner, and the rest as a user.
We model such principals as two separate entities~-- a miner that spends all its balance on maintenance, and a user that holds the rest.

For any epoch~$\epochIndex$ we denote by~$\balanceSubSuper{i}{\externalCurrency}\left(\epochIndex\right)$ and~$\balanceSubSuper{i}{\internalCurrency}\left(\epochIndex\right)$ the initial external and internal balances of each miner~$i \in \minerSet\left(\epochIndex\right)$, respectively.
We denote by~$\balanceSub{i}\left(\epochIndex\right) \triangleq \balanceSubSuper{i}{\externalCurrency}\left(\epochIndex\right) + \balanceSubSuper{i}{\internalCurrency}\left(\epochIndex\right)$ the initial balance of miner~$i$.

We denote the accumulated internal and external miner balances by~$\balanceOfMinersInternal\left(\epochIndex\right) \triangleq \sum_{j \in \minerSet}{\balanceSubSuper{j}{\internalCurrency}\left(\epochIndex\right)}$ and~$\balanceOfMinersExternal\left(\epochIndex\right) \triangleq \sum_{j \in \minerSet}{\balanceSubSuper{j}{\externalCurrency}\left(\epochIndex\right)}$, respectively.
We also denote the total balances of all miners by~$\balanceOfMinersAll\left(\epochIndex\right)$ and of all users by~$\balanceOfOthers\left(\epochIndex\right)$.
We denote the external and internal relative balances of miner~$i$ by~$\balanceNormalizedSubSuper{i}{\externalCurrency}\left(\epochIndex\right)\triangleq\tfrac{\balanceSubSuper{i}{\externalCurrency}\left(\epochIndex\right)}{\balanceOfMinersExternal\left(\epochIndex\right)}$ and~$\balanceNormalizedSubSuper{i}{\internalCurrency}\left(\epochIndex\right)\triangleq\tfrac{\balanceSubSuper{i}{\internalCurrency}\left(\epochIndex\right)}{\balanceOfMinersInternal\left(\epochIndex\right)}$, respectively, and her relative balance by~$\balanceNormalizedSub{i}\left(\epochIndex\right)\triangleq\tfrac{\balanceSub{i}\left(\epochIndex\right)}{\balanceOfMinersAll\left(\epochIndex\right)}$. 

We assume the value of expended resources by the miners on system maintenance in a single epoch~$\epochIndex$ is much smaller than the system market cap.
That is, the balance of all miners is negligible compared to that of all users, i.e.,~$\balanceOfMinersAll\left(\epochIndex\right) \ll \balanceOfOthers\left(\epochIndex\right)$.
This holds both in Bitcoin and in Ethereum where $\tfrac{\balanceOfMinersAll\left(\epochIndex\right)}{\balanceOfOthers\left(\epochIndex\right)}$ is approximately~$3.4\cdot10^{-7}$~\cite{coinmarketcap2020circulationBTC} and~$8.95\cdot10^{-9}$~\cite{coinmarketcap2020circulationETH}, respectively.

\subsection{Execution}     
\label{sec:model_execution}

Initially, the global storage~$\globalStorage$ contains only the genesis block, and each miner~$i$ has an empty local storage~$\localStorage{i} = \emptyset$.
The state variables (like the global and local storage) change over time, but we omit indexing as it is clear from the context.

The system progresses is orchestrated by the scheduler~(Alg.~\ref{alg:scheduler}).

Epoch~$\epochIndex$ begins when~$\lengthMainChain{\funcLongestCommonPrefix{}} =  \epochLength \epochIndex$.
First,~\emph{nature} sets all miner balances~(lines~\ref{alg:scheduler_nature_start}--\ref{alg:scheduler_nature_end}), modeling miners joining and leaving at epoch transitions.

The scheduler then lets miners set their internal and external balances using the exchange service, achieving their preferred balance of the two.
We use the term~\emph{allocate} to describe this action, and say miner~$i$ allocates her balance~$\balanceSub{i}\left(\epochIndex\right)$ with the invocation of the $\funcAllocateBudgt{i}$ function, returning a tuple of her internal and external balances~$\left\langle\balanceSubSuper{i}{\internalCurrency}\left(\epochIndex\right),\balanceSubSuper{i}{\externalCurrency}\left(\epochIndex\right)\right\rangle$. 
So, for each miner~$i$ the scheduler invokes $\funcAllocateBudgt{i}$~(lines~\ref{alg:scheduler_allocate_start}--\ref{alg:scheduler_allocate_end}).

Note that modeling changes in the miner set and balance allocations at epoch transitions is for presentation simplicity; these occur throughout the system execution.


The rest of the epoch execution progresses in steps, until the main chain~$\funcLongestCommonPrefix{}$ is extended by~$\epochLength$ blocks (lines~\ref{alg:scheduler_extention_step_start}--\ref{alg:scheduler_extention_step_end}).
Each step begins with the scheduler selecting a single miner at random, proportionally to her relative~\emph{external expenditure}, that is~$\forall i \in \minerSet : \Pr \left( \text{scheduler selects } i \right) = \balanceNormalizedSubSuper{i}{\externalCurrency}\left(\epochIndex\right)$~(line~\ref{alg:scheduler_picks_miner}).
Similarly to previous work~\cite{eyal2014majority,arnosti2018bitcoin,sapirshtein2016optimal,nayak2016stubborn}, these steps represent a standard PoW mechanism and its logical state changes, and entities have synchronous access to the global storage.

The scheduler invokes the selected miner~$i$'s function $\funcGenerateBlock{i}{\protocolGenericName}$, returning a newly generated block, and adds it to miner~$i$'s local storage~$\localStorage{i}$~(line~\ref{alg:scheduler_miner_creates_block}).
The protocol~$\protocolGenericName$ states block validity rules in~$\funcNameBalanceWithFontAndProtocol{\mainChainName{}}{\protocolGenericName}$, and invalid blocks do not affect the system state.
Creating an invalid block or not creating one at all is sub-optimal and we only consider miners who avoid doing so.

Next, the scheduler lets each miner~$i$ publish her blocks by invoking $\funcPublish{i}$, returning a subset of her previously-private local blocks.
The scheduler adds the returned blocks to the global storage, and repeats this process until all miners do not wish to publish any more blocks (lines~\ref{alg:scheduler_publication_start}--\ref{alg:scheduler_publication_end}).
The publication loop is used to capture strategic-block-release behaviors~\cite{eyal2014majority,sapirshtein2016optimal,nayak2016stubborn}.

The cryptocurrency protocol~$\protocolGenericName$ includes implementations of~$\funcAllocateBudgt{i}$, $\funcGenerateBlock{i}{\protocolGenericName}$, and $\funcPublish{i}$ that each miner~$i$ should follow.
We refer to the tuple of three implementations as the~\emph{prescribed strategy} and denote it by~$\strategyNameProtocol{\defaultStrategyName}{\protocolGenericName}$.
The protocol~$\protocolGenericName$ is therefore a tuple of the balance function~$\textit{\funcNameBalance}^{\protocolGenericName}$ and a prescribed strategy~$\strategyNameProtocol{\defaultStrategyName}{\protocolGenericName}$.
Note that~$\protocolGenericName$ cannot force miners to follow~$\strategyNameProtocol{\defaultStrategyName}{\protocolGenericName}$.

\begin{algorithm}[t]
	\SetNoFillComment
	\SetAlgoNoLine 
	\SetAlgoNoEnd 
	\DontPrintSemicolon 
	\caption{Scheduler in epoch~$\epochIndex$} 
	\label{alg:scheduler} 
	\tcc{Initial storage state}
	\Input{$\globalStorage$ such that~$\lengthMainChain{\funcLongestCommonPrefix{}} =  \epochLength \epochIndex$}\label{alg:scheduler_initial_storage}
	\BlankLine

	
	\tcc{Setting miner balances}
	\For{$i \in \minerSet\left(\epochIndex\right)$}{
		\label{alg:scheduler_nature_start}
		$\balanceSubSuper{i}{}\left(\epochIndex\right) \gets v \in \mathbb{R}_{\ge 0} \text{, chosen by \textit{nature}}$ such that~$\balanceOfMinersAll\left(\epochIndex\right) \ll \balanceOfOthers\left(\epochIndex\right)$\; 
	}\label{alg:scheduler_nature_end}
	\BlankLine

	\tcc{Miner balance allocation}
	\For{$i \in \minerSet\left(\epochIndex\right)$}{
		\label{alg:scheduler_allocate_start}
		$\left\langle\balanceSubSuper{i}{\internalCurrency}\left(\epochIndex\right),\balanceSubSuper{i}{\externalCurrency}\left(\epochIndex\right)\right\rangle \gets \funcAllocateBudgt{i}$\; 
	}\label{alg:scheduler_allocate_end}
	\BlankLine
	
	\tcc{Main chain extention}	
	\While{~$\lengthMainChain{\funcLongestCommonPrefix{}} < \epochLength\left(\epochIndex+1\right)$}{ 		\label{alg:scheduler_extention_step_start}
		\tcp{Block generation}
		$i \gets$ miner index chosen at random,~$\forall j \in \minerSet\left(\epochIndex\right) : \Pr\left(i=j\right) = \balanceNormalizedSubSuper{j}{\externalCurrency}\left(\epochIndex\right)$\;\label{alg:scheduler_picks_miner}
		$\localStorage{i} \gets \localStorage{i} \cup \left\{\funcGenerateBlock{i}{\protocolGenericName}\right\} 		$\;\label{alg:scheduler_miner_creates_block}
		\BlankLine
		\tcp{Block publication}

		$\text{blocks\_to\_publish} \gets \emptyset$\;\label{alg:scheduler_publication_start}
		\Do{$\text{blocks\_to\_publish} \ne \emptyset$}{
			\label{alg:scheduler_publication_start_loop}
			$\globalStorage \gets \globalStorage \cup \text{blocks\_to\_publish}$\; 
			$\text{blocks\_to\_publish} \gets \emptyset$\;
			\For{$i \in \minerSet\left(\epochIndex\right)$}{
				$\text{blocks\_to\_publish} \gets \text{blocks\_to\_publish} \cup \funcPublish{i}$\;
			}
		}
		\label{alg:scheduler_publication_end}
	}\label{alg:scheduler_extention_step_end}

\end{algorithm}

\begin{figure}[!t]
	
	\makeatletter
	\newcommand{\removelatexerror}{\let\@latex@error\@gobble}
	\makeatother
	
	\removelatexerror

	\begin{minipage}[t]{.23\textwidth}
		\vspace*{-\baselineskip}
		\begin{algorithm}[H]
			\SetNoFillComment
			\SetAlgoNoLine 
			\SetAlgoNoEnd 
			\DontPrintSemicolon 
			\caption{$\strategyNameProtocol{i,\defaultStrategyName}{\protocolNameBitcoin}$} 
			\label{alg:bitcoin_protocol_strategy}

			\Fn{\FAllocateBudgt{\funcArgsAllocateBudgt{i}}}{
				\Return~$\left\langle 0,\balanceSub{i} \right\rangle$\;
			}
			\Fn{\FGenerateBlock{\funcArgsGenerateBlock{i}}}{
				$\mainChainName{} \gets \text{ uniformly from } \funcLongestChains{}$
				
				$\text{pointer} \gets \lastBlockOnMainChain{\mainChainName{}}$\;
				\Return NewBlock(pointer)\;
			}
			
			\Fn{\FPublish{\funcArgsPublish{i}}}{
				\Return All unpublished blocks\;
			}
		\BlankLine
		\BlankLine
		\BlankLine		
		\BlankLine			
		\BlankLine		
		\BlankLine	
		\BlankLine			
		\BlankLine		
		\BlankLine		
		\BlankLine	
		\BlankLine			
		\BlankLine		
		\BlankLine		
		\BlankLine	
		\BlankLine		

		\end{algorithm}
	\end{minipage}%
	\hfill
	\begin{minipage}[t]{.23\textwidth}
		\vspace*{-\baselineskip}
		\begin{algorithm}[H]
			\SetNoFillComment
			\SetAlgoNoLine 
			\SetAlgoNoEnd 
			\DontPrintSemicolon 
			\caption{$\strategyNameProtocol{i,\defaultStrategyName}{\hebShortName}$} 
			\label{alg:v3_protocol_strategy} 
			
			\SetKwIF{If}{ElseIf}{Else}{if}{}{else if}{else}{end if}%
			
			\Fn{\FAllocateBudgt{\funcArgsAllocateBudgt{i}}}{
				\Return~$\left\langle \posRatio \balanceSub{i}, \oneMinusPosRatioBrackets \balanceSub{i} \right\rangle~$\;
			}
			\Fn{\FGenerateBlock{\funcArgsGenerateBlock{i}}}{
				$\mainChainName{} \gets \text{ uniformly from } \funcLongestChains{}$\;
				$\text{pointer} \gets \lastBlockOnMainChain{\mainChainName{}}$\;
				\If{
					$\posRatio=0$ \KwOr $\countFactoredBlocksWithChain{i}{\mainChainName{}} < \floor*{\tfrac{\balanceSubSuper{i}{\internalCurrency}}{\posRatio \cdot \rewardPerBlock}}$}
				{~$\text{type} \gets \textit{factored}$}
				\Else{~$\text{type} \gets \textit{regular}$}
				\Return NewBlock(pointer, type)\;
			}
			\Fn{\FPublish{\funcArgsPublish{i}}}{
				\Return All unpublished blocks\;
			}
		\end{algorithm}
	\end{minipage}

	\vspace{-1.5\baselineskip}

\end{figure}

\section{Nakamoto Protocol}
\label{sec:bitcoin_protocol}

As an example and to serve as a baseline, we instantiate an epoch-based~$\protocolNameBitcoin$ protocol (used with~$\epochLength=1$ in Bitcoin~\cite{nakamoto2008bitcoin}, Bitcoin Cash~\cite{bch2018site}, Litecoin~\cite{litecoin2013site}, etc.) in our model.

The balance function of~$\protocolNameBitcoin$ awards each miner~$i$ with~$\rewardPerBlock$ tokens per block she created in the epoch, and a total of~$\epochLength \cdot \rewardPerBlock$ new tokens are minted.
Hence, the balance of each miner~$i$ at epoch conclusion is~$\funcBalanceWithProtocol{i}{\protocolNameBitcoin} = \countFactoredBlocksWithChain{i}{\funcLongestCommonPrefix{}} \rewardPerBlock$.

The prescribed strategy~$\strategyNameProtocol{\defaultStrategyName}{\protocolNameBitcoin}$~(Alg.~\ref{alg:bitcoin_protocol_strategy}) states that each miner~$i$ allocates her balance~$\balanceSubSuper{i}{\externalCurrency} = \balanceSub{i}$ and~$\balanceSubSuper{i}{\internalCurrency} = 0$, extends the longest chain, and publishes her blocks immediately.
In case of multiple longest chains,~$\strategyNameProtocol{\defaultStrategyName}{\protocolNameBitcoin}$ picks uniformly-at-random\footnote{Bitcoin defines a different tie-breaking rule~--- pick the first longest chain the miner became aware of.
Therefore its security guarantees vary, depending on the underlying network assumptions.
As in previous work~\cite{eyal2016bitcoin,kogias2016byzcoin,sapirshtein2016optimal}, we avoid such assumptions by considering the uniformly-at-random variation.}.

\section{$\hebShortName$ Protocol}
\label{sec:protocol_v3} 

We are now ready to present~$\hebShortName$.
Briefly, it incentivizes miners to expend their balances internally by enabling miners who do so to create higher-rewarding blocks.
Two parameters,~$\posRatio \in \left[0,1\right)$ and~$\factor \in \mathbb{R}_{>1}$, control the reward distribution mechanism.
We detail the different block types, the reward distribution mechanism, and the prescribed strategy.

\paragraph{Block types} 

Each block has a type, determined at its creation~-- either \emph{regular} or \emph{factored}.
During the epoch miner~$i$ can create regular blocks whenever the scheduler invokes~$\funcNameGenerateBlockWithFont$.
In contrast, creating a factored block on~$\mainChainName{}$ requires an expenditure of~$\posRatio \cdot \rewardPerBlock$ in~$\internalCurrency$ by miner~$i$ at the previous epoch;
recall we model such expenditure as if it occurs at the start of current epoch.

Consequently, if~$\posRatio>0$ then miner~$i$ can create at most~$\floor*{\tfrac{\balanceSubSuper{i}{\internalCurrency}}{\posRatio\cdot \rewardPerBlock}}$ factored blocks in an epoch on chain~$\mainChainName{}$.
$\hebShortName$ assigns a~\emph{weight} to each block according to its type, and factored and regular blocks have weights of~$\factor$ and~$1$, respectively.

\paragraph{Reward distribution} 

$\hebShortName$ distributes the~$\epochLength \rewardPerBlock$ minted tokens among the miners in proportional to their contributed block weights.
Denote by~$\weightWithChain{i}{\mainChainName{}}$ the total weight of blocks created in the epoch by miner~$i$ on chain~$\mainChainName{}$.
So, miner~$i$ gets~$\tfrac{\weightWithChain{i}{\mainChainName{}}}{\sum_{j \in \minerSet}{\weightWithChain{j}{\mainChainName{}}}} \epochLength \rewardPerBlock$ tokens for her created blocks.

$\hebShortName$ distributes the internal expenses~$\balanceOfMinersInternal$ among all system entities (i.e., including users) proportionally to their~$\internalCurrency$ balances at the epoch beginning.
So, miner~$i$ receives~$\tfrac{\balanceSubSuper{i}{\internalCurrency}}{\balanceOfMinersInternal + \balanceOfOthers} \balanceOfMinersInternal$ tokens from the redistribution.

In summary, the balance of miner~$i$ with the epoch conclusion is~$\funcBalanceWithProtocol{i}{\hebShortName} = \tfrac{\weightWithChain{i}{\funcLongestCommonPrefix{}}}{\sum_{j \in \minerSet}{\weightWithChain{j}{\funcLongestCommonPrefix{}}}} \epochLength \rewardPerBlock + \tfrac{\balanceSubSuper{i}{\internalCurrency}}{\balanceOfMinersInternal + \balanceOfOthers} \balanceOfMinersInternal$.

\begin{myNote}
	Token redistribution is linear in the relative token balance in order to maintain the relative purchasing power~\cite{core2017economy} of each entity.
	This is in line with previous work~\cite{nakamoto2008bitcoin,chen2019axiomaticApproach}, and the natural approach in a permissionless and decentralized setting: super-linear allocation incentivizes centralization; sub-linear results with the rich splitting as several entities; both are undesired~\cite{chen2019axiomaticApproach}.
\end{myNote}

\paragraph{Prescribed strategy}
The prescribed strategy~$\strategyNameProtocol{\defaultStrategyName}{\hebShortName}$~(Alg.~\ref{alg:v3_protocol_strategy}) states that miners allocate their balance with ratio~$\posRatio$, and create factored blocks up to their internal balance limitation.
Formally, miner~$i$ allocates~$\balanceSubSuper{i}{\internalCurrency} = \posRatio\balanceSub{i}$ and~$\balanceSubSuper{i}{\externalCurrency} = \oneMinusPosRatioBrackets\balanceSub{i}$.
If~$\posRatio=0$ then the miner creates all blocks as factored, and if~$\posRatio>0$ then only the first~$\floor*{\tfrac{\balanceSubSuper{i}{\internalCurrency}}{\posRatio\cdot \rewardPerBlock }}$ ones.
As in $\protocolNameBitcoin$, miner~$i$ points her created blocks to a uniformly-at-random selected chain from~$\funcLongestChains{}$, and publishes them immediately.

\begin{myNote}
	Setting~$\posRatio = 0$ enables miners to create all blocks as factored, and setting~$\factor = 1$ removes motivation to create any factored blocks at all.
	In both cases there is only one practical block type, reducing $\hebShortName$ to $\protocolNameBitcoin$.
\end{myNote}

Appendix~\ref{app:practical_considerations} brings practical implementation aspects of~$\hebShortName$ -- shortening epochs, utilizing a pure PoW ramp-up period to create a sufficiently-large currency circulation, and addressing discretization issues.

	\section{Evaluation}
	\label{sec:protocol_v3_analysis} 

We now evaluate~$\hebShortName$, showing how parameter choices affect its properties.
For that, we formalize the cryptocurrency system as a game played among system entities, striving to maximize their rewards~(\S\ref{sec:model_block_creation_as_a_game}).
We use~$\protocolNameBitcoin$ as a baseline, highlighting parameter choices that result with significantly lower PoW expenditure while limiting undesirable side-effects.

To realize such a comparison we first need to define its criteria.
Hence, throughout the rest of this section we present cryptocurrency evaluation metrics, each followed by its evaluations of~$\protocolNameBitcoin$ and~$\hebShortName$.

We consider previous security metrics~\cite{eyal2014majority,sapirshtein2016optimal,nayak2016stubborn,arvindcutoff,liao2017incentivizing,gervais2016security,mirkin2020bdos,chen2019axiomaticApproach,rosenfeld2011analysis,pass2017analysis,garay2015backbone} regarding the~\emph{incentive compatibility} of a system~(\S\ref{sec:protocolPropertyDecentralization} and~\S\ref{sec:protocolPropertyAdversarial}); refine the common safety-violation security metric~\cite{rosenfeld2014hashrateAnalysis,karame2012fast,bonneau2016buy,mccorry2018smart,judmayer2019pay}, measuring attack~\emph{costs}~(\S\ref{sec:protocolPropertyDoubleSpendAttack}); generalize the binary permissioned/permissionless notion~\cite{nakamoto2008bitcoin,baudet2018state} to a continuous metric~(\S\ref{sec:protocolPropertyPermissiveness}); and conclude with a new natural metric for external expenses~(\S\ref{sec:protocolPropertyPow}).

\subsection{Block Creation as a Game}
\label{sec:model_block_creation_as_a_game}

The model gives rise to a game, played for the duration of a single epoch~$\epochIndex$.
The players are the miners, each with her epoch balance as an input.

We define the utility of miner~$i$ for an epoch~$\epochIndex$ as her expected cryptocurrency holdings at the conclusion of the epoch, i.e.,~$\utilityfuncNameWithProtocol{i}{}\left(\epochIndex\right)\triangleq\E\left[\funcBalanceWithProtocol{i}{}\right]$ when~$\lengthMainChain{\funcLongestCommonPrefix} = \epochLength\left(k+1\right)$.

As commonly done in the analysis of cryptocurrency protocols~\cite{eyal2014majority,arvindcutoff,huberman2017monopoly,tsabary2020mad,sapirshtein2016optimal}, we assume that during an epoch the system is quasi-static, where all miners participate and the total profit is constant.
In operational systems miners participate for a positive profit~\cite{mirkin2020bdos,kristoufek2020bitcoin,yaish2020pricing}, but discussing the required return-on-investment ratio for such behavior is out the scope of this work, and we arbitrarily assume it to be~$0$~\cite{fiat2019energy,goren2019mindTheMining,tsabary2018thegapgame}.
Accordingly, the sum of all miner utilities equals the overall miner balances, that is,
\begin{equation}
\label{eq:model_total_utilities_equals_expenses}
\balanceOfMinersAll\left(\epochIndex\right) = \sum_{i \in \minerSet\left(\epochIndex\right)}{\utilityfuncName{i}\left(\epochIndex\right)}\,\,.
\end{equation}

We normalize the number of newly-minted tokens in epoch~$\epochIndex$ per block to be one, meaning~$\rewardPerBlockWithEpoch = 1$, and a total of~$\epochLength$ tokens are created in the epoch.

The mining strategy space comprises choosing the balance allocation ratio, what blocks to generate, and when to publish them, i.e., implementations of $\funcAllocateBudgt{}$, $\funcGenerateBlock{}{\protocolGenericName}$ and $\funcPublish{}$.

\paragraph*{Example: Nakamoto}
We demonstrate the compatibility of our definitions and modeling with previous results~\cite{narayanan2016bitcoin} regarding~$\protocolNameBitcoin$.

We consider a scenario where all miners follow~$\strategyNameProtocol{\defaultStrategyName}{\protocolNameBitcoin}$.
So, all miner balances are in~$\externalCurrency$ and consequently~$\balanceNormalizedSubSuper{i}{}=\balanceNormalizedSubSuper{i}{\externalCurrency}$.
Additionally, all miners extend the longest chain, so there is only a single one~$\left(\left|\funcLongestChains{}\right| = 1\right)$, which we denote by~$\mainChainName{}$, and it follows that~$\mainChainName{} = \funcLongestCommonPrefix$.

We note the scheduler picks at each step a miner proportionally to her relative external balance~(Alg.~\ref{alg:scheduler}, line~\ref{alg:scheduler_picks_miner}).
We can consider each pick as a Bernoulli trial where miner~$i$ is picked with success probability of~$\balanceNormalizedSub{i}$.
So, the number of blocks a miner~$i$ creates in an epoch is binomially distributed~$\countFactoredBlocksWithChain{i}{\mainChainName{}} \sim \operatorname{Bin}\left({\epochLength, \balanceNormalizedSub{i}}\right)$.

Therefore,~$\E\left[\countFactoredBlocksWithChain{i}{\mainChainName{}}\right] = \balanceNormalizedSubSuper{i}{\externalCurrency} \epochLength$, and the utility of miner~$i$ is $\utilityfuncNameWithProtocol{i}{\protocolNameBitcoin} = \balanceNormalizedSubSuper{i}{\externalCurrency} \epochLength$, matching previous analysis~\cite{nakamoto2008bitcoin}.
Summing for all miners and applying the balance-income equation (Eq.~\ref{eq:model_total_utilities_equals_expenses}) yields~$\balanceOfMinersAll = \balanceOfMinersExternal = \epochLength$, and the expected cost to create each block is~$1$, matching its reward.

		\subsection{$\protocolPropertyDecentralization$}
		\label{sec:protocolPropertyDecentralization}

Cryptocurrency security relies on having multiple, independent miners, none of which has control over the system~\cite{chen2019axiomaticApproach,rosenfeld2011analysis,pass2017fruitchains,pass2017analysis,garay2015backbone}.
For that, these systems strive to distribute their rewards in a way that is~\emph{size-indifferent}~\cite{chen2019axiomaticApproach}, meaning that miners get relative reward matching their relative balances, and hence have no incentive to coalesce.
The metric~$\protocolPropertyDecentralization$ measures how well a protocol satisfies this desideratum when all miners follow the prescribed strategy. 
Unlike the other metrics, it is evaluated for a specific balance distribution (i.e., a specific allocation).

Formally, assume a balance distribution and that each miner~$i$ with relative balance~$\balanceNormalizedSubSuper{i}{}$ follows~$\strategyNameProtocol{\defaultStrategyName}{\protocolGenericName}$.
The utility of such miner is~$\utilityfuncNameWithProtocol{i}{\protocolGenericName}$, and her relative utility is~$\tfrac{\utilityfuncNameWithProtocol{i}{\protocolGenericName}}{\sum_{j \in \minerSet}{\utilityfuncNameWithProtocol{j}{\protocolGenericName}}}$.
We define~$\protocolPropertyDecentralization$ to be the maximal difference of each miner's relative balance and relative utility, that is,~$\protocolPropertyDecentralization\triangleq\max\limits_{i \in \minerSet}{\left|\balanceNormalizedSubSuper{i}{} - \tfrac{\utilityfuncNameWithProtocol{i}{\protocolGenericName}}{\sum_{j \in \minerSet}{\utilityfuncNameWithProtocol{j}{\protocolGenericName}}}\right|}$.

Systems strive for~$\protocolPropertyDecentralization$ to be minimal, as higher values indicate more disproportionate shares.
Preferably,~$\protocolPropertyDecentralization = 0$, indicating all miners get reward proportionally to their balance.

In practice, there is an inherent advantage for having a larger relative balance.
For example, a larger miner is competing against a smaller portion of the network, reducing the chance of its blocks being forked of the main chain~\cite{neudecker2019short,gencer2018decentralization,shahsavari2019theoretical,lewenberg2015cooperative,miller2015nonoutsourceable}.
Additionally, economy-of-scale optimizations allow miners with higher balances to operate more efficiently~\cite{arnosti2018bitcoin}.
These considerations are outside the scope of this work, and current literature does not provide a specific number for comparison.
So, although $\protocolPropertyDecentralization = 0$ is a theoretical desideratum, systems like Bitcoin successfully operate even with non-zero values.

		\subsubsection*{\textbf{Nakamoto}}

Recall that in~$\protocolNameBitcoin$ the utility of each miner~$i$ is~$\utilityfuncNameWithProtocol{i}{\protocolNameBitcoin} =\balanceNormalizedSubSuper{i}{\externalCurrency} \epochLength$, meaning~$\tfrac{\utilityfuncNameWithProtocol{i}{\protocolNameBitcoin{}}}{\sum_{j \in \minerSet}{\utilityfuncNameWithProtocol{j}{\protocolNameBitcoin{}}}} = \balanceNormalizedSubSuper{i}{}$ and in our model~$\protocolPropertyDecentralization = 0$.
This matches previous analysis~\cite{nakamoto2008bitcoin}.

		\subsubsection*{\textbf{HEB}}

We now analyze~$\hebShortName$'s~$\protocolPropertyDecentralization$.
We begin with a preliminary analysis of the prescribed strategy~(\S\ref{sec:protocol_v3_prescribed_strategy_analysis}), deriving the expected miner utilities.

We follow with formalizing and proving two lemmas~(\S\ref{app:v3_decentralization_epoch_factor}): the first showing a required and sufficient condition to achieve~$\protocolPropertyDecentralization = 0$; the second showing this condition is met for sufficiently-large epoch lengths (i.e., larger~$\epochLength$).

We conclude with concrete number instantiations~(\S\ref{app:v3_decentralization_balance}), showing that $\protocolPropertyDecentralization$ improves (decreases) with longer epochs (larger~$\epochLength$) and a smaller factor (smaller~$\factor$) value, while being independent of~$\posRatio$.
We also show more balanced distributions have lower~$\protocolPropertyDecentralization$, but note these are not under the control of the system designer.
Considering practical parameter choices, we show that even for an extreme balance distribution, $\hebShortName$ achieves $\protocolPropertyDecentralization < 0.3\%$.
In a similar, yet balanced scenario,~$\protocolPropertyDecentralization= 0$.

			\subsubsection{\textbf{Prescribed Strategy Analysis}}
			\label{sec:protocol_v3_prescribed_strategy_analysis}

We show that if all miners follow the prescribed strategy then
\begin{equation}
\label{eq:v3_balance_miners}
\balanceOfMinersAll = \epochLength\,\,
\end{equation}
and utility of each miner $i$ is
\begin{equation}
\label{eq:v3_expected_income}
\utilityfuncNameWithProtocol{i}{\hebShortName} = \tfrac{\E\left[\weightWithChain{i}{\mainChainName{}}\right]}{\sum_{j \in \minerSet}{\E\left[\weightWithChain{j}{\mainChainName{}}\right]}} \epochLength\,\,.
\end{equation}

This analysis requires the following steps.
First we show the redistributed internal currency a miner receives is negligible, allowing us to focus on the minting reward.
For that we analyze the number of blocks a miner creates.
Then, we derive her conditional total block weight, that is, her total block weight conditioned on the number of blocks she creates.
We proceed to derive her expected total block weight, and conclude with finding her utility.
Note that as all miners follow~$\strategyNameProtocol{\defaultStrategyName}{\hebShortName}$ there is a single longest chain~$\mainChainName{}$.

\paragraph{Negligible internal distribution rewards}
According to~$\strategyNameProtocol{\defaultStrategyName}{\hebShortName}$ each miner~$i$ allocates her balance such that~$\balanceSubSuper{i}{\internalCurrency} = \posRatio \balanceSub{i}$.
Summing the reward all miners receive and substituting total rewards with total balance (Eq.~\ref{eq:model_total_utilities_equals_expenses}) yields~$\balanceOfMinersAll = \epochLength + \tfrac{\left(\posRatio{} \balanceOfMinersAll\right)^2}{\posRatio \balanceOfMinersAll + \balanceOfOthers}$, and as such~$\balanceOfMinersAll \left( 1 - \tfrac{\posRatio{}^2 \balanceOfMinersAll}{\posRatio \balanceOfMinersAll + \balanceOfOthers}\right) = \epochLength$.
Recall that~$\posRatio < 1$ and~$\balanceOfMinersAll \ll \balanceOfOthers$, therefore~$\posRatio{}^2 \balanceOfMinersAll \ll \balanceOfOthers$ (cf. \S\ref{sec:model_exchange_rate}) and~$\tfrac{\posRatio{}^2 \balanceOfMinersAll}{\posRatio \balanceOfMinersAll + \balanceOfOthers}$ is negligible.
We get that~$\balanceOfMinersAll = \epochLength$~(Eq.~\ref{eq:v3_balance_miners}).

\paragraph{Number of blocks}
Each miner~$i$ allocates her balance such that~$\balanceSubSuper{i}{\externalCurrency} = \oneMinusPosRatioBrackets \balanceSub{i}$ and therefore the~$\externalCurrency$ ratio of miner~$i$ equals her balance ratio:~$\balanceNormalizedSubSuper{i}{\externalCurrency} = \tfrac{\balanceSubSuper{i}{\externalCurrency}}{\balanceOfMinersExternal} = \tfrac{\oneMinusPosRatioBrackets \balanceSub{i}}{ \oneMinusPosRatioBrackets \balanceOfMinersAll } = \balanceNormalizedSub{i}$.

Recall that the scheduler selects miner~$i$ to generate a block by her relative external expenses~$\balanceNormalizedSubSuper{i}{\externalCurrency}$.
Epochs are of length~$\epochLength$ and the number of blocks miner~$i$ creates in an epoch,~$\countFactoredBlocksWithChain{i}{\mainChainName{}}$, is binomially distributed~$\countFactoredBlocksWithChain{i}{\mainChainName{}} \sim \operatorname{Bin}\left({\epochLength, \balanceNormalizedSub{i}}\right)$.
Therefore, the probability that miner~$i$ creates exactly~$\countableVariable$ blocks in an epoch is~$\Pr\left(\countFactoredBlocksWithChain{i}{\mainChainName{}} = \countableVariable \right) = \binom{\epochLength}{\countableVariable} \cdot {\left(\balanceNormalizedSub{i}\right)}^\countableVariable \cdot {\left(1-\balanceNormalizedSub{i}\right)}^{\epochLength-\countableVariable}$, and her expected number of blocks is~$\E\left[\countFactoredBlocksWithChain{i}{\mainChainName{}}\right]=\epochLength \balanceNormalizedSub{i}$.

By~$\strategyNameProtocol{\defaultStrategyName}{\hebShortName}$ each miner~$i$ allocates her balance such that~$\balanceSubSuper{i}{\internalCurrency} = \posRatio \balanceSubSuper{i}{}$.
The required internal balance per factored block is~$\posRatio$, so miner~$i$ can create at most~$\floor*{\balanceSubSuper{i}{}}$ factored blocks.

Recall~$\balanceSubSuper{i}{} = \balanceNormalizedSubSuper{i}{} \cdot \balanceOfMinersAll$ and~$\balanceOfMinersAll = \epochLength$, and we get the maximal number of factored blocks miner~$i$ can create in an epoch is~$\floor*{\epochLength \balanceNormalizedSubSuper{i}{}}$.

For simplicity we assume that all miners have balances such that~$\epochLength \balanceNormalizedSub{i} \in \mathbb{N}$.
We note this implies the minimal relative balance of a miner to participate in the system.
However, because~$\epochLength$ is large, this limitation only applies to miners with very low, arguably insignificant, relative balances. E.g., if~$\epochLength = 1000$ then this limitation prevents miners with lower than 0.001 relative balance to participate, whom are irrelevant for any practical concern~\cite{arnosti2018bitcoin}.

We now find the miner~$i$'s total block weight, conditioned on the number of blocks she created.

\paragraph{Conditional block weight}
Assume miner~$i$ created~$\countableVariable$ blocks, and recall she can create at most~$\epochLength \balanceNormalizedSub{i}$ factored ones.

If~$0 \le \countableVariable \le \epochLength$ then miner~$i$ creates all her blocks as factored, each contributing block weight of~$\factor$, resulting with an accumulated block weight of~$\countableVariable \factor$.

However, if~$\epochLength \balanceNormalizedSubSuper{i}{} < \countableVariable \le$ then miner~$i$ creates only~$\epochLength \balanceNormalizedSub{i}$ factored blocks, while the rest,~$\countableVariable - \epochLength \balanceNormalizedSubSuper{i}{}$ are regular ones.
Consequently, her accumulated block weight is~$\epochLength \balanceNormalizedSub{i} \factor + \countableVariable - \epochLength \balanceNormalizedSub{i}$.

In summary, miner~$i$'s block weight assuming she created~$\countFactoredBlocksWithChain{i}{\mainChainName{}} = \countableVariable{}$ blocks is
\begin{equation}
\label{eq:v3_weight_conditional_variable}
\weight{i} \left(\mainChainName{} | \countableVariable{}\right) = 
\begin{dcases}
\countableVariable \factor, 								  & 0 \le  \countableVariable \le \epochLength \balanceNormalizedSubSuper{i}{} \\
\epochLength \balanceNormalizedSubSuper{i}{} \factor + \countableVariable - \epochLength \balanceNormalizedSubSuper{i}{}   & \epochLength \balanceNormalizedSubSuper{i}{} < \countableVariable \le \epochLength
\end{dcases}\,\,.
\end{equation}

\paragraph{Expected block weight}
It follows the expected block weight of miner~$i$ is~$\E\left[\weightWithChain{i}{\mainChainName{}}\right] =  \sum_{\countableVariable=0}^{\epochLength}{	\Pr\left(\countFactoredBlocksWithChain{i}{\mainChainName{}} = \countableVariable \right)  \weight{i} \left(\mainChainName{} | \countFactoredBlocksWithChain{i}{\mainChainName{}} = \countableVariable{}\right)}$\footnote{We note that the Chernoff~\cite{chernoff1952measure} and factorial moment~\cite{philips1995moment} bounds apply for~$\countFactoredBlocksWithChain{i}{\mainChainName{}}$, provide an exponentially decreasing (with respect to~$\epochLength$) bound on its distribution tail.
For a specific~$\epochLength$ value we can use the cumulative distribution and survival functions to find the error probability.
For example, for a miner~$i$ with relative balance~$\balanceNormalizedSubSuper{i}{}=0.3$ and~$\epochLength = 1000$, the probabilities for a negative and positive relative errors of 10\% are 0.02 and 0.018, respectively.}.

\paragraph{Expected income}
The utility based on the~$\funcNameBalance$ implementation is~$\utilityfuncNameWithProtocol{i}{\hebShortName} =  \tfrac{\E\left[\weightWithChain{i}{\mainChainName{}}\right]}{\sum_{j \in \minerSet}{\E\left[\weightWithChain{j}{\mainChainName{}}\right]}}   \epochLength  +  \dfrac{\posRatio^2 \balanceSubSuper{i}{}\balanceOfMinersAll}{\posRatio \balanceOfMinersAll + \balanceOfOthers}$.
Note that for similar considerations~$\tfrac{\posRatio{}^2 \balanceSub{i} \balanceOfMinersAll}{\posRatio \balanceOfMinersAll + \balanceOfOthers}$ is negligible, resulting with 
\begin{equation*}
\utilityfuncNameWithProtocol{i}{\hebShortName} =  \dfrac{\E\left[\weightWithChain{i}{\mainChainName{}}\right]}{\sum_{j \in \minerSet}{\E\left[\weightWithChain{j}{\mainChainName{}}\right]}}   \epochLength    \,\,,
\end{equation*}
which is exactly Eq.~\ref{eq:v3_expected_income}.

			\subsubsection{\textbf{Theoretical Analysis}}
			\label{app:v3_decentralization_epoch_factor}

We prove necessary and sufficient conditions to achieve~$\protocolPropertyDecentralization = 0$, and show they hold for sufficiently long epochs, i.e., large~$\epochLength$ values.

First, we note that in~$\hebShortName$ the relative utility of any miner~$i$ is~$\tfrac{\utilityfuncNameWithProtocol{i}{\hebShortName{}}}{\sum_{j \in \minerSet}{\utilityfuncNameWithProtocol{j}{\hebShortName{}}}} = \tfrac{\E\left[\weightWithChain{i}{\mainChainName{}}\right]}{\sum_{j \in \minerSet}{\E\left[\weightWithChain{j}{\mainChainName{}}\right]}}$ (summing Eq.~\ref{eq:v3_expected_income} for all miners), and hence we can consider~$\protocolPropertyDecentralization = \max\limits_{i \in \minerSet}{\left|\balanceNormalizedSubSuper{i}{} - \tfrac{\E\left[\weightWithChain{i}{\mainChainName{}}\right]}{\sum_{j \in \minerSet}{\E\left[\weightWithChain{j}{\mainChainName{}}\right]}}\right|}$.

We now present and prove two lemmas.
The first shows that all miners have the same expected block weight normalized by their relative balance iff~$\protocolPropertyDecentralization = 0$; the second that for a large~$\epochLength$ value all miners have the same normalized expected block weight.

\begin{lemma}
	\label{lemma:zero_decentralization_equal_nes}
	In~$\hebShortName$, if all miners follow~$\strategyNameProtocol{\defaultStrategyName}{\hebShortName}$, then~$\forall i,j \in \minerSet : \weightNormalized{i} = \weightNormalized{j}$ iff~$\protocolPropertyDecentralization = 0$.
\end{lemma}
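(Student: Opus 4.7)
The plan is to unfold both sides of the \emph{iff} into algebraic statements about the quantities $w_i \triangleq \E[\weightWithChain{i}{\mainChainName{}}]$ and $b_i \triangleq \balanceNormalizedSub{i}$, and then observe that they coincide because $\sum_{i \in \minerSet} b_i = 1$. The preceding prescribed-strategy analysis already established (Eq.~\ref{eq:v3_expected_income}) that when everyone follows $\strategyNameProtocol{\defaultStrategyName}{\hebShortName}$, the relative utility of miner $i$ equals $\tfrac{w_i}{\sum_{j \in \minerSet} w_j}$. Combined with the remark right before the lemma, this means that $\protocolPropertyDecentralization = \max_{i \in \minerSet} \bigl| b_i - \tfrac{w_i}{\sum_{j \in \minerSet} w_j} \bigr|$, so $\protocolPropertyDecentralization = 0$ is equivalent to the system of equations $b_i = \tfrac{w_i}{\sum_{j} w_j}$ holding for every $i$.

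First I would handle the ``$\Rightarrow$'' direction. Assume all normalized expected weights coincide, i.e., there exists a constant $c$ with $\tfrac{w_i}{b_i} = c$ for every $i \in \minerSet$. Then $w_i = c\, b_i$, and summing over $i$ gives $\sum_{j \in \minerSet} w_j = c \sum_{j \in \minerSet} b_j = c$, where the last equality uses $\sum_j b_j = 1$ (by the definition of relative balance). Hence $\tfrac{w_i}{\sum_j w_j} = \tfrac{c b_i}{c} = b_i$ for every $i$, which makes every term in the max defining $\protocolPropertyDecentralization$ vanish.

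For ``$\Leftarrow$'', assume $\protocolPropertyDecentralization = 0$. Then $b_i = \tfrac{w_i}{\sum_{j \in \minerSet} w_j}$ for every $i$, and rearranging yields $\tfrac{w_i}{b_i} = \sum_{j \in \minerSet} w_j$ for every $i$. The right-hand side does not depend on $i$, so $\weightNormalized{i}$ takes the same value across all miners, as desired.

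The only subtlety is making sure we are entitled to divide by $b_i$ and to identify the relative utility with $\tfrac{w_i}{\sum_j w_j}$; the first is fine because miners with $b_i = 0$ play no role in the game (and by the assumption in \S\ref{sec:protocol_v3_prescribed_strategy_analysis} that $\epochLength \balanceNormalizedSub{i} \in \mathbb{N}$, such miners are simply excluded from $\minerSet$), and the second was already derived in the prescribed-strategy analysis. Beyond that, the argument is a short normalization calculation and I do not expect any real obstacle.
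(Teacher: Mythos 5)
Your proof is correct and follows essentially the same route as the paper: both directions reduce to the normalization identity $\sum_{j\in\minerSet}\balanceNormalizedSub{j}=1$, with the forward direction summing $\E[\weightWithChain{i}{\mainChainName{}}]=c\,\balanceNormalizedSub{i}$ over miners and the reverse direction observing that $\tfrac{\E[\weightWithChain{i}{\mainChainName{}}]}{\balanceNormalizedSub{i}}=\sum_{j\in\minerSet}\E[\weightWithChain{j}{\mainChainName{}}]$ has an $i$-independent right-hand side. Your explicit constant $c$ is just a cleaner rendering of the paper's choice of miner $1$ as the reference, and your remark on $\balanceNormalizedSub{i}>0$ is a harmless extra precaution.
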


\begin{proof}
	First, assume~$\forall i,j \in \minerSet : \weightNormalized{i} = \weightNormalized{j}$.
	It follows~$\forall i \in \minerSet :  \E\left[\weightWithChain{i}{\mainChainName{}}\right] = \tfrac{\balanceNormalizedSub{i}}{\balanceNormalizedSub{1}} \E\left[\weightWithChain{1}{\mainChainName{}}\right]$.
	Therefore~$\sum_{j \in \minerSet}{\E\left[\weightWithChain{j}{\mainChainName{}}\right]} = \tfrac{1}{\balanceNormalizedSub{1}} \E\left[\weightWithChain{1}{\mainChainName{}}\right]$, and consequently~$\tfrac{\E\left[\weightWithChain{i}{\mainChainName{}}\right]}{\sum_{j \in \minerSet}{\E\left[\weightWithChain{j}{\mainChainName{}}\right]}} = \balanceNormalizedSub{i}$.
	It immediately follows that~$\protocolPropertyDecentralization = 0$.
	
	Now assume~$\protocolPropertyDecentralization = 0$, meaning~$\tfrac{\E\left[\weightWithChain{i}{\mainChainName{}}\right]}{\sum_{j \in \minerSet}{\E\left[\weightWithChain{j}{\mainChainName{}}\right]}} = \balanceNormalizedSub{i}$ or~$\tfrac{\E\left[\weightWithChain{i}{\mainChainName{}}\right]}{\balanceNormalizedSub{i}} = \sum_{j \in \minerSet}{\E\left[\weightWithChain{j}{\mainChainName{}}\right]}$.
	The last equation holds for any miner~$i$, thus pose a set of linear equations, with the solution being that~$\forall i,j \in \minerSet : \weightNormalized{i} = \weightNormalized{j}$.
\end{proof}

\begin{lemma}
	\label{lemma:longer_epoch_converges_nes_to_one}
	In~$\hebShortName$, if all miners follow~$\strategyNameProtocol{\defaultStrategyName}{\hebShortName}$, then~$\forall i \in \minerSet : \lim\limits_{\epochLength \to \infty}  \weightNormalized{i} = \epochLength \factor$ .
\end{lemma}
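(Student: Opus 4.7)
The plan is to compute $\E\!\left[\weightWithChain{i}{\mainChainName{}}\right]$ in closed form from the piecewise description of $\weight{i}\!\left(\mainChainName{}|\countableVariable\right)$ in Eq.~\ref{eq:v3_weight_conditional_variable} and then extract the leading order in $\epochLength$ using standard binomial concentration. Write $\mu := \epochLength \balanceNormalizedSub{i}$, which (by the assumption $\epochLength \balanceNormalizedSub{i} \in \mathbb{N}$ already used in \S\ref{sec:protocol_v3_prescribed_strategy_analysis}) is an integer and equals $\E\!\left[\countFactoredBlocksWithChain{i}{\mainChainName{}}\right]$. First I would split the sum $\sum_{\countableVariable} \Pr\!\left(\countFactoredBlocksWithChain{i}{\mainChainName{}}=\countableVariable\right)\,\weight{i}\!\left(\mainChainName{}|\countableVariable\right)$ at this threshold, using $\countableVariable \factor$ for $\countableVariable \le \mu$ and $\mu\factor + \countableVariable - \mu$ for $\countableVariable > \mu$.

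A short algebraic rearrangement --- collecting the two pieces and using the identity $\E\!\left[\countFactoredBlocksWithChain{i}{\mainChainName{}}\mathbf{1}_{\countFactoredBlocksWithChain{i}{\mainChainName{}} \le \mu}\right] + \E\!\left[\countFactoredBlocksWithChain{i}{\mainChainName{}}\mathbf{1}_{\countFactoredBlocksWithChain{i}{\mainChainName{}} > \mu}\right] = \mu$ --- then yields the clean expression
\begin{equation*}
\E\!\left[\weightWithChain{i}{\mainChainName{}}\right] = \factor \mu - (\factor - 1)\,\E\!\left[(\countFactoredBlocksWithChain{i}{\mainChainName{}} - \mu)\mathbf{1}_{\countFactoredBlocksWithChain{i}{\mainChainName{}} > \mu}\right].
\end{equation*}
The first term is exactly what miner $i$ would earn if every block she produced were factored; the second is the shortfall from realizations in which she overshoots her factored-block cap and must fall back to regular blocks.

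To finish, I would bound the shortfall by Cauchy--Schwarz (or the explicit mean-absolute-deviation formula for the binomial): since $\operatorname{Var}\!\left(\countFactoredBlocksWithChain{i}{\mainChainName{}}\right) = \epochLength \balanceNormalizedSub{i}(1-\balanceNormalizedSub{i})$, we have $\E\!\left[(\countFactoredBlocksWithChain{i}{\mainChainName{}}-\mu)\mathbf{1}_{\countFactoredBlocksWithChain{i}{\mainChainName{}} > \mu}\right] = O\!\left(\sqrt{\epochLength}\right)$. Dividing through by $\balanceNormalizedSub{i}$ gives
\begin{equation*}
\weightNormalized{i} = \factor\,\epochLength - O\!\left(\tfrac{\sqrt{\epochLength}}{\balanceNormalizedSub{i}}\right),
\end{equation*}
so for any fixed $\balanceNormalizedSub{i} > 0$ the correction is $o(\epochLength)$ and $\weightNormalized{i}/(\epochLength\factor) \to 1$ as $\epochLength \to \infty$, which is the claim.

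The main obstacle is largely notational rather than mathematical: the right-hand side of the stated limit depends on the limit variable $\epochLength$, so care is needed to interpret ``$\lim = \epochLength \factor$'' as the asymptotic equivalence $\weightNormalized{i} \sim \epochLength\factor$ (or equivalently, $\weightNormalized{i}/(\epochLength\factor) \to 1$). Once that is pinned down, the proof is a direct combination of the piecewise summation with a $\sqrt{\epochLength}$-scale concentration bound for a binomial, and the $(1-\balanceNormalizedSub{i})$ factor in the variance even suggests the correction is strictly smaller for miners with larger relative balance --- an observation that could be reused in the concrete instantiations of \S\ref{app:v3_decentralization_balance}.
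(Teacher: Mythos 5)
Your proof is correct, and it is in fact more rigorous than the argument the paper gives. The paper's own proof is two sentences: it invokes the law of large numbers to say that $\countFactoredBlocksWithChain{i}{\mainChainName{}}$ ``converges to'' $\epochLength\balanceNormalizedSub{i}$, reads off $\E\left[\weightWithChain{i}{\mainChainName{}}\right] \to \epochLength\balanceNormalizedSub{i}\factor$ from the piecewise weight formula, and divides by $\balanceNormalizedSub{i}$ --- without justifying the exchange of limit and expectation or quantifying the error. Your route replaces that hand-wave with the exact identity $\E\left[\weightWithChain{i}{\mainChainName{}}\right] = \factor\epochLength\balanceNormalizedSub{i} - (\factor-1)\,\E\bigl[(\countFactoredBlocksWithChain{i}{\mainChainName{}}-\epochLength\balanceNormalizedSub{i})^{+}\bigr]$ (which checks out: $\epochLength\balanceNormalizedSub{i}\factor + \countableVariable - \epochLength\balanceNormalizedSub{i} = \countableVariable\factor - (\factor-1)(\countableVariable-\epochLength\balanceNormalizedSub{i})$ on the upper branch) and then bounds the shortfall term by $\sqrt{\epochLength\balanceNormalizedSub{i}(1-\balanceNormalizedSub{i})}$ via Cauchy--Schwarz. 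This buys an explicit $O(1/\sqrt{\epochLength})$ convergence rate, it explains quantitatively why larger $\balanceNormalizedSub{i}$ sits closer to the limit (consistent with the paper's Fig.~\ref{subfig:decentralization_analysis_epoch}, where the $\balanceNormalizedSub{i}=0.4$ curve is nearest to $1$), and it could replace the purely numerical treatment in \S\ref{app:v3_decentralization_balance}. You were also right to flag that the lemma as stated has the limit variable appearing on the right-hand side; the paper implicitly means $\weightNormalized{i}/(\epochLength\factor) \to 1$ (this is exactly the quantity $\weightNormalizedNormalized{i}$ it plots), and your reading is the intended one.
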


\begin{proof}
	By the law of large numbers we get that~$\countFactoredBlocksWithChain{i}{\mainChainName{}}$ converges to its expected value,~$\lim\limits_{\epochLength \to \infty}  \countFactoredBlocksWithChain{i}{\mainChainName{}} = \epochLength \balanceNormalizedSub{i}$.
	Consequently, the expected block weight of each miner~$i$ (Eq.~\ref{eq:v3_weight_conditional_variable}) is~$\lim\limits_{\epochLength \to \infty}  \E\left[\weightWithChain{i}{\mainChainName{}}\right] = \epochLength \balanceNormalizedSubSuper{i}{} \factor$, and therefore~$\lim\limits_{\epochLength \to \infty}  \weightNormalized{i} = \epochLength \factor$.
\end{proof}

It directly follows from the two previous lemmas that with a sufficiently large~$\epochLength$ value~$\hebShortName$ achieves~$\protocolPropertyDecentralization = 0$, as stated in the following corollary:
\begin{corollary}
	In~$\hebShortName$, if all miners follow~$\strategyNameProtocol{\defaultStrategyName}{\hebShortName}$, then~$\lim\limits_{\epochLength \to \infty} \protocolPropertyDecentralization = 0$.
\end{corollary}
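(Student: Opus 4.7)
The plan is to combine Lemma~\ref{lemma:zero_decentralization_equal_nes} and Lemma~\ref{lemma:longer_epoch_converges_nes_to_one}, passing to the limit through a continuity argument rather than invoking the ``iff'' directly (which asserts only exact equality). First I would note that because the miner set is finite and fixed across the epoch (the quasi-static assumption), it suffices to show pointwise convergence of each of the finitely many terms appearing in the maximum that defines $\protocolPropertyDecentralization$.

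Second, I would use Lemma~\ref{lemma:longer_epoch_converges_nes_to_one}, which via a law-of-large-numbers argument gives $\E\left[\weightWithChain{i}{\mainChainName{}}\right] \sim \epochLength \balanceNormalizedSub{i} \factor$ as $\epochLength \to \infty$. Summing over the finite miner set then yields $\sum_{j \in \minerSet}\E\left[\weightWithChain{j}{\mainChainName{}}\right] \sim \epochLength \factor \sum_{j \in \minerSet}\balanceNormalizedSub{j} = \epochLength \factor$, where the last equality uses that the relative balances sum to one. Dividing, the relative expected weight of miner $i$ satisfies $\tfrac{\E\left[\weightWithChain{i}{\mainChainName{}}\right]}{\sum_{j \in \minerSet}\E\left[\weightWithChain{j}{\mainChainName{}}\right]} \to \balanceNormalizedSub{i}$. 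By Eq.~\ref{eq:v3_expected_income} this relative expected weight coincides with the relative utility $\tfrac{\utilityfuncNameWithProtocol{i}{\hebShortName}}{\sum_{j \in \minerSet}\utilityfuncNameWithProtocol{j}{\hebShortName}}$, so each term $\left|\balanceNormalizedSub{i} - \tfrac{\utilityfuncNameWithProtocol{i}{\hebShortName}}{\sum_{j \in \minerSet}\utilityfuncNameWithProtocol{j}{\hebShortName}}\right|$ tends to zero.

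Third, because the maximum of finitely many null sequences is itself a null sequence, I obtain $\lim_{\epochLength \to \infty} \protocolPropertyDecentralization = 0$, which is the claim.

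The main subtlety I anticipate is precisely this handling of the limit: Lemma~\ref{lemma:zero_decentralization_equal_nes} is phrased as a two-sided implication under exact equality of the normalized weights $\weightNormalized{i}$, which strictly never holds for any finite $\epochLength$. I would therefore deliberately avoid plugging a limit into that lemma, and instead read Lemma~\ref{lemma:longer_epoch_converges_nes_to_one} in its first-order asymptotic form $\weightNormalized{i}/(\epochLength \factor) \to 1$, which is all one needs to push the convergence through the ratio manipulation above. A quick sanity check that the error terms suppressed by $\sim$ remain uniform in $i$ (immediate because $|\minerSet|$ is finite) closes the argument.
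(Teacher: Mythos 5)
Your proof is correct and follows essentially the same route as the paper, which simply asserts that the corollary ``directly follows'' from the two lemmas; your version fills in the limit-passing that the paper leaves implicit (reading Lemma~\ref{lemma:longer_epoch_converges_nes_to_one} as $\weightNormalized{i}/(\epochLength\factor)\to 1$, summing over the finite miner set, and taking the maximum of finitely many null sequences). Your caution about not feeding a limit into the exact-equality ``iff'' of Lemma~\ref{lemma:zero_decentralization_equal_nes} is well placed and makes the argument more rigorous than the paper's one-line justification, but it does not constitute a different approach.
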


			\subsubsection{\textbf{Parameter Instantiation}}
			\label{app:v3_decentralization_balance}

We now evaluate the~$\protocolPropertyDecentralization$ for different~$\epochLength$,~$\factor$ and~$\balanceNormalizedSub{}$ values.
As all miners follow~$\strategyNameProtocol{\defaultStrategyName}{\hebShortName}$, then~$\protocolPropertyDecentralization$ is unaffected by~$\posRatio$.

We proceed as follows.
We numerically calculate~$\weightNormalized{i}$ for various~$\factor$,~$\epochLength$ and~$\balanceNormalizedSub{i}$ values.
We present our results in Fig.~\ref{fig:analysis_decentralization}, normalized by~$\tfrac{1}{\epochLength \factor}$ for comparison purposes.
Although we present results for specific configurations, we assert that different parameter values yield the same qualitative results.

\begin{figure}[!t]
	
	\centering

	\begin{subfigure}[b]{\VERTICALSUBFIGSCALE\textwidth}
		\resizebox{\textwidth}{!}{{\renewcommand\normalsize{\LARGE}\normalsize			\input{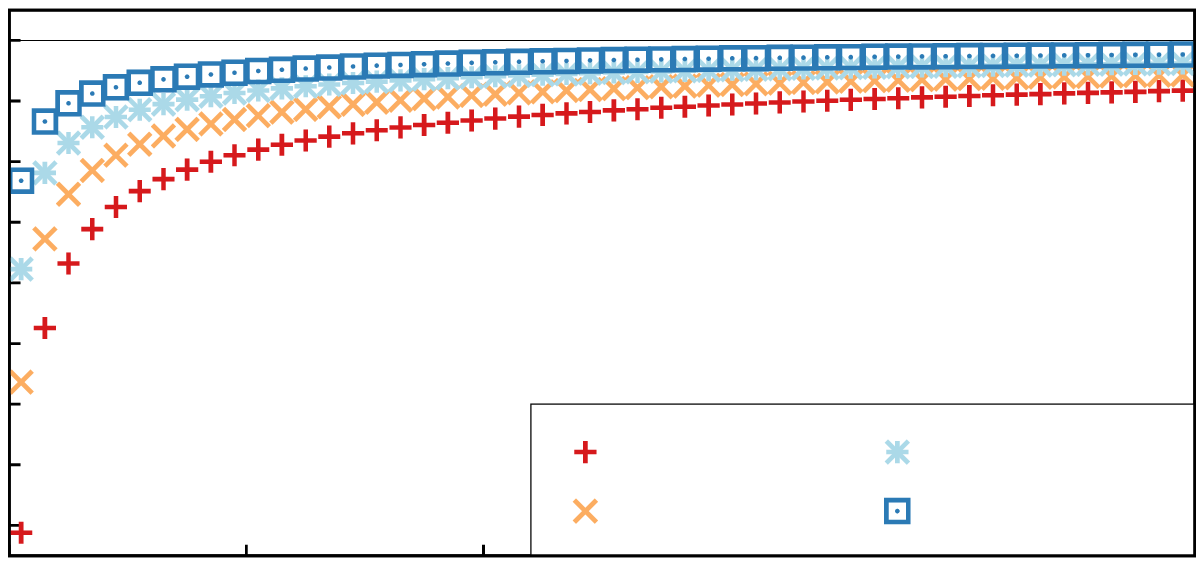}}}
		\vspace{-2\baselineskip}
		\caption{~$\factor{} = 20$. }
		\label{subfig:decentralization_analysis_epoch}
	\end{subfigure}%
	
	\begin{subfigure}[b]{\VERTICALSUBFIGSCALE\textwidth}
		\resizebox{\textwidth}{!}{{\renewcommand\normalsize{\LARGE}\normalsize			\input{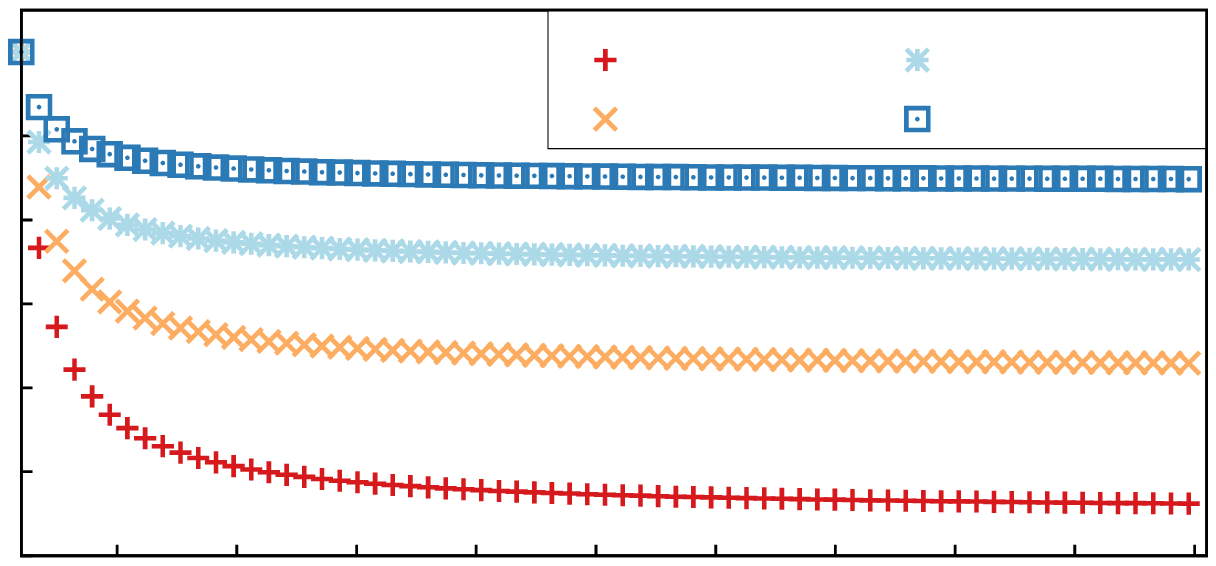}}}
		\vspace{-2\baselineskip}
		\caption{~$\epochLength = 1000$. }
		\label{subfig:decentralization_analysis_factor}
	\end{subfigure}%
	
	\vspace{-0.8\baselineskip}
	\caption{$\weightNormalizedNormalized{i}$ for~$\posRatio$ and~$\epochLength$ values. }
	\label{fig:analysis_decentralization}
	\vspace{-1\baselineskip}	
	
\end{figure}

Fig.~\ref{subfig:decentralization_analysis_epoch} shows for a fixed~$\factor = 20$ the value of~$\weightNormalizedNormalized{i}$ as a function of~$\epochLength$.
As expected (Lemma~\ref{lemma:longer_epoch_converges_nes_to_one}),~$\weightNormalized{i}$ approaches 1 as~$\epochLength$ grows, leading towards~$\protocolPropertyDecentralization = 0$.
However, for any fixed~$\epochLength$ value, miners of different~$\balanceNormalizedSub{i}$ have different~$\weightNormalizedNormalized{i}$, resulting with~$\protocolPropertyDecentralization > 0$, matching  Lemma~\ref{lemma:zero_decentralization_equal_nes}.

We also illustrate the effect of~$\factor$ on~$\weightNormalizedNormalized{i}$.
Fig.~\ref{subfig:decentralization_analysis_factor} shows~$\weightNormalizedNormalized{i}$ for~$\epochLength = 1000$ as function of~$\factor$.
At the region of lower~$\factor$ values, increasing~$\factor$ also increases the difference of~$\weightNormalizedNormalized{i}$ for different~$\balanceNormalizedSub{i}$.
However, as~$\factor$ becomes larger, then~$\weightNormalizedNormalized{i}$ tends towards a constant and the difference for different~$\balanceNormalizedSub{i}$ remains fixed.
This is expected, as for larger~$\factor$ values the term~$
\epochLength \balanceNormalizedSubSuper{i}{} \factor + \countableVariable - \epochLength \balanceNormalizedSubSuper{i}{}$ in Eq.~\ref{eq:v3_weight_conditional_variable} is dominated by~$
\epochLength \balanceNormalizedSubSuper{i}{} \factor$, and the expected weight becomes linear in~$\factor$.

We dedicate the rest of this section to analyze how different balance distributions affect miners' utilities and~$\protocolPropertyDecentralization$.
We consider various settings of at most~$5$ miners with epoch length of~$\epochLength = 1000$ blocks and~$\factor = 20$.

For each setting we numerically calculate~$\protocolPropertyDecentralization$ and present it, along with its respective balance distribution, in Table~\ref{tab:analysis_epsilon_values}.
We choose these specific settings to demonstrate~$\protocolPropertyDecentralization$ both balanced and extreme distributions.

\begin{table}[!t]
	\small	
	\centering
	\begin{tabular}{ |c|c|c|c|c||c| } 
		\hline
		\multicolumn{5}{|c||}{Balance distribution} &
		\multirow{2}{*}{$\protocolPropertyDecentralization$} \\
		\cline{1-5}
		$\balanceNormalizedSub{1}$ &~$\balanceNormalizedSub{2}$ &~$\balanceNormalizedSub{3}$ &~$\balanceNormalizedSub{4}$ &~$\balanceNormalizedSub{5}$ &   \\
		\hline		
		0.20 & 0.80 & - & - & - & \textbf{0.0029} \\
		\hline
		0.10 & 0.15 & 0.20 & 0.20 & 0.35 & \textbf{0.0025} \\
		\hline
		0.20 & 0.40 & 0.40 & - & - & \textbf{0.0015} \\
		\hline
		0.20 & 0.20 & 0.30 & 0.30 & - & \textbf{0.0007} \\
		\hline
		0.20 & 0.20 & 0.20 & 0.20 & 0.20 & \textbf{0.0000} \\
		\hline
	\end{tabular}
	
	\caption{$\protocolPropertyDecentralization$ when~$\epochLength = 1000$ and~$\factor = 20$. Hyphens represent miners not present in the configuration.} 
	\label{tab:analysis_epsilon_values} 
	\vspace{-3.1\baselineskip}
\end{table}

Table~\ref{tab:analysis_epsilon_values} shows that more extreme balance distributions results in higher~$\protocolPropertyDecentralization$.
For instance, consider the setting with only two miners where~$\balanceNormalizedSub{1} = 0.2$ and~$\balanceNormalizedSub{2} = 0.8$. 
This setting leads to the highest value of~$\protocolPropertyDecentralization = 0.0029$.
Note that this is an unrealistic setting, presented only as an example for a highly-uneven distribution.
Even in this extreme scenario miner~$1$ has a degradation of less than~$0.3\%$ in her relative utility.
More balanced settings lead to lower~$\protocolPropertyDecentralization$ values.

In summary, even a highly-unbalanced distribution results in minor deviations from proportional rewards.
Increasing~$\epochLength$ and decreasing~$\factor$ enables the system designer to reduce these deviations.


\subsection{$\protocolPropertyAdversarial$}
\label{sec:protocolPropertyAdversarial}		

Recall protocol~$\protocolGenericName$ provides a prescribed strategy~$\strategyNameProtocol{\defaultStrategyName}{\protocolGenericName}$ that miners individually choose whether to follow.
The protocol properties rely on miners following this strategy~\cite{eyal2014majority,sapirshtein2016optimal,nayak2016stubborn,arvindcutoff,tsabary2018thegapgame,zhang2019lay,gervais2016security,tsabary2020mad,mirkin2020bdos}, hence it should incentivize miners to do so.

The question is whether the prescribed strategy is a Nash-equilibrium~\cite{aumann1987correlated,friedman1971nonCooperative,van2002strategic}, meaning no miner can benefit from individually deviating to a different strategy.
Like in previous work~\cite{nakamoto2008bitcoin,eyal2014majority,sapirshtein2016optimal,nayak2016stubborn,pass2017fruitchains}, the~$\protocolPropertyAdversarial$ metric is the maximal relative miner balance that achieves this: If all miners have relative balances smaller than the threshold, then the prescribed strategy is a Nash-equilibrium.

Formally, denote by~$\strategyNameProtocol{i,\bestResponseStrategyName}{\protocolGenericName}$ the best-response strategy of miner~$i$ with relative balance~$\balanceNormalizedSub{i}$ when all other miners follow~$\strategyNameProtocol{\defaultStrategyName}{\protocolGenericName}$.
$\protocolPropertyAdversarial$ is the maximal value~$\balanceNormalizedSub{i}$ such that~$\strategyNameProtocol{i,\bestResponseStrategyName}{\protocolGenericName} = \strategyNameProtocol{\defaultStrategyName}{\protocolGenericName}$.
It follows that~$\strategyNameProtocol{\defaultStrategyName}{\protocolGenericName}$ is a Nash-equilibrium if all miner relative balances are not greater than~$\protocolPropertyAdversarial$.

		\subsubsection*{\textbf{Nakamoto}}

Sapirshtein et al.~\cite{sapirshtein2016optimal} showed that for~$\protocolNameBitcoin$ with the uniform tie-breaking fork selection rule~ (see \S\ref{sec:bitcoin_protocol}) the metric value is~$\protocolPropertyAdversarial = 0.232$.

		\subsubsection*{\textbf{HEB}}

An optimal miner strategy must consider how to allocate the balance, which previous blocks to point to, what block type to create, and when to publish created blocks.

\paragraph*{\textbf{PoW-only analysis}}

Before the general analysis we start by considering a specific, natural,~\emph{PoW-only} strategy~$\strategyNameProtocol{\ignorantMiningStrategyName}{\hebShortName}$~(Alg.~\ref{alg:ignorant_mining_strategy}), which simply ignores the internal expenditure aspect of~$\hebShortName$.

The idea of~$\strategyNameProtocol{\ignorantMiningStrategyName}{\hebShortName}$ is to maximize the block creation rate by expending all resources externally.
The miner tries to create all the epoch blocks herself, and thus obtain all the epoch rewards.
Specifically, miner~$i$ allocates her balance s.t.~$\left\langle 0,\balanceSubSuper{i}{} \right\rangle$, creates regular blocks pointing to the last block in her local storage~$\localStorage{i}$, and publishes these blocks only after she has created~$\epochLength$ of them.

$\strategyNameProtocol{\ignorantMiningStrategyName}{\hebShortName}$ does not allocate resources for factored blocks, as if successful, all published blocks on the main chain are by the miner, thus her relative block weight is 1, granting her all the epoch reward.

This strategy is of interest as it abuses the internal expenditure mechanism; it is also simple enough to lend itself to a closed-form analysis.
Specifically, for~$\strategyNameProtocol{\ignorantMiningStrategyName}{\hebShortName}$ to be expected to succeed, miner~$i$ has to create blocks at a rate higher than all other miners combined.

Let~$\balanceSubSuper{\neg i}{}$ be the aggregate balance of all miners except miner~$i$.
It follows the requirement is~$\balanceSubSuper{i}{\externalCurrency} > \balanceSubSuper{\neg i}{\externalCurrency}$, and we seek the minimal~$\balanceNormalizedSubSuper{i}{}$ satisfying this condition.

All miners except miner~$i$ follow~$\strategyNameProtocol{\defaultStrategyName}{\hebShortName}$, so~$\balanceSubSuper{\neg i}{\externalCurrency} = \oneMinusPosRatioBrackets \balanceSubSuper{\neg i}{}$.
We get the aforementioned inequality is equivalent to~$\balanceSubSuper{i}{} > \oneMinusPosRatioBrackets \balanceSubSuper{\neg i}{}$.
Now, recall that~$\balanceSubSuper{i}{} + \balanceSubSuper{\neg i}{} = \epochLength$ and~$\epochLength \balanceNormalizedSubSuper{i}{} = \balanceSubSuper{i}{}$, and we get~$\balanceNormalizedSubSuper{i}{} > \oneMinusPosRatioBrackets \left(1 - \balanceNormalizedSubSuper{i}{}\right)$ or~$\balanceNormalizedSubSuper{i}{} > \tfrac{1 - \posRatio}{2 - \posRatio}$.

The value of~$\tfrac{1 - \posRatio}{2 - \posRatio}$ is an upper bound of~$\protocolPropertyAdversarial $; for larger~$\balanceNormalizedSubSuper{i}{}$ values, the strategy~$\strategyNameProtocol{\defaultStrategyName}{\hebShortName}$ is not the best-response, i.e., not a Nash-equilibrium.

We present this bound in Fig.~\ref{fig:taking_over_the_system}.
If~$\balanceNormalizedSub{i} > \tfrac{1 - \posRatio}{2 - \posRatio}$, i.e., above the plot,~$\strategyNameProtocol{i,\ignorantMiningStrategyName}{\hebShortName}$ outperforms~$\strategyNameProtocol{\defaultStrategyName}{\hebShortName}$, showing that~$\strategyNameProtocol{\defaultStrategyName}{\hebShortName}$ is not a Nash-equilibrium.

\begin{figure}[!t]
	
	\centering
	\resizebox{\VERTICALFIGSCALEINPUT\textwidth}{!}{	{\renewcommand\normalsize{\LARGE}\normalsize			\input{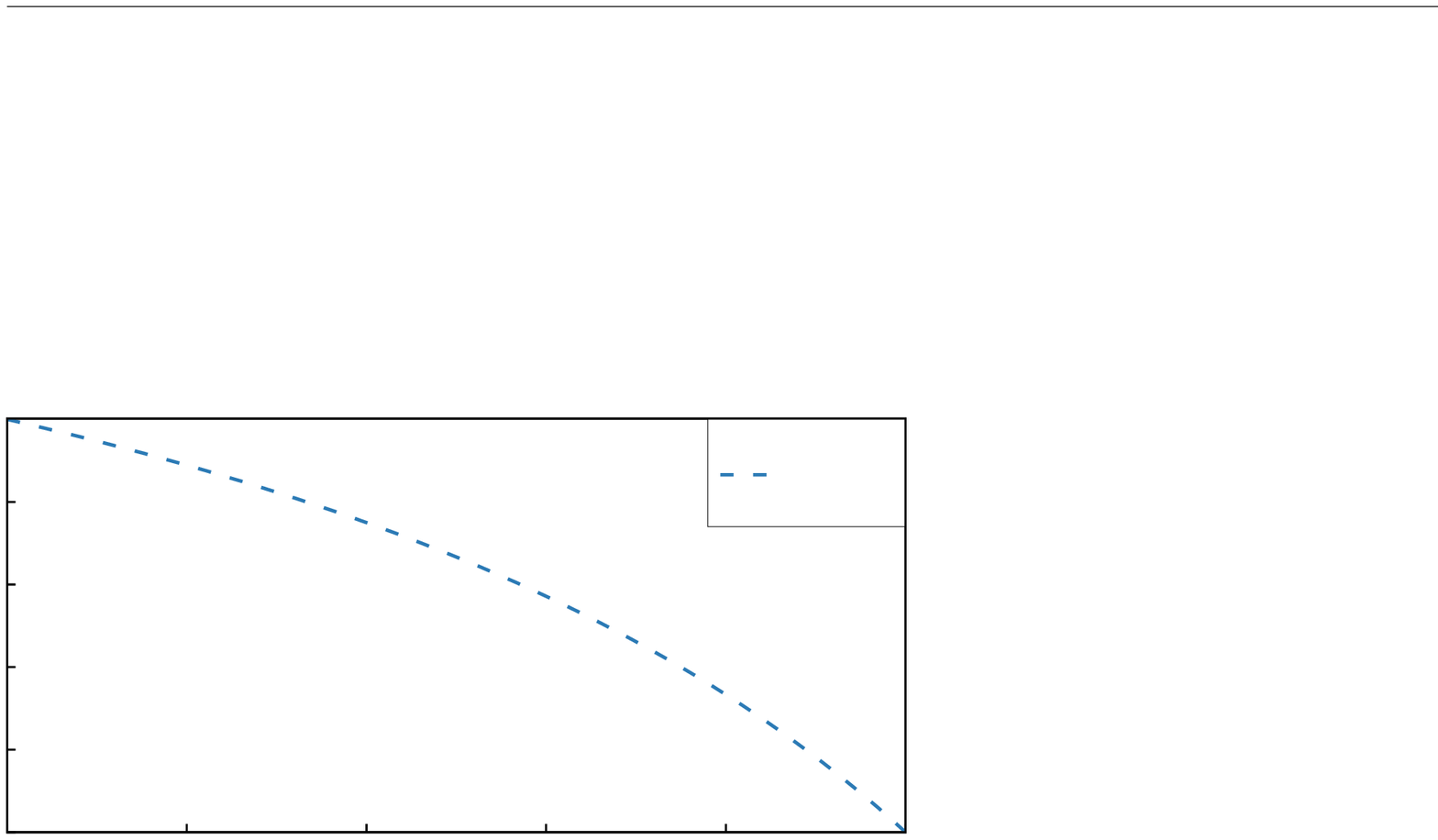}}}	
	\vspace{-1.2\baselineskip}
	\caption{Required $\balanceNormalizedSub{i}$ for~$\strategyNameProtocol{i,\ignorantMiningStrategyName}{\hebShortName}$ to outperform~$\strategyNameProtocol{i,\defaultStrategyName}{\hebShortName}$.}
	\label{fig:taking_over_the_system}
	\vspace{-1.2\baselineskip}	
\end{figure}

As expected, higher~$\posRatio$ values lower the bound, as miner~$i$ is competing against less external balance.
This result matches~$\protocolNameBitcoin$, as if~$\posRatio = 0$ then~$\tfrac{1 -\posRatio}{2 -\posRatio} = 0.5$, yielding the established 50\% bound~\cite{nakamoto2008bitcoin,Andes2011kryptonite,sapirshtein2016optimal,kroll2013economics}.


\paragraph*{\textbf{General analysis}}

Following previous work~\cite{sapirshtein2016optimal,gervais2016security,zur2020efficient,hou2019squirrl}, we use \emph{Markov Decision Process} (\emph{MDP}) to search for the optimal strategy in~$\hebShortName$.
The MDP includes the internal expenditure and block weights, and produces miner~$i$'s best-response strategy~$\strategyNameProtocol{i,\bestResponseStrategyName}{\hebShortName}$ based on system parameters.


We note the state and action spaces grow exponentially with the epoch length, limiting available analysis to relatively small epoch values.
Therefore, similarly to previous work~\cite{sapirshtein2016optimal,gervais2016security,zur2020efficient,hou2019squirrl}, we also limit the state space by excluding strategies requiring longer, and thus less probable, sequences of events.

Our focus is finding the required parameter values for which the best response~$\strategyNameProtocol{i,\bestResponseStrategyName}{\hebShortName}$ is the prescribed~$\strategyNameProtocol{i,\defaultStrategyName}{\hebShortName}$ strategy.
Recall that~$\strategyNameProtocol{i,\bestResponseStrategyName}{\hebShortName}$ is the optimal implementation of
$\funcNameAllocateBudgtWithFont{}$,~$\funcNameGenerateBlockWithFont$ and~$\funcNamePublishWithFont$ given~$\balanceSub{i}$ and the system parameters~$\epochLength,\factor, \posRatio$, hence we take the following approach.

We fix~$\epochLength = 10$ to limit the state space, and for various values of~$\posRatio$ and~$\balanceNormalizedSub{i}$ we use binary-search to find the minimal~$\factor \in \left[1,10^8\right]$ value such that~$\strategyNameProtocol{i,\bestResponseStrategyName}{\hebShortName} = \strategyNameProtocol{\defaultStrategyName}{\hebShortName}$.
First, we consider~$\funcNameAllocateBudgtWithFont{}$ implementations that let miner~$i$ create a natural number of blocks (allocating balance to enable the creation of a fraction of a block is strictly dominated, enabling discretization of possible implementations).
For each such implementation we use the MDP to obtain the optimal implementation of~$\funcNameGenerateBlockWithFont$ and~$\funcNamePublishWithFont$.
We let the miner play the resultant strategies for~$5000$ games each, and take the most rewarding to be~$\strategyNameProtocol{i,\bestResponseStrategyName}{\hebShortName}$.

\begin{figure}[!t]
	\vspace{-0.5\baselineskip}
	\centering
	\resizebox{\VERTICALFIGSCALEINPUT\textwidth}{!}{	{\renewcommand\normalsize{\LARGE}\normalsize			\input{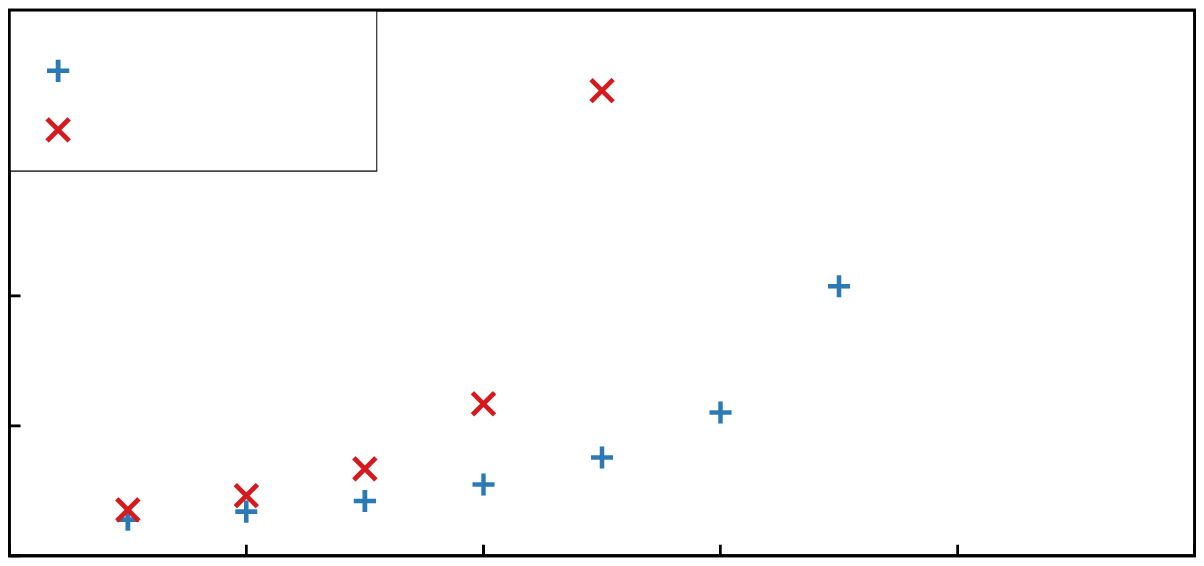}}}	
	
	\vspace{-1.3\baselineskip}
	\caption{Minimal required~$\factor$ for~$\protocolPropertyAdversarial$.}
	\label{fig:mdp_results}
	\vspace{-1.4\baselineskip}
\end{figure}

We present the results in Fig.~\ref{fig:mdp_results}, showing that increasing~$\factor$ values and lowering~$\posRatio$ increases~$\protocolPropertyAdversarial$.
Specifically, for~$\balanceNormalizedSub{i}=0.2$ the required~$\factor$ values grow exponentially with~$\posRatio$ up to~$\posRatio = 0.5$, and from there even the maximal~$\factor$ value does not accommodate the desired behavior.
We note a similar behavior for~$\balanceNormalizedSub{i}=0.1$, growing exponentially with~$\posRatio$ up to~$\posRatio = 0.7$, being the maximal~$\posRatio$ that leads to~$\strategyNameProtocol{\defaultStrategyName}{\hebShortName}$ being a Nash-equilibrium.

We also note that lower~$\balanceNormalizedSubSuper{i}{}$ requires lower~$\factor$ values, and specifically, 
there are no~$\factor$ and~$\posRatio$ values for which the configuration of~$\balanceNormalizedSubSuper{i}{} = 0.3$ achieves a Nash-equilibrium.
This is expected as the profitability threshold for selfish-mining variants is~$\balanceNormalizedSubSuper{i}{} = 0.232$~\cite{sapirshtein2016optimal}, and indeed the resultant best-response strategies resemble selfish-mining in~$\protocolNameBitcoin$.

We conclude that~$\protocolPropertyAdversarial$ relies on~$\epochLength$,~$\factor$ and~$\posRatio$; by setting~$\factor=20$, we can obtain~$\protocolPropertyAdversarial = 0.2$ even for~$\posRatio = 0.5$, close to $\protocolNameBitcoin$'s $\protocolPropertyAdversarial=0.232$ value~\cite{sapirshtein2016optimal}.

\paragraph*{\textbf{MDP Technical Details}}

Like previous work~\cite{sapirshtein2016optimal,gervais2016security,zhang2019lay,hou2019squirrl,zur2020efficient}, our MDP finds the best-response strategy of miner~$i$ for given system parameters.
We detail the modeling of the other system miners as a single one, denoted miner~$\neg i$, who follows a prescribed strategy.

As such, we limit the analysis to strategies considering at most two chains at any given time~\cite{arvindcutoff,eyal2014majority,sapirshtein2016optimal,nayak2016stubborn}.
The first is the~\emph{public} chain, followed by miner~$\neg i$, while the other is known only to miner~$i$ (i.e., maintained in her local storage~$\localStorage{i}$), named the~\emph{secret} chain.

We describe~miner~$\neg i$, the action space, the state space, and the reward function.

\paragraph{Miner~$\neg i$}
As in previous work~\cite{eyal2014majority,arvindcutoff,sapirshtein2016optimal}, miner~$\neg i$ is a cohort comprising infinitely-many, non-colluding infinitely-small balance miners.
She follows a~\emph{petty-compliant}~\cite{arvindcutoff} strategy~$\strategyNameProtocol{\pettyCompliantStrategyName}{\hebShortName}$ that is a variant of~$\strategyNameProtocol{\defaultStrategyName}{\hebShortName}$~(Alg.~\ref{alg:strategy_petty_compliant}): it tie-breaks conflicting longest chains uniformly-at-random from the multiple longest chains~\emph{with the minimum accumulated weight}.


The intuition for~$\strategyNameProtocol{\pettyCompliantStrategyName}{\hebShortName}$ is as follows.
Given a specific block, its relative weight is higher on a chain with less accumulated weight.
As miners get reward based on their relative weight, they have an explicit incentive to pick the chain that has a lower accumulated weight.
That is, this strategy is more logical from a miner's perspective as it is expected to increase her utility.

So, assuming a petty-compliant cohort strengthens our result, producing a more conservative threshold, as such behavior is more amendable to manipulation~\cite{arvindcutoff}.

\begin{figure}[!t]
	
	\makeatletter
	\newcommand{\removelatexerror}{\let\@latex@error\@gobble}
	\makeatother
	
	\removelatexerror

	\begin{minipage}[t]{.27\textwidth}
		\vspace*{-\baselineskip}
		\begin{algorithm}[H]
			\SetNoFillComment
			\SetAlgoNoLine 
			\SetAlgoNoEnd 
			\DontPrintSemicolon 
			\caption{$\strategyNameProtocol{i,\pettyCompliantStrategyName}{\hebShortName}$} 
			\label{alg:strategy_petty_compliant}

			\Fn{\FAllocateBudgt{\funcArgsAllocateBudgt{i}}}{
				\Return~$\oneMinusPosRatioBrackets \balanceSub{i}$\;
			}
			\Fn{\FGenerateBlock{\funcArgsGenerateBlock{i}}}{
				$\mainChainName{} \gets$  uniformly from the minimal-weight chains in~$\funcLongestChains{}$
				
				$\text{pointer} \gets \lastBlockOnMainChain{\mainChainName{}}$\;
				\If{
					$\countFactoredBlocksWithChain{i}{\mainChainName{}} < \floor*{\dfrac{\balanceSubSuper{i}{\internalCurrency}}{\posRatio \rewardPerBlock}}$}
				{~$\text{type} \gets \textit{factored}$}
				\Else{~$\text{type} \gets \textit{regular}$}
				\Return NewBlock(pointer, type)\;
			}
			\Fn{\FPublish{\funcArgsPublish{i}}}{
				\Return All previously unpublished blocks\;
			}
		\end{algorithm}
	\end{minipage}%
	\hfill
	\begin{minipage}[t]{.20\textwidth}
		\vspace*{-\baselineskip}
		\begin{algorithm}[H]
			\SetNoFillComment
			\SetAlgoNoLine 
			\SetAlgoNoEnd 
			\DontPrintSemicolon 
			\caption{$\strategyNameProtocol{i,\ignorantMiningStrategyName}{\hebShortName}$} 
			\label{alg:ignorant_mining_strategy}

			\Fn{\FAllocateBudgt{\funcArgsAllocateBudgt{i}}}{
				\Return~$\left\langle 0,\balanceSub{i} \right\rangle~$\;
			}

			\Fn{\FGenerateBlock{\funcArgsGenerateBlock{i}}}{
				$\text{pointer} \gets \lastBlockOnMainChain{\localStorage{i}} $\;
				\Return NewBlock(pointer,\textit{regular})\;
			}

			\Fn{\FPublish{\funcArgsPublish{i}}}{
				
				\If{
					$\countFactoredBlocksWithChain{i}{\localStorage{i}} < \epochLength$}
				{\Return \;}
				\Else{\Return All previously unpublished~$\epochLength$ blocks\;}
				
			}
		\BlankLine
		\BlankLine		
		\BlankLine
		\BlankLine			
		\BlankLine
		\BlankLine				
		\end{algorithm}
	\end{minipage}
	
	\vspace{-1.8\baselineskip}
	
\end{figure}

In practice it is less likely that a miner will prefer to extend a chain excluding a previous block she created, as that lowers her block weight and utility.
So, larger honest miners are more reluctant to discard their blocks and follow other chains; modeling the cohort to comprise infinitely-many negligibly-small balance miners removes this consideration~\cite{arvindcutoff}, resulting with a stronger adversary, i.e., producing a more conservative bound.
We illustrate that with the following example.

\begin{myExample}
	
	Assume the last block~$\lastBlockOnMainChain{\mainChainName{}}$ on the main chain~$\mainChainName{}$ is a factored block created by miner~$\neg i$.
	According to~$\strategyNameProtocol{\defaultStrategyName}{\hebShortName}$ the next block should point to~$\lastBlockOnMainChain{\mainChainName{}}$ and extend~$\mainChainName{}$.
	Assume the scheduler lets miner~$i$ create the next block, and she creates a regular block that points to the same block as~$\lastBlockOnMainChain{\mainChainName{}}$, publishing it immediately.
	Denote the original and the new chain by~$\mainChainName{1}$ and~$\mainChainName{2}$, respectively.
	
	Chains~$\mainChainName{1}$ and~$\mainChainName{2}$ are the longest chains, i.e.,~$\funcLongestChains{} = \left\{\mainChainName{1},\mainChainName{2}\right\}$.
	Now, assume that miner~$i$ points her next created block to~$\lastBlockOnMainChain{\mainChainName{2}}$.
	As~$\weightWithChain{}{\mainChainName{1}} - \weightWithChain{}{\mainChainName{2}} = \factor - 1 > 0$ and miner~$\neg i$ follows~$\strategyNameProtocol{\pettyCompliantStrategyName}{\hebShortName}$, if miner~$\neg i$ is picked to create the next block she will deterministically choose to point it to~$\lastBlockOnMainChain{\mainChainName{2}}$.
	
	Therefore, regardless of which miner gets to create the next block,~$\lastBlockOnMainChain{\mainChainName{1}}$ will not be pointed by following blocks, effectively removing it from any future longest chain.
	That means miner~$i$ had managed to replace a factored block miner~$\neg i$ with her own regular block on the main chain, increasing her block weight while decreasing that of miner~$\neg i$, both effectively increasing her utility.
	
	If miner~$\neg i$ had followed~$\strategyNameProtocol{\defaultStrategyName}{\hebShortName}$ then she would have pointed her next block to~$\lastBlockOnMainChain{\mainChainName{2}}$ only with probability~$0.5$, as both~$\mainChainName{1}$ and are~$\mainChainName{2}$ are of the same length.
	That means that block~$\lastBlockOnMainChain{\mainChainName{1}}$ might still end on the main chain (depending on miner~$i$'s strategy and future block creations), resulting with lower expected block weight for miner~$i$.

\end{myExample}

\paragraph{Action space}
We represent miner~$i$'s action as a two-element tuple~$\actionContent$.
Element~$\actionFieldOneName$ describes how miner~$i$ interacts with the secret and public chains, and may contain one of the three values~---~\emph{publish},~\emph{adopt} and~\emph{wait}.
The value of~\emph{publish} indicates the miner publishes the blocks of the secret chain, a value of~\emph{adopt} indicates the miner abandons the secret chain and adopts the public chain, and~\emph{wait} indicates the miner does neither the former nor the latter.
Element~$\actionFieldTwoName$ takes a binary value, describing whether the next block the miner creates is factored.

\paragraph{State space}
We represent states as a three-element tuple~$\nodeContent$~\cite{sapirshtein2016optimal,zur2020efficient,gervais2016security,hou2019squirrl}.
Elements~$\fieldOneName$ and~$\fieldTwoName$ represent the contents of the secret and public chains, respectively.
Element~$\fieldThreeName$ has a binary value indicating whether miner~$\neg i$ is partitioned with regards to which of the two chains to extend.
Note that~$\fieldThreeName$ is true only if miner~$i$ had previously published her chain.

Previous work~\cite{sapirshtein2016optimal,gervais2016security,zur2020efficient} analyzes an infinite game and introduces a~\emph{truncation parameter}~$T$, capping the length of the secret and public chains.
The used state space includes counters of the blocks in the secret and public chains, resulting in a state space complexity of~$\mathcal{O}\left(T^2\right)$.

This state space does not fit~$\hebShortName$~-- we must maintain the order of block types on each chain, resulting in a state space complexity of~$\mathcal{O}\left(2^{\epochLength}\right)$.
As an example, consider the case where miner~$i$ has first created a factored block and then a regular block on the secret chain, followed by miner~$\neg i$ creating a factored block on the public chain.
In this case, miner~$i$ publishing the first factored block results in a fork, resembling~\emph{Lead-Stubborn Mining}~\cite{nayak2016stubborn}.
Now, consider the similar case where miner~$i$ has first created a regular block and then a factored block on the secret chain, followed by miner~$\neg i$ creating a factored block on the public chain.
In this case, miner~$i$ publishing the first regular block results with miner~$\neg i$ deterministically adopting it and forfeiting her recently-created factored block.
To distinguish these two cases, our state space includes the order of created block types.

\paragraph{Reward function}
Rewarding states are those where either the secret or the public chain are of~$\epochLength$ blocks, that is,~$\lengthMainChain{\fieldOneName} = \epochLength$ or~$\lengthMainChain{\fieldTwoName} = \epochLength$.
Note that this restricts miner~$i$ to strategies bounded by the creation of~$\epochLength$ blocks; strategies that exceed this limitation are feasible with negligible probability~\cite{gervais2016security}.
These states indicate the epoch conclusion and hence the reward distribution of~$\hebShortName$.

\begin{myNote}
	
	Previous work~\cite{sapirshtein2016optimal,gervais2016security,zur2020efficient} analyzed infinitely-repeating games, hence their MDPs had no final states.
	They iteratively optimize for the best-response strategy, stopping only when meeting a predefined precision criteria.
	This results with an approximation of the best-response strategy.
	
	The~$\hebShortName$ MDP has final states (where the longest chain is of length~$\epochLength$), and there are no state reoccurrences.
	Essentially, the~$\hebShortName$ MDP is a dynamically-programmed search on all possible strategies.
	As such, its result is not an approximation but the actual best-response strategy. 
	
\end{myNote}

		\subsection{$\protocolPropertyDoubleSpendAttack$ and $\protocolPropertySabotageAttack$}
		\label{sec:protocolPropertyDoubleSpendAttack}

We consider safety-violation attacks~\cite{rosenfeld2014hashrateAnalysis,karame2012fast,bonneau2016buy,mccorry2018smart,judmayer2019pay,conti2018survey,bano2019sok,redman2020bitcoingold51,bitcoin2020etc51,dcimit202051attacks} as scenarios where an attacker causes the system to make an invalid transition.
This can be achieved by creating and publishing an alternative chain, surpassing in length the main one.
The blocks of the original chain are then discarded, and the system state is reinstated according to the blocks on the alternative, new chain.

To mount this attack in~$\protocolNameBitcoin$ the attacker expends her resources on creating blocks to form the alternative chain; recall that each block costs its worth in reward to create (Eq.~\ref{eq:model_total_utilities_equals_expenses}).
Therefore, if the attack is successful, the attacker is fully compensated for her expenditures by the rewards from her created blocks.
As such, there is a threshold of required resources to mount this attack, but once met, the attack is \emph{free}.

The metric~$\protocolPropertyDoubleSpendAttack$ measures the minimal required balance for a miner to deploy such a refunded safety-violation attack on the system, assuming all other miners follow the prescribed strategy~$\strategyNameProtocol{\defaultStrategyName}{\protocolGenericName}$.
As shown in previous work~\cite{bonneau2016buy}, the attacker may rent vast computational resources for a short period of time or a moderate amount for longer periods.
We therefore measure the expected cost to create a single block, disregarding the attack duration and amplitude.

The~$\protocolPropertySabotageAttack$ metric removes the refund requirement, and simply represents the cost to create a block.

Formally, assume all miners follow~$\strategyNameProtocol{\defaultStrategyName}{\protocolGenericName}$.
Then, $\protocolPropertyDoubleSpendAttack$ is the minimal cost to create a block, guaranteeing full compensation should it be on the main chain, and $\protocolPropertySabotageAttack$ is this cost without any further compensation guarantees.

\subsubsection*{\textbf{Nakamoto}}

In~$\protocolNameBitcoin$ the cost to create each block is~$1$, hence~$\protocolPropertyDoubleSpendAttack = 1$.
All blocks produce the same reward, hence a miner cannot reduce the cost for a safety-violation attack by choosing to create less-rewarding blocks.
Therefore,~$\protocolPropertySabotageAttack = 1$.

\subsubsection*{\textbf{HEB}}

In equilibria the total external expenses are~$1 -\posRatio$ of the total balances, that is~$\balanceOfMinersExternal = \oneMinusPosRatioBrackets \balanceOfMinersAll$.
As~$\balanceOfMinersAll = \epochLength$ it follows that the required external expenses to create a single block is~$1 -\posRatio$.

As other miners create factored blocks, a miner also has to create a factored block to be fully compensated for her expenses, requiring additional spending of~$\posRatio$.
Hence, the cost to create a single block is~$1$, so~$\protocolPropertyDoubleSpendAttack = 1$.
That is,~$\hebShortName$ is as resilient to refunded attacks as~$\protocolNameBitcoin$.

Alternatively, a miner can disregard compensation and choose to create regular blocks, baring no additional internal expenses, and so~$\protocolPropertySabotageAttack =~1~-~\posRatio$, which is less secure than~$\protocolNameBitcoin$.
However, the lack of direct compensation makes these attacks very expensive, hence they are only available to a well-funded adversary with an exogenous utility, e.g., interested in destabilizing or short-selling a cryptocurrency.

Indeed, previous attack instances~\cite{redman2020bitcoingold51,bitcoin2020etc51,dcimit202051attacks,voell2020etc51} were on relatively-small systems and were of the former, refunded type.
We are not aware of such sabotage attacks happening in practice; this is possibly because the required expenditure surpasses the potential profit~\cite{conti2018survey,bano2019sok}.

	\subsection{$\protocolPropertyPermissiveness$}
	\label{sec:protocolPropertyPermissiveness}

Cryptocurrency protocols implement their own reward distribution mechanisms~\cite{chen2019axiomaticApproach}, and may choose to condition rewards on a miner having the internal system currency~$\internalCurrency$.
For example, in PoS systems~\cite{kiayias2017ouroboros,gilad2017algorand,eosio} owning~$\internalCurrency$ is a requisite, and miners without~$\internalCurrency$ cannot participate and get rewards.
In contrast, in PoW systems~\cite{nakamoto2008bitcoin,buterin2013ethereum} owning~$\internalCurrency$ does not affect reward eligibility.

Acquisition of~$\internalCurrency$ involves an update of the new currency ownership in the system state.
This requires the cooperation of the present system miners: 
They decide which state updates occur when placing user data in their created blocks.
So, if token ownership is a mining requirement, then a new miner wishing to participate requires the cooperation of existing miners.

Previous work considered either permissioned systems that require token ownership~\cite{gilad2017algorand,kiayias2017ouroboros,ef2020eth2Pos,goodman2014tezos} (some also require explicitly locking owned tokens as a collateral), or permissionless systems~\cite{nakamoto2008bitcoin,buterin2013ethereum,eyal2016bitcoin} that do not.

We generalize this binary differentiation to a continuous metric,~$\protocolPropertyPermissiveness$, measuring the revenue of a newly-joining miner without cooperation from the incumbents.
The metric is the ratio between the revenues of a miner where she failed or managed to obtain~$\internalCurrency$.

Formally, consider a miner~$i$ with balance~$\balanceSubSuper{i}{}$, and assume that all other miners follow~$\strategyNameProtocol{\defaultStrategyName}{\protocolGenericName}$.
Denote by~$\strategyNameProtocol{\noIcStrategyName}{\protocolGenericName}$ a strategy identical to~$\strategyNameProtocol{\defaultStrategyName}{\protocolGenericName}$ with the exception that the $\funcAllocateBudgt{i}$ implementation returns~$\left\langle 0,\balanceSubSuper{i}{} \right\rangle$.
Note this captures the inability of miner~$i$ to obtain~$\internalCurrency$.
Denote by~$\utilityNoIcfuncName{i}{\protocolGenericName}$ and by~$\utilityYesIcfuncName{i}{\protocolGenericName}$ the utility of miner~$i$ if she follows~$\strategyNameProtocol{\noIcStrategyName}{\protocolGenericName}$ and~$\strategyNameProtocol{\defaultStrategyName}{\protocolGenericName}$, respectively.
We then define $\protocolPropertyPermissiveness\triangleq\tfrac{\utilityNoIcfuncName{i}{\protocolGenericName}}{\utilityYesIcfuncName{i}{\protocolGenericName}}$.

If~$\protocolPropertyPermissiveness=1$ then a miner's utility is not affected by her inability to obtain~$\internalCurrency$, meaning the protocol is permissionless.
In contrast,~$\protocolPropertyPermissiveness=0$ indicates that a miner who cannot obtain~$\internalCurrency$ is completely prevented from participation.

\subsubsection*{\textbf{Nakamoto}}

As a pure PoW blockchain protocol, $\protocolNameBitcoin$ does not require~$\internalCurrency$ balance, and both strategies do not allocate any balance internally, and as such~$\protocolPropertyPermissiveness=1$.

\subsubsection*{\textbf{HEB}}

Deriving~$\utilityNoIcfuncName{i}{\protocolGenericName}$ and~$\utilityYesIcfuncName{i}{\protocolGenericName}$ leads to~$\protocolPropertyPermissiveness = \tfrac{1}{\balanceNormalizedSubSuper{i}{} + \factor \left(1 - \balanceNormalizedSubSuper{i}{}\right)}$~(Appendix~\ref{app:permissiveness}).

\begin{figure}[!t]
	\vspace{-0.5\baselineskip}	
	\centering
	\resizebox{\VERTICALFIGSCALEINPUT\textwidth}{!}{	{\renewcommand\normalsize{\LARGE}\normalsize			\input{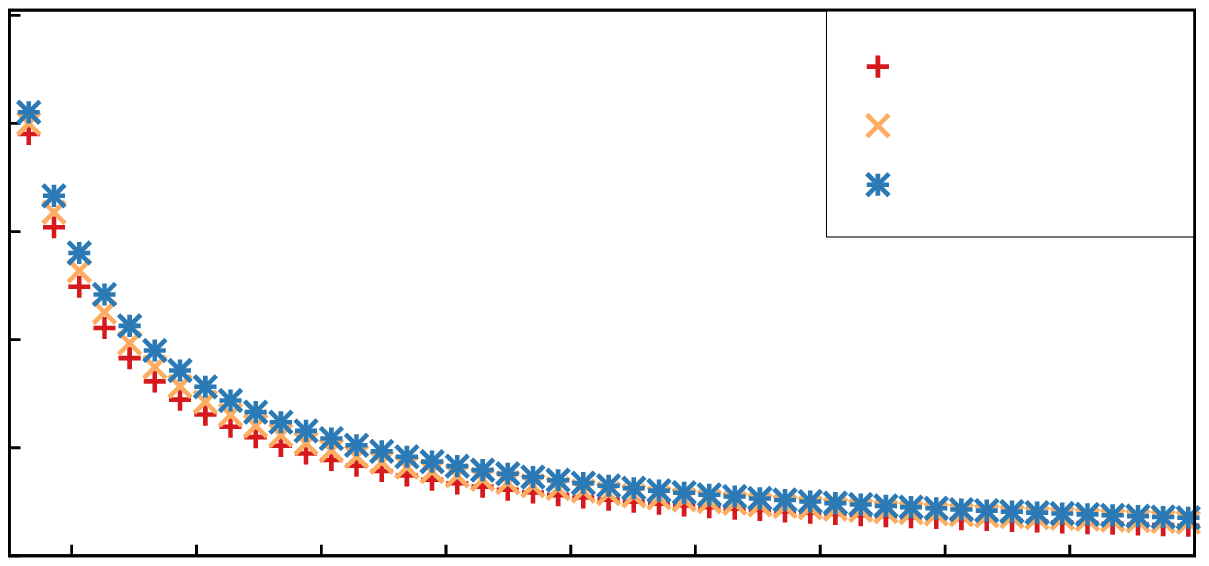}}}	
	\vspace{-1.2\baselineskip}	
	\caption{$\hebShortName$~$\protocolPropertyPermissiveness$.}
	\label{fig:censor_resilience}
	\vspace{-1.7\baselineskip}	
\end{figure}

Fig.~\ref{fig:censor_resilience} presents~$\protocolPropertyPermissiveness$ for different values of~$\balanceNormalizedSub{i}$ as a function of~$\factor$.
It shows that higher factor values~$\factor$ lead to lower~$\protocolPropertyPermissiveness$ values, making the system more permissioned.
It also shows that miners with higher relative balances are slightly less susceptible to these effects.

As expected, higher values of~$\factor$ increase the reward of creating factored blocks, and failing to do so results in lower income.
Additionally, a higher relative balance enables creating more blocks, decreasing the overall block weight of the other miners, making the miner's revenue less susceptible.

Although failure to obtain~$\internalCurrency$ results with a lower reward, it still enables the new miner to create blocks herself, removing the requirement for cooperation from the incumbents in the subsequent epochs.
The reduced reward in the first epoch is a one-time cost that is negligible for a long-running miner.

This is a significant and qualitative improvement over permissioned systems, where a miner that cannot obtain tokens~\cite{gilad2017algorand} or lock them as a collateral~\cite{kiayias2017ouroboros,goodman2014tezos,ef2020eth2Pos} is blocked from all future participation.

	\subsection{$\protocolPropertyPow$}
	\label{sec:protocolPropertyPow}		

$\protocolPropertyPow$ evaluates the external expenditure of the protocol, and lower values indicate a lower environmental impact.
Formally, assume all miners follow~$\strategyNameProtocol{\defaultStrategyName}{\protocolGenericName}$.
$\protocolPropertyPow$ is the total of miner external expenses, measured in~$\externalCurrency$, normalized by the epoch length, i.e,~$\protocolPropertyPow\triangleq\tfrac{\balanceOfMinersExternal}{\epochLength}$.

\subsubsection*{\textbf{Nakamoto}}

The total miner expenses are~$\balanceOfMinersExternal=\epochLength$, so~$\protocolPropertyPow=1$.

\subsubsection*{\textbf{HEB}}

When all miners follow~$\strategyNameProtocol{\defaultStrategyName}{\hebShortName}$ then~$\balanceOfMinersExternal = \oneMinusPosRatioBrackets \balanceOfMinersAll$ and~$\protocolPropertyPow = 1 - \posRatio$.
This is the main advantage of~$\hebShortName$ over~$\protocolNameBitcoin$.

\section{Practical Parameters}
\label{sec:concrete_instantiation_parameters}

As we have seen, $\hebShortName$ presents several knobs for the system designer.
Longer epoch length~$\epochLength$ improves~$\protocolPropertyDecentralization$, however, also means that reward distribution takes longer.
Higher factored block weight~$\factor$ improves~$\protocolPropertyAdversarial$ at the expense of~$\protocolPropertyPermissiveness$.
Higher internal expenditure rate~$\posRatio$ reduces the external expenditures, but makes the system less robust against rational miners, and reduces the required costs for sabotage attacks.

The choice of parameter values should be according to the desired system properties.
Each system has different goals, and we emphasize that determining optimal parameter values is not a goal of this work.
Nevertheless, in this section we consider a specific parameter choice.
We compare this instantiation to Bitcoin, and use the latter's miner balance distribution~\cite{blockchain2020hashrateDistribution} as a representative example.

We choose the external cost parameter to be~$\posRatio = 0.5$, the epoch length to be~$\epochLength = 1000$, and the factor to be~$\factor = 20$.

First and foremost, this setting results with only half of the external resource consumption ($\protocolPropertyPow=0.5$), which is equivalent to reducing the entire power consumption of Denmark~\cite{de2020bitcoin,countryEnergyConsumption2020citypopulation}.
This choice incentivizes rational miners with up to~$0.2$ relative balance to follow the prescribed strategy ($\protocolPropertyAdversarial=0.2$) down from Bitcoin's~$0.232$~\cite{sapirshtein2016optimal}.
Note that the largest miner in Bitcoin has a relative balance of~$0.16$~\cite{blockchain2020hashrateDistribution}, so rational miners would follow the prescribed, honest mining behavior.

With Bitcoin's expected block creation interval of 10 minutes, having epochs of~$\epochLength = 1000$ means mining rewards are distributed on a weekly basis.
This is longer than the seventeen hours~\cite{nakamoto2008bitcoin,btcWiki2020blockchain} miners wait today, but arguably still an acceptable time frame.

The threshold for a refunded safety-violation ($\protocolPropertyDoubleSpendAttack = 1$), is as in Bitcoin, but the non-refunded variation is twice as cheap ($\protocolPropertySabotageAttack = 0.5$).
We note that we are not aware of attacks of either type on prominent cryptocurrency systems, and that the non-refunded type is unlikely due to the lack of endogenous compensation~(cf.~\S\ref{sec:protocolPropertyDoubleSpendAttack}).

In regards to permissiveness, a miner with a relative balance~$\balanceNormalizedSubSuper{i}{}=0.1$ that fails to obtain any~$\internalCurrency$ due to incumbents is expected to get~$5\%$ of what she would have if she had~$\internalCurrency$ ($\protocolPropertyPermissiveness = 0.05$).
Recall this is at most a negligible one-time cost~(cf.~\S\ref{sec:protocolPropertyPermissiveness}).

Finally, for the current Bitcoin miner balance distribution~\cite{blockchain2020hashrateDistribution} the maximal relative advantage from size differences is 0.1\% ($\protocolPropertyDecentralization = 0.001$).
We consider modifications to further decrease this value in Appendix~\ref{app:practical_considerations}.


\vspace{-0.4\baselineskip}
\section{Conclusion}

We propose a new PoW paradigm that utilizes internal expenditure as a balancing mechanism.
We present~$\hebShortName$~-- a generalization of Nakamoto's protocol that allows its designer to tune external resource expenditure. 
We link the values of internal and external currencies, and formalize evaluation metrics including a blockchain's resilience to sabotage and revenue-seeking attacks and permissiveness on a continuous scale. 
We explore the trade-offs in parameter choice.
We propose practical parameters based on Bitcoin's ecosystem that cut down by half the PoW expenditure (equivalent to reducing the power consumption of an entire country) while maintaining similar security guarantees against practical attacks.

Natural questions that arise from the discovery of~$\hebShortName$ are what should be the security target for cryptocurrency protocols, how to set the parameters dynamically, and how to govern them~\cite{goodman2014tezos,reijers2016governance}.
Beyond these,~$\hebShortName$ extends the design space of decentralized systems, and is a step forward in realizing secure PoW systems with a sustainable environmental impact.

%

\bibliographystyle{ACM-Reference-Format}

\bibliography{btc-new}


\begin{thebibliography}{160}


\ifx \showCODEN    \undefined \def \showCODEN     #1{\unskip}     \fi
\ifx \showDOI      \undefined \def \showDOI       #1{#1}\fi
\ifx \showISBNx    \undefined \def \showISBNx     #1{\unskip}     \fi
\ifx \showISBNxiii \undefined \def \showISBNxiii  #1{\unskip}     \fi
\ifx \showISSN     \undefined \def \showISSN      #1{\unskip}     \fi
\ifx \showLCCN     \undefined \def \showLCCN      #1{\unskip}     \fi
\ifx \shownote     \undefined \def \shownote      #1{#1}          \fi
\ifx \showarticletitle \undefined \def \showarticletitle #1{#1}   \fi
\ifx \showURL      \undefined \def \showURL       {\relax}        \fi
\providecommand\bibfield[2]{#2}
\providecommand\bibinfo[2]{#2}
\providecommand\natexlab[1]{#1}
\providecommand\showeprint[2][]{arXiv:#2}

\bibitem[\protect\citeauthoryear{{Alan L. Johnson}}{{Alan L. Johnson}}{2019}]%
        {johnson2019thenegativeenvironmentalimpactofbitcoin}
\bibfield{author}{\bibinfo{person}{{Alan L. Johnson}}.}
  \bibinfo{year}{2019}\natexlab{}.
\newblock \bibinfo{title}{The Negative Environmental Impact of Bitcoin}.
\newblock
\newblock
\urldef\tempurl%
\url{https://sci-techmaven.io/superposition/society/the-negative-environmental-impact-of-bitcoin-yB4JtbL8VUi37C_HH_lsTw/}
\showURL{%
\tempurl}


\bibitem[\protect\citeauthoryear{{Alex Hern}}{{Alex Hern}}{2019}]%
        {hern2019bitcoinenergyusageishugewecantaffordtoignoreit}
\bibfield{author}{\bibinfo{person}{{Alex Hern}}.}
  \bibinfo{year}{2019}\natexlab{}.
\newblock \bibinfo{title}{Bitcoin’s energy usage is huge – we can't afford
  to ignore it}.
\newblock
\newblock
\urldef\tempurl%
\url{https://www.theguardian.com/technology/2018/jan/17/bitcoin-electricity-usage-huge-climate-cryptocurrency}
\showURL{%
\tempurl}


\bibitem[\protect\citeauthoryear{Ali, Blankstein, Freedman, Galabru, Gupta,
  Nelson, Soslow, and Stanley}{Ali et~al\mbox{.}}{[n.d.]}]%
        {ali2020pox}
\bibfield{author}{\bibinfo{person}{M. Ali}, \bibinfo{person}{Aaron Blankstein},
  \bibinfo{person}{M. Freedman}, \bibinfo{person}{Ludovic Galabru},
  \bibinfo{person}{D. Gupta}, \bibinfo{person}{Jude Nelson},
  \bibinfo{person}{Jesse Soslow}, {and} \bibinfo{person}{Patrick Stanley}.}
  \bibinfo{year}{[n.d.]}\natexlab{}.
\newblock \bibinfo{booktitle}{\emph{PoX: Proof of Transfer Mining with
  Bitcoin}}.
\newblock
\urldef\tempurl%
\url{https://uploads-ssl.webflow.com/5e7b1a27d160ce49af1c24e1/5f1596b12bcc0800f3dcadcd_pox.pdf}
\showURL{%
\tempurl}


\bibitem[\protect\citeauthoryear{Allais}{Allais}{1966}]%
        {allais1966restatement}
\bibfield{author}{\bibinfo{person}{Maurice Allais}.}
  \bibinfo{year}{1966}\natexlab{}.
\newblock \showarticletitle{A restatement of the quantity theory of money}.
\newblock \bibinfo{journal}{\emph{The American economic review}}
  (\bibinfo{year}{1966}).
\newblock


\bibitem[\protect\citeauthoryear{andes}{andes}{2011}]%
        {Andes2011kryptonite}
\bibfield{author}{\bibinfo{person}{andes}.} \bibinfo{year}{2011}\natexlab{}.
\newblock \bibinfo{title}{Bitcoin's kryptonite: The $51\%$ attack.}
\newblock
\newblock
\urldef\tempurl%
\url{https://bitcointalk.org/index.php?topic=12435}
\showURL{%
\tempurl}


\bibitem[\protect\citeauthoryear{Arnosti and Weinberg}{Arnosti and
  Weinberg}{2018}]%
        {arnosti2018bitcoin}
\bibfield{author}{\bibinfo{person}{Nick Arnosti} {and}
  \bibinfo{person}{S~Matthew Weinberg}.} \bibinfo{year}{2018}\natexlab{}.
\newblock \showarticletitle{Bitcoin: A natural oligopoly}.
\newblock \bibinfo{journal}{\emph{arXiv preprint arXiv:1811.08572}}
  (\bibinfo{year}{2018}).
\newblock


\bibitem[\protect\citeauthoryear{Aspnes, Jackson, and Krishnamurthy}{Aspnes
  et~al\mbox{.}}{2005}]%
        {aspnes2005exposing}
\bibfield{author}{\bibinfo{person}{James Aspnes}, \bibinfo{person}{Collin
  Jackson}, {and} \bibinfo{person}{Arvind Krishnamurthy}.}
  \bibinfo{year}{2005}\natexlab{}.
\newblock \showarticletitle{Exposing computationally-challenged {Byzantine}
  impostors}.
\newblock \bibinfo{journal}{\emph{Department of Computer Science, Yale
  University, New Haven, CT, Tech. Rep}} (\bibinfo{year}{2005}).
\newblock


\bibitem[\protect\citeauthoryear{Aumann}{Aumann}{1987}]%
        {aumann1987correlated}
\bibfield{author}{\bibinfo{person}{Robert~J Aumann}.}
  \bibinfo{year}{1987}\natexlab{}.
\newblock \showarticletitle{Correlated equilibrium as an expression of Bayesian
  rationality}.
\newblock \bibinfo{journal}{\emph{Econometrica: Journal of the Econometric
  Society}} (\bibinfo{year}{1987}).
\newblock


\bibitem[\protect\citeauthoryear{Aura, Nikander, and Leiwo}{Aura
  et~al\mbox{.}}{2000}]%
        {aura2000resistant}
\bibfield{author}{\bibinfo{person}{Tuomas Aura}, \bibinfo{person}{Pekka
  Nikander}, {and} \bibinfo{person}{Jussipekka Leiwo}.}
  \bibinfo{year}{2000}\natexlab{}.
\newblock \showarticletitle{DOS-resistant authentication with client puzzles}.
  In \bibinfo{booktitle}{\emph{International workshop on security protocols}}.
  Springer.
\newblock


\bibitem[\protect\citeauthoryear{Back}{Back}{2002}]%
        {back2002hashcash}
\bibfield{author}{\bibinfo{person}{Adam Back}.}
  \bibinfo{year}{2002}\natexlab{}.
\newblock \bibinfo{booktitle}{\emph{Hashcash~-- A Denial of Service
  Counter-Measure}}.
\newblock
\urldef\tempurl%
\url{http://www.cypherspace.org/hashcash/hashcash.pdf}
\showURL{%
\tempurl}


\bibitem[\protect\citeauthoryear{Ball, Rosen, Sabin, and Vasudevan}{Ball
  et~al\mbox{.}}{2017}]%
        {ball2017proofs}
\bibfield{author}{\bibinfo{person}{Marshall Ball}, \bibinfo{person}{Alon
  Rosen}, \bibinfo{person}{Manuel Sabin}, {and}
  \bibinfo{person}{Prashant~Nalini Vasudevan}.}
  \bibinfo{year}{2017}\natexlab{}.
\newblock \showarticletitle{Proofs of Useful Work}.
\newblock \bibinfo{journal}{\emph{IACR Cryptology ePrint Archive}}
  (\bibinfo{year}{2017}).
\newblock


\bibitem[\protect\citeauthoryear{Bano, Sonnino, Al-Bassam, Azouvi, McCorry,
  Meiklejohn, and Danezis}{Bano et~al\mbox{.}}{2019}]%
        {bano2019sok}
\bibfield{author}{\bibinfo{person}{Shehar Bano}, \bibinfo{person}{Alberto
  Sonnino}, \bibinfo{person}{Mustafa Al-Bassam}, \bibinfo{person}{Sarah
  Azouvi}, \bibinfo{person}{Patrick McCorry}, \bibinfo{person}{Sarah
  Meiklejohn}, {and} \bibinfo{person}{George Danezis}.}
  \bibinfo{year}{2019}\natexlab{}.
\newblock \showarticletitle{SoK: Consensus in the age of blockchains}. In
  \bibinfo{booktitle}{\emph{Proceedings of ACM AFT}}.
\newblock


\bibitem[\protect\citeauthoryear{Barber}{Barber}{2019}]%
        {bitcoinMiningCoal}
\bibfield{author}{\bibinfo{person}{Gregory Barber}.}
  \bibinfo{year}{2019}\natexlab{}.
\newblock \bibinfo{booktitle}{\emph{Bitcoin's Climate Impact Is Global. The
  Cures Are Local.}}
\newblock
\urldef\tempurl%
\url{https://www.wired.com/story/bitcoins-climate-impact-global-cures-local/}
\showURL{%
\tempurl}


\bibitem[\protect\citeauthoryear{Baudet, Ching, Chursin, Danezis, Garillot, Li,
  Malkhi, Naor, Perelman, and Sonnino}{Baudet et~al\mbox{.}}{2018}]%
        {baudet2018state}
\bibfield{author}{\bibinfo{person}{Mathieu Baudet}, \bibinfo{person}{Avery
  Ching}, \bibinfo{person}{Andrey Chursin}, \bibinfo{person}{George Danezis},
  \bibinfo{person}{Fran{\c{c}}ois Garillot}, \bibinfo{person}{Zekun Li},
  \bibinfo{person}{Dahlia Malkhi}, \bibinfo{person}{Oded Naor},
  \bibinfo{person}{Dmitri Perelman}, {and} \bibinfo{person}{Alberto Sonnino}.}
  \bibinfo{year}{2018}\natexlab{}.
\newblock \bibinfo{title}{State machine replication in the Libra Blockchain}.
\newblock
\newblock


\bibitem[\protect\citeauthoryear{Becker, Breuker, Heide, Holler, Rauer, and
  Böhme}{Becker et~al\mbox{.}}{2012}]%
        {becker2012integrity}
\bibfield{author}{\bibinfo{person}{Jörg Becker}, \bibinfo{person}{Dominic
  Breuker}, \bibinfo{person}{Tobias Heide}, \bibinfo{person}{Justus Holler},
  \bibinfo{person}{Hans~Peter Rauer}, {and} \bibinfo{person}{Rainer Böhme}.}
  \bibinfo{year}{2012}\natexlab{}.
\newblock \showarticletitle{Can We Afford Integrity by Proof-of-Work? Scenarios
  Inspired by the {Bitcoin} Currency}. In \bibinfo{booktitle}{\emph{Workshop on
  the Economics of Information Security}}. \bibinfo{address}{Berlin,
  Deutschland}.
\newblock


\bibitem[\protect\citeauthoryear{Bentov, Hub{\'a}cek, Moran, and Nadler}{Bentov
  et~al\mbox{.}}{2017}]%
        {bentov2017tortoise}
\bibfield{author}{\bibinfo{person}{Iddo Bentov}, \bibinfo{person}{Pavel
  Hub{\'a}cek}, \bibinfo{person}{Tal Moran}, {and} \bibinfo{person}{Asaf
  Nadler}.} \bibinfo{year}{2017}\natexlab{}.
\newblock \showarticletitle{Tortoise and Hares Consensus: the Meshcash
  Framework for Incentive-Compatible, Scalable Cryptocurrencies.}
\newblock \bibinfo{journal}{\emph{IACR Cryptology ePrint Archive}}
  (\bibinfo{year}{2017}).
\newblock


\bibitem[\protect\citeauthoryear{Bentov, Lee, Mizrahi, and Rosenfeld}{Bentov
  et~al\mbox{.}}{2014}]%
        {bentov2014proof}
\bibfield{author}{\bibinfo{person}{Iddo Bentov}, \bibinfo{person}{Charles Lee},
  \bibinfo{person}{Alex Mizrahi}, {and} \bibinfo{person}{Meni Rosenfeld}.}
  \bibinfo{year}{2014}\natexlab{}.
\newblock \showarticletitle{Proof of Activity: Extending Bitcoin's Proof of
  Work via Proof of Stake.}
\newblock \bibinfo{journal}{\emph{IACR Cryptology ePrint Archive}}
  (\bibinfo{year}{2014}).
\newblock


\bibitem[\protect\citeauthoryear{{Bitcoin Cash community}}{{Bitcoin Cash
  community}}{[n.d.]}]%
        {bch2018site}
\bibfield{author}{\bibinfo{person}{{Bitcoin Cash community}}.}
  \bibinfo{year}{[n.d.]}\natexlab{}.
\newblock \bibinfo{title}{Bitcoin Cash Site}.
\newblock
\newblock
\urldef\tempurl%
\url{https://www.bitcoincash.org/}
\showURL{%
\tempurl}


\bibitem[\protect\citeauthoryear{Bitcoin.com}{Bitcoin.com}{2020}]%
        {bitcoin2020etc51}
\bibfield{author}{\bibinfo{person}{Bitcoin.com}.}
  \bibinfo{year}{2020}\natexlab{}.
\newblock \bibinfo{title}{\$5.6 Million Double Spent: ETC Team Finally
  Acknowledges the 51\% Attack on Network}.
\newblock
\newblock
\urldef\tempurl%
\url{https://news.bitcoin.com/5-6-million-stolen-as-etc-team-finally-acknowledge-the-51-attack-on-network/}
\showURL{%
\tempurl}


\bibitem[\protect\citeauthoryear{{blockchain.info}}{{blockchain.info}}{2019}]%
        {blockchain2019difficulty}
\bibfield{author}{\bibinfo{person}{{blockchain.info}}.}
  \bibinfo{year}{2019}\natexlab{}.
\newblock \bibinfo{title}{Bitcoin Difficulty Chart}.
\newblock
\newblock
\urldef\tempurl%
\url{https://www.blockchain.com/charts/difficulty?timespan=all}
\showURL{%
\tempurl}


\bibitem[\protect\citeauthoryear{{blockchain.info}}{{blockchain.info}}{2020}]%
        {bitcoin2020price}
\bibfield{author}{\bibinfo{person}{{blockchain.info}}.}
  \bibinfo{year}{2020}\natexlab{}.
\newblock \bibinfo{title}{Bitcoin Market Price}.
\newblock
\newblock
\urldef\tempurl%
\url{https://www.blockchain.com/charts/market-price}
\showURL{%
\tempurl}


\bibitem[\protect\citeauthoryear{blockchain.info}{blockchain.info}{2020}]%
        {blockchain2020hashrateDistribution}
\bibfield{author}{\bibinfo{person}{blockchain.info}.}
  \bibinfo{year}{2020}\natexlab{}.
\newblock \bibinfo{title}{Hashrate Distribution}.
\newblock
\newblock
\urldef\tempurl%
\url{https://www.blockchain.com/charts/pools}
\showURL{%
\tempurl}


\bibitem[\protect\citeauthoryear{Bonneau}{Bonneau}{2016}]%
        {bonneau2016buy}
\bibfield{author}{\bibinfo{person}{Joseph Bonneau}.}
  \bibinfo{year}{2016}\natexlab{}.
\newblock \showarticletitle{Why buy when you can rent?}. In
  \bibinfo{booktitle}{\emph{International Conference on Financial Cryptography
  and Data Security}}.
\newblock


\bibitem[\protect\citeauthoryear{Bonneau, Miller, Clark, Narayanan, Kroll, and
  Felten}{Bonneau et~al\mbox{.}}{2015}]%
        {bonneau2015sok}
\bibfield{author}{\bibinfo{person}{Joseph Bonneau}, \bibinfo{person}{Andrew
  Miller}, \bibinfo{person}{Jeremy Clark}, \bibinfo{person}{Arvind Narayanan},
  \bibinfo{person}{Joshua~A. Kroll}, {and} \bibinfo{person}{Edward~W. Felten}.}
  \bibinfo{year}{2015}\natexlab{}.
\newblock \showarticletitle{Research perspectives on {Bitcoin} and
  second-generation cryptocurrencies}. In \bibinfo{booktitle}{\emph{Symposium
  on Security and Privacy}}.
\newblock


\bibitem[\protect\citeauthoryear{Borg, Baumbach, and Glazer}{Borg
  et~al\mbox{.}}{1983}]%
        {borg1983message}
\bibfield{author}{\bibinfo{person}{Anita Borg}, \bibinfo{person}{Jim Baumbach},
  {and} \bibinfo{person}{Sam Glazer}.} \bibinfo{year}{1983}\natexlab{}.
\newblock \showarticletitle{A message system supporting fault tolerance}.
\newblock \bibinfo{journal}{\emph{ACM SIGOPS Operating Systems Review}}
  \bibinfo{volume}{17}, \bibinfo{number}{5} (\bibinfo{year}{1983}),
  \bibinfo{pages}{90--99}.
\newblock


\bibitem[\protect\citeauthoryear{Brock}{Brock}{1968}]%
        {brock1968implications}
\bibfield{author}{\bibinfo{person}{Timothy~C Brock}.}
  \bibinfo{year}{1968}\natexlab{}.
\newblock \showarticletitle{Implications of commodity theory for value change}.
\newblock In \bibinfo{booktitle}{\emph{Psychological foundations of
  attitudes}}. \bibinfo{publisher}{Elsevier}, \bibinfo{pages}{243--275}.
\newblock


\bibitem[\protect\citeauthoryear{Brock and Brannon}{Brock and Brannon}{1992}]%
        {brock1992liberalization}
\bibfield{author}{\bibinfo{person}{Timothy~C Brock} {and}
  \bibinfo{person}{Laura~A Brannon}.} \bibinfo{year}{1992}\natexlab{}.
\newblock \showarticletitle{Liberalization of commodity theory}.
\newblock \bibinfo{journal}{\emph{Basic and Applied Social Psychology}}
  \bibinfo{volume}{13}, \bibinfo{number}{1} (\bibinfo{year}{1992}),
  \bibinfo{pages}{135--144}.
\newblock


\bibitem[\protect\citeauthoryear{Burniske and Tatar}{Burniske and
  Tatar}{2017}]%
        {burniske2017cryptoassets}
\bibfield{author}{\bibinfo{person}{Chris Burniske} {and} \bibinfo{person}{Jack
  Tatar}.} \bibinfo{year}{2017}\natexlab{}.
\newblock \bibinfo{booktitle}{\emph{Cryptoassets: The Innovative Investor's
  Guide to Bitcoin and Beyond}}.
\newblock \bibinfo{publisher}{McGraw Hill Professional}.
\newblock


\bibitem[\protect\citeauthoryear{Buterin}{Buterin}{2013}]%
        {buterin2013ethereum}
\bibfield{author}{\bibinfo{person}{Vitalik Buterin}.}
  \bibinfo{year}{2013}\natexlab{}.
\newblock \bibinfo{title}{A Next Generation Smart Contract \& Decentralized
  Application Platform}.
\newblock
\newblock
\urldef\tempurl%
\url{https://www.ethereum.org/ pdfs/EthereumWhitePaper.pdf/}
\showURL{%
\tempurl}


\bibitem[\protect\citeauthoryear{Buterin}{Buterin}{2017}]%
        {buterin2017price}
\bibfield{author}{\bibinfo{person}{Vitalik Buterin}.}
  \bibinfo{year}{2017}\natexlab{}.
\newblock \bibinfo{booktitle}{\emph{On Medium-of-Exchange Token Valuations}}.
\newblock
\urldef\tempurl%
\url{https://vitalik.ca/general/2017/10/17/moe.html}
\showURL{%
\tempurl}


\bibitem[\protect\citeauthoryear{Buterin, Conner, Dudley, Slipper, Norden, and
  Bakhta}{Buterin et~al\mbox{.}}{2020}]%
        {buterin2020eip1559}
\bibfield{author}{\bibinfo{person}{Vitalik Buterin}, \bibinfo{person}{Eric
  Conner}, \bibinfo{person}{Rick Dudley}, \bibinfo{person}{Matthew Slipper},
  \bibinfo{person}{Ian Norden}, {and} \bibinfo{person}{Abdelhamid Bakhta}.}
  \bibinfo{year}{2020}\natexlab{}.
\newblock \bibinfo{booktitle}{\emph{EIP 1559 : Fee market change for ETH 1.0
  chain}}.
\newblock
\urldef\tempurl%
\url{https://github.com/ethereum/EIPs/blob/master/EIPS/eip-1559.md}
\showURL{%
\tempurl}


\bibitem[\protect\citeauthoryear{Cachin}{Cachin}{2016}]%
        {cachin2016architecture}
\bibfield{author}{\bibinfo{person}{Christian Cachin}.}
  \bibinfo{year}{2016}\natexlab{}.
\newblock \showarticletitle{Architecture of the hyperledger blockchain fabric}.
  In \bibinfo{booktitle}{\emph{Workshop on distributed cryptocurrencies and
  consensus ledgers}}.
\newblock


\bibitem[\protect\citeauthoryear{Carlsten, Kalodner, Weinberg, and
  Narayanan}{Carlsten et~al\mbox{.}}{2016}]%
        {arvindcutoff}
\bibfield{author}{\bibinfo{person}{Miles Carlsten}, \bibinfo{person}{Harry
  Kalodner}, \bibinfo{person}{S.~Matthew Weinberg}, {and}
  \bibinfo{person}{Arvind Narayanan}.} \bibinfo{year}{2016}\natexlab{}.
\newblock \showarticletitle{On the Instability of Bitcoin Without the Block
  Reward}. In \bibinfo{booktitle}{\emph{Proceedings of the 2016 ACM CCS}}.
\newblock


\bibitem[\protect\citeauthoryear{Castro, Liskov, et~al\mbox{.}}{Castro
  et~al\mbox{.}}{1999}]%
        {castro1999practical}
\bibfield{author}{\bibinfo{person}{Miguel Castro}, \bibinfo{person}{Barbara
  Liskov}, {et~al\mbox{.}}} \bibinfo{year}{1999}\natexlab{}.
\newblock \showarticletitle{Practical Byzantine fault tolerance}. In
  \bibinfo{booktitle}{\emph{Practical Byzantine fault tolerance}}.
\newblock


\bibitem[\protect\citeauthoryear{Catalini and Gans}{Catalini and Gans}{2016}]%
        {catalini2016some}
\bibfield{author}{\bibinfo{person}{Christian Catalini} {and}
  \bibinfo{person}{Joshua~S Gans}.} \bibinfo{year}{2016}\natexlab{}.
\newblock \bibinfo{booktitle}{\emph{Some simple economics of the blockchain}}.
\newblock \bibinfo{type}{{T}echnical {R}eport}. \bibinfo{institution}{National
  Bureau of Economic Research}.
\newblock


\bibitem[\protect\citeauthoryear{Catalini and Gans}{Catalini and Gans}{2018}]%
        {catalini2018initial}
\bibfield{author}{\bibinfo{person}{Christian Catalini} {and}
  \bibinfo{person}{Joshua~S Gans}.} \bibinfo{year}{2018}\natexlab{}.
\newblock \bibinfo{booktitle}{\emph{Initial coin offerings and the value of
  crypto tokens}}.
\newblock \bibinfo{type}{{T}echnical {R}eport}. \bibinfo{institution}{National
  Bureau of Economic Research}.
\newblock


\bibitem[\protect\citeauthoryear{Chen, Xu, Shah, Gao, Lu, and Shi}{Chen
  et~al\mbox{.}}{2017}]%
        {chen2017security}
\bibfield{author}{\bibinfo{person}{Lin Chen}, \bibinfo{person}{Lei Xu},
  \bibinfo{person}{Nolan Shah}, \bibinfo{person}{Zhimin Gao},
  \bibinfo{person}{Yang Lu}, {and} \bibinfo{person}{Weidong Shi}.}
  \bibinfo{year}{2017}\natexlab{}.
\newblock \showarticletitle{On Security Analysis of Proof-of-Elapsed-Time
  (PoET)}. In \bibinfo{booktitle}{\emph{International Symposium on
  Stabilization, Safety, and Security of Distributed Systems}}. Springer.
\newblock


\bibitem[\protect\citeauthoryear{Chen, Papadimitriou, and Roughgarden}{Chen
  et~al\mbox{.}}{2019}]%
        {chen2019axiomaticApproach}
\bibfield{author}{\bibinfo{person}{Xi Chen}, \bibinfo{person}{Christos
  Papadimitriou}, {and} \bibinfo{person}{Tim Roughgarden}.}
  \bibinfo{year}{2019}\natexlab{}.
\newblock \showarticletitle{An Axiomatic Approach to Block Rewards}. In
  \bibinfo{booktitle}{\emph{Proceedings of ACM AFT}}.
\newblock


\bibitem[\protect\citeauthoryear{Chernoff et~al\mbox{.}}{Chernoff
  et~al\mbox{.}}{1952}]%
        {chernoff1952measure}
\bibfield{author}{\bibinfo{person}{Herman Chernoff} {et~al\mbox{.}}}
  \bibinfo{year}{1952}\natexlab{}.
\newblock \showarticletitle{A measure of asymptotic efficiency for tests of a
  hypothesis based on the sum of observations}.
\newblock \bibinfo{journal}{\emph{The Annals of Mathematical Statistics}}
  (\bibinfo{year}{1952}).
\newblock


\bibitem[\protect\citeauthoryear{Ciaian, Rajcaniova, and Kancs}{Ciaian
  et~al\mbox{.}}{2016}]%
        {ciaian2016economics}
\bibfield{author}{\bibinfo{person}{Pavel Ciaian}, \bibinfo{person}{Miroslava
  Rajcaniova}, {and} \bibinfo{person}{d’Artis Kancs}.}
  \bibinfo{year}{2016}\natexlab{}.
\newblock \showarticletitle{The economics of BitCoin price formation}.
\newblock \bibinfo{journal}{\emph{Applied Economics}} (\bibinfo{year}{2016}).
\newblock


\bibitem[\protect\citeauthoryear{{citypopulation.de}}{{citypopulation.de}}{[n.d.]}]%
        {countryEnergyConsumption2020citypopulation}
\bibfield{author}{\bibinfo{person}{{citypopulation.de}}.}
  \bibinfo{year}{[n.d.]}\natexlab{}.
\newblock \bibinfo{title}{Electricity Consumption}.
\newblock
\newblock
\urldef\tempurl%
\url{https://www.citypopulation.de/en/world/bymap/ElectricityConsumption.html}
\showURL{%
\tempurl}


\bibitem[\protect\citeauthoryear{Clarke et~al\mbox{.}}{Clarke
  et~al\mbox{.}}{1988}]%
        {clarke1988keynesianism}
\bibfield{author}{\bibinfo{person}{Simon Clarke} {et~al\mbox{.}}}
  \bibinfo{year}{1988}\natexlab{}.
\newblock \bibinfo{booktitle}{\emph{Keynesianism, Monetarism and the Crisis of
  the State}}.
\newblock \bibinfo{publisher}{Elgar Aldershot}.
\newblock


\bibitem[\protect\citeauthoryear{{coindesk}}{{coindesk}}{2020}]%
        {ether2020price}
\bibfield{author}{\bibinfo{person}{{coindesk}}.}
  \bibinfo{year}{2020}\natexlab{}.
\newblock \bibinfo{title}{Ethereum Price Chart}.
\newblock
\newblock
\urldef\tempurl%
\url{https://www.coindesk.com/price/ethereum}
\showURL{%
\tempurl}


\bibitem[\protect\citeauthoryear{{coinmarketcap.com}}{{coinmarketcap.com}}{[n.d.]}]%
        {coinmarketcap2020circulationBTC}
\bibfield{author}{\bibinfo{person}{{coinmarketcap.com}}.}
  \bibinfo{year}{[n.d.]}\natexlab{}.
\newblock \bibinfo{title}{Bitcoins in Circulation}.
\newblock
\newblock
\urldef\tempurl%
\url{https://coinmarketcap.com/currencies/bitcoin/}
\showURL{%
\tempurl}


\bibitem[\protect\citeauthoryear{coinmarketcap.com}{coinmarketcap.com}{[n.d.]}]%
        {coinmarketcap2020circulationETH}
\bibfield{author}{\bibinfo{person}{coinmarketcap.com}.}
  \bibinfo{year}{[n.d.]}\natexlab{}.
\newblock \bibinfo{title}{Ethereum in Circulation}.
\newblock
\newblock
\urldef\tempurl%
\url{https://coinmarketcap.com/currencies/ethereum/}
\showURL{%
\tempurl}


\bibitem[\protect\citeauthoryear{Conti, Kumar, Lal, and Ruj}{Conti
  et~al\mbox{.}}{2018}]%
        {conti2018survey}
\bibfield{author}{\bibinfo{person}{Mauro Conti}, \bibinfo{person}{E~Sandeep
  Kumar}, \bibinfo{person}{Chhagan Lal}, {and} \bibinfo{person}{Sushmita Ruj}.}
  \bibinfo{year}{2018}\natexlab{}.
\newblock \showarticletitle{A survey on security and privacy issues of
  bitcoin}.
\newblock \bibinfo{journal}{\emph{IEEE Communications Surveys \& Tutorials}}
  (\bibinfo{year}{2018}).
\newblock


\bibitem[\protect\citeauthoryear{cryptoslate.com}{cryptoslate.com}{2020}]%
        {cryptoslatePowMarketCap}
\bibfield{author}{\bibinfo{person}{cryptoslate.com}.}
  \bibinfo{year}{2020}\natexlab{}.
\newblock \bibinfo{booktitle}{\emph{Proof of Work Cryptocurrency Market
  Capitalizations}}.
\newblock
\urldef\tempurl%
\url{https://cryptoslate.com/cryptos/proof-of-work/}
\showURL{%
\tempurl}


\bibitem[\protect\citeauthoryear{de~Vries}{de~Vries}{2019}]%
        {de2019renewable}
\bibfield{author}{\bibinfo{person}{Alex de Vries}.}
  \bibinfo{year}{2019}\natexlab{}.
\newblock \showarticletitle{Renewable energy will not solve bitcoin’s
  sustainability problem}.
\newblock \bibinfo{journal}{\emph{Joule}} \bibinfo{volume}{3},
  \bibinfo{number}{4} (\bibinfo{year}{2019}), \bibinfo{pages}{893--898}.
\newblock


\bibitem[\protect\citeauthoryear{de~Vries}{de~Vries}{2020}]%
        {de2020bitcoin}
\bibfield{author}{\bibinfo{person}{Alex de Vries}.}
  \bibinfo{year}{2020}\natexlab{}.
\newblock \showarticletitle{Bitcoin’s energy consumption is underestimated: A
  market dynamics approach}.
\newblock \bibinfo{journal}{\emph{Energy Research \& Social Science}}
  (\bibinfo{year}{2020}).
\newblock


\bibitem[\protect\citeauthoryear{Decker and Wattenhofer}{Decker and
  Wattenhofer}{2013}]%
        {decker2013propagation}
\bibfield{author}{\bibinfo{person}{Christian Decker} {and}
  \bibinfo{person}{Roger Wattenhofer}.} \bibinfo{year}{2013}\natexlab{}.
\newblock \showarticletitle{Information Propagation in the {Bitcoin} Network}.
  In \bibinfo{booktitle}{\emph{{IEEE P2P}}}. \bibinfo{address}{Trento, Italy}.
\newblock


\bibitem[\protect\citeauthoryear{Deirmentzoglou, Papakyriakopoulos, and
  Patsakis}{Deirmentzoglou et~al\mbox{.}}{2019}]%
        {deirmentzoglou2019survey}
\bibfield{author}{\bibinfo{person}{Evangelos Deirmentzoglou},
  \bibinfo{person}{Georgios Papakyriakopoulos}, {and}
  \bibinfo{person}{Constantinos Patsakis}.} \bibinfo{year}{2019}\natexlab{}.
\newblock \showarticletitle{A survey on long-range attacks for proof of stake
  protocols}.
\newblock \bibinfo{journal}{\emph{IEEE Access}} (\bibinfo{year}{2019}).
\newblock


\bibitem[\protect\citeauthoryear{{digiconomist.net}}{{digiconomist.net}}{2019}]%
        {digiconomist2019BitcoinEnergyConsumptionIndex}
\bibfield{author}{\bibinfo{person}{{digiconomist.net}}.}
  \bibinfo{year}{2019}\natexlab{}.
\newblock \bibinfo{title}{Bitcoin Energy Consumption Index}.
\newblock
\newblock
\urldef\tempurl%
\url{https://digiconomist.net/bitcoin-energy-consumption}
\showURL{%
\tempurl}


\bibitem[\protect\citeauthoryear{{Dogecoin Project}}{{Dogecoin
  Project}}{2014}]%
        {dogecoin2013site}
\bibfield{author}{\bibinfo{person}{{Dogecoin Project}}.}
  \bibinfo{year}{2014}\natexlab{}.
\newblock \bibinfo{title}{Dogecoin}.
\newblock
\newblock
\urldef\tempurl%
\url{https://dogecoin.org}
\showURL{%
\tempurl}


\bibitem[\protect\citeauthoryear{Dotan and Tochner}{Dotan and Tochner}{2020}]%
        {dotan2020proofs}
\bibfield{author}{\bibinfo{person}{Maya Dotan} {and} \bibinfo{person}{Saar
  Tochner}.} \bibinfo{year}{2020}\natexlab{}.
\newblock \showarticletitle{Proofs of Useless Work--Positive and Negative
  Results for Wasteless Mining Systems}.
\newblock \bibinfo{journal}{\emph{arXiv}} (\bibinfo{year}{2020}).
\newblock


\bibitem[\protect\citeauthoryear{Douceur}{Douceur}{2002}]%
        {douceur2002sybil}
\bibfield{author}{\bibinfo{person}{John~R Douceur}.}
  \bibinfo{year}{2002}\natexlab{}.
\newblock \showarticletitle{The sybil attack}. In
  \bibinfo{booktitle}{\emph{International workshop on peer-to-peer systems}}.
  Springer.
\newblock


\bibitem[\protect\citeauthoryear{Dwork and Naor}{Dwork and Naor}{1992}]%
        {dwork1992pricing}
\bibfield{author}{\bibinfo{person}{Cynthia Dwork} {and} \bibinfo{person}{Moni
  Naor}.} \bibinfo{year}{1992}\natexlab{}.
\newblock \showarticletitle{Pricing via processing or combatting junk mail}. In
  \bibinfo{booktitle}{\emph{Annual International Cryptology Conference}}.
\newblock


\bibitem[\protect\citeauthoryear{EOS}{EOS}{2019}]%
        {eosio}
\bibfield{author}{\bibinfo{person}{EOS}.} \bibinfo{year}{2019}\natexlab{}.
\newblock \bibinfo{booktitle}{\emph{EOS.IO Technical White Paper v2}}.
\newblock
\urldef\tempurl%
\url{https://github.com/EOSIO/Documentation/blob/master/TechnicalWhitePaper.md}
\showURL{%
\tempurl}


\bibitem[\protect\citeauthoryear{{Ethan Lou}}{{Ethan Lou}}{2019}]%
        {lou2019thenextbigenvironmentalfight}
\bibfield{author}{\bibinfo{person}{{Ethan Lou}}.}
  \bibinfo{year}{2019}\natexlab{}.
\newblock \bibinfo{title}{Bitcoin as big oil: the next big environmental
  fight?}
\newblock
\newblock
\urldef\tempurl%
\url{https://www.theguardian.com/commentisfree/2019/jan/17/bitcoin-big-oil-environment-energy}
\showURL{%
\tempurl}


\bibitem[\protect\citeauthoryear{{etherscan.info}}{{etherscan.info}}{2019}]%
        {ether2019difficulty}
\bibfield{author}{\bibinfo{person}{{etherscan.info}}.}
  \bibinfo{year}{2019}\natexlab{}.
\newblock \bibinfo{title}{Ethereum Difficulty Chart}.
\newblock
\newblock
\urldef\tempurl%
\url{https://etherscan.io/chart/difficulty}
\showURL{%
\tempurl}


\bibitem[\protect\citeauthoryear{Eyal}{Eyal}{2015}]%
        {eyal2015dilemma}
\bibfield{author}{\bibinfo{person}{Ittay Eyal}.}
  \bibinfo{year}{2015}\natexlab{}.
\newblock \showarticletitle{The Miner's Dilemma}. In
  \bibinfo{booktitle}{\emph{IEEE SP}}.
\newblock


\bibitem[\protect\citeauthoryear{Eyal, Gencer, Sirer, and Van~Renesse}{Eyal
  et~al\mbox{.}}{2016}]%
        {eyal2016bitcoin}
\bibfield{author}{\bibinfo{person}{Ittay Eyal}, \bibinfo{person}{Adem~Efe
  Gencer}, \bibinfo{person}{Emin~G{\"u}n Sirer}, {and} \bibinfo{person}{Robbert
  Van~Renesse}.} \bibinfo{year}{2016}\natexlab{}.
\newblock \showarticletitle{Bitcoin-NG: A Scalable Blockchain Protocol.}. In
  \bibinfo{booktitle}{\emph{NSDI}}.
\newblock


\bibitem[\protect\citeauthoryear{{Eyal} and Sirer}{{Eyal} and Sirer}{2014}]%
        {eyal2014majority}
\bibfield{author}{\bibinfo{person}{Ittay {Eyal}} {and}
  \bibinfo{person}{Emin~G{\"u}n Sirer}.} \bibinfo{year}{2014}\natexlab{}.
\newblock \showarticletitle{Majority is not Enough: {Bitcoin} Mining is
  Vulnerable}. In \bibinfo{booktitle}{\emph{Financial Cryptography and Data
  Security}}.
\newblock


\bibitem[\protect\citeauthoryear{Fairley}{Fairley}{2017}]%
        {fairley2017blockchain}
\bibfield{author}{\bibinfo{person}{Peter Fairley}.}
  \bibinfo{year}{2017}\natexlab{}.
\newblock \showarticletitle{Blockchain world-Feeding the blockchain beast if
  bitcoin ever does go mainstream, the electricity needed to sustain it will be
  enormous}.
\newblock \bibinfo{journal}{\emph{IEEE Spectrum}} \bibinfo{volume}{54},
  \bibinfo{number}{10} (\bibinfo{year}{2017}), \bibinfo{pages}{36--59}.
\newblock


\bibitem[\protect\citeauthoryear{Fiat, Karlin, Koutsoupias, and
  Papadimitriou}{Fiat et~al\mbox{.}}{2019}]%
        {fiat2019energy}
\bibfield{author}{\bibinfo{person}{Amos Fiat}, \bibinfo{person}{Anna Karlin},
  \bibinfo{person}{Elias Koutsoupias}, {and} \bibinfo{person}{Christos
  Papadimitriou}.} \bibinfo{year}{2019}\natexlab{}.
\newblock \showarticletitle{Energy equilibria in proof-of-work mining}. In
  \bibinfo{booktitle}{\emph{Proceedings of ACM EC}}.
\newblock


\bibitem[\protect\citeauthoryear{fiatmarketcap.com}{fiatmarketcap.com}{2019}]%
        {fiatmarketcap}
\bibfield{author}{\bibinfo{person}{fiatmarketcap.com}.}
  \bibinfo{year}{2019}\natexlab{}.
\newblock \bibinfo{booktitle}{\emph{Top Fiat Currencies by Market
  Capitalization}}.
\newblock
\urldef\tempurl%
\url{https://fiatmarketcap.com/}
\showURL{%
\tempurl}


\bibitem[\protect\citeauthoryear{for Alternative~Finance}{for
  Alternative~Finance}{2019}]%
        {cbeci}
\bibfield{author}{\bibinfo{person}{Cambridge~Centre for Alternative~Finance}.}
  \bibinfo{year}{2019}\natexlab{}.
\newblock \bibinfo{booktitle}{\emph{Cambridge Bitcoin Electricity Consumption
  Index}}.
\newblock
\urldef\tempurl%
\url{https://www.cbeci.org/cbeci/comparisons}
\showURL{%
\tempurl}


\bibitem[\protect\citeauthoryear{Foundation}{Foundation}{2019}]%
        {ef2020eth2Pos}
\bibfield{author}{\bibinfo{person}{Ethereum Foundation}.}
  \bibinfo{year}{2019}\natexlab{}.
\newblock \bibinfo{booktitle}{\emph{Ethereum 2}}.
\newblock
\urldef\tempurl%
\url{https://docs.ethhub.io/ethereum-roadmap/ethereum-2.0/proof-of-stake/}
\showURL{%
\tempurl}


\bibitem[\protect\citeauthoryear{Friedman}{Friedman}{1971}]%
        {friedman1971nonCooperative}
\bibfield{author}{\bibinfo{person}{James~W Friedman}.}
  \bibinfo{year}{1971}\natexlab{}.
\newblock \showarticletitle{A non-cooperative equilibrium for supergames}.
\newblock \bibinfo{journal}{\emph{The Review of Economic Studies}}
  (\bibinfo{year}{1971}).
\newblock


\bibitem[\protect\citeauthoryear{Friedman}{Friedman}{1989}]%
        {friedman1989quantity}
\bibfield{author}{\bibinfo{person}{Milton Friedman}.}
  \bibinfo{year}{1989}\natexlab{}.
\newblock \showarticletitle{Quantity theory of money}.
\newblock In \bibinfo{booktitle}{\emph{Money}}. \bibinfo{publisher}{Springer}.
\newblock


\bibitem[\protect\citeauthoryear{Ganesh, Orlandi, Tschudi, and Zohar}{Ganesh
  et~al\mbox{.}}{2020}]%
        {ganesh2020virtual}
\bibfield{author}{\bibinfo{person}{Chaya Ganesh}, \bibinfo{person}{Claudio
  Orlandi}, \bibinfo{person}{Daniel Tschudi}, {and} \bibinfo{person}{Aviv
  Zohar}.} \bibinfo{year}{2020}\natexlab{}.
\newblock \bibinfo{title}{Virtual ASICs: Generalized Proof-of-Stake Mining in
  Cryptocurrencies}.
\newblock \bibinfo{howpublished}{Cryptology ePrint Archive, Report 2020/791}.
\newblock
\newblock
\shownote{\url{https://eprint.iacr.org/2020/791}.}


\bibitem[\protect\citeauthoryear{Garay, Kiayias, and Leonardos}{Garay
  et~al\mbox{.}}{2015}]%
        {garay2015backbone}
\bibfield{author}{\bibinfo{person}{Juan~A. Garay}, \bibinfo{person}{Aggelos
  Kiayias}, {and} \bibinfo{person}{Nikos Leonardos}.}
  \bibinfo{year}{2015}\natexlab{}.
\newblock \showarticletitle{The {Bitcoin} Backbone Protocol: Analysis and
  Applications}. In \bibinfo{booktitle}{\emph{Advances in Cryptology -
  {EUROCRYPT} 2015 - 34th Annual International Conference on the Theory and
  Applications of Cryptographic Techniques}}.
\newblock


\bibitem[\protect\citeauthoryear{Ga{\v{z}}i, Kiayias, and Russell}{Ga{\v{z}}i
  et~al\mbox{.}}{2018}]%
        {gavzi2018stake}
\bibfield{author}{\bibinfo{person}{Peter Ga{\v{z}}i}, \bibinfo{person}{Aggelos
  Kiayias}, {and} \bibinfo{person}{Alexander Russell}.}
  \bibinfo{year}{2018}\natexlab{}.
\newblock \showarticletitle{Stake-bleeding attacks on proof-of-stake
  blockchains}. In \bibinfo{booktitle}{\emph{2018 Crypto Valley Conference on
  Blockchain Technology (CVCBT)}}.
\newblock


\bibitem[\protect\citeauthoryear{Gencer, Basu, Eyal, Van~Renesse, and
  Sirer}{Gencer et~al\mbox{.}}{2018}]%
        {gencer2018decentralization}
\bibfield{author}{\bibinfo{person}{Adem~Efe Gencer}, \bibinfo{person}{Soumya
  Basu}, \bibinfo{person}{Ittay Eyal}, \bibinfo{person}{Robbert Van~Renesse},
  {and} \bibinfo{person}{Emin~G{\"u}n Sirer}.} \bibinfo{year}{2018}\natexlab{}.
\newblock \showarticletitle{Decentralization in bitcoin and ethereum networks}.
  In \bibinfo{booktitle}{\emph{FC}}.
\newblock


\bibitem[\protect\citeauthoryear{Gervais, Karame, W{\"u}st, Glykantzis,
  Ritzdorf, and Capkun}{Gervais et~al\mbox{.}}{2016}]%
        {gervais2016security}
\bibfield{author}{\bibinfo{person}{Arthur Gervais}, \bibinfo{person}{Ghassan~O
  Karame}, \bibinfo{person}{Karl W{\"u}st}, \bibinfo{person}{Vasileios
  Glykantzis}, \bibinfo{person}{Hubert Ritzdorf}, {and} \bibinfo{person}{Srdjan
  Capkun}.} \bibinfo{year}{2016}\natexlab{}.
\newblock \showarticletitle{On the security and performance of proof of work
  blockchains}. In \bibinfo{booktitle}{\emph{Proceedings of the 2016 ACM CCS}}.
  ACM.
\newblock


\bibitem[\protect\citeauthoryear{Gilad, Hemo, Micali, Vlachos, and
  Zeldovich}{Gilad et~al\mbox{.}}{2017}]%
        {gilad2017algorand}
\bibfield{author}{\bibinfo{person}{Yossi Gilad}, \bibinfo{person}{Rotem Hemo},
  \bibinfo{person}{Silvio Micali}, \bibinfo{person}{Georgios Vlachos}, {and}
  \bibinfo{person}{Nickolai Zeldovich}.} \bibinfo{year}{2017}\natexlab{}.
\newblock \showarticletitle{Algorand: Scaling byzantine agreements for
  cryptocurrencies}. In \bibinfo{booktitle}{\emph{SOSP}}.
\newblock


\bibitem[\protect\citeauthoryear{Giungato, Rana, Tarabella, and
  Tricase}{Giungato et~al\mbox{.}}{2017}]%
        {su9122214}
\bibfield{author}{\bibinfo{person}{Pasquale Giungato}, \bibinfo{person}{Roberto
  Rana}, \bibinfo{person}{Angela Tarabella}, {and} \bibinfo{person}{Caterina
  Tricase}.} \bibinfo{year}{2017}\natexlab{}.
\newblock \showarticletitle{Current Trends in Sustainability of Bitcoins and
  Related Blockchain Technology}.
\newblock \bibinfo{journal}{\emph{Sustainability}} \bibinfo{volume}{9},
  \bibinfo{number}{12} (\bibinfo{year}{2017}).
\newblock
\showISSN{2071-1050}
\urldef\tempurl%
\url{https://doi.org/10.3390/su9122214}
\showDOI{\tempurl}


\bibitem[\protect\citeauthoryear{Goodkind, Jones, and Berrens}{Goodkind
  et~al\mbox{.}}{2020}]%
        {goodkind2020cryptodamages}
\bibfield{author}{\bibinfo{person}{Andrew~L Goodkind},
  \bibinfo{person}{Benjamin~A Jones}, {and} \bibinfo{person}{Robert~P
  Berrens}.} \bibinfo{year}{2020}\natexlab{}.
\newblock \showarticletitle{Cryptodamages: Monetary value estimates of the air
  pollution and human health impacts of cryptocurrency mining}.
\newblock \bibinfo{journal}{\emph{Energy Research \& Social Science}}
  (\bibinfo{year}{2020}).
\newblock


\bibitem[\protect\citeauthoryear{Goodman}{Goodman}{2014}]%
        {goodman2014tezos}
\bibfield{author}{\bibinfo{person}{LM Goodman}.}
  \bibinfo{year}{2014}\natexlab{}.
\newblock \bibinfo{booktitle}{\emph{Tezos—a self-amending crypto-ledger White
  paper}}.
\newblock
\urldef\tempurl%
\url{https://tezos.com/static/white_paper-2dc8c02267a8fb86bd67a108199441bf.pdf}
\showURL{%
\tempurl}


\bibitem[\protect\citeauthoryear{Goren and Spiegelman}{Goren and
  Spiegelman}{2019}]%
        {goren2019mindTheMining}
\bibfield{author}{\bibinfo{person}{Guy Goren} {and} \bibinfo{person}{Alexander
  Spiegelman}.} \bibinfo{year}{2019}\natexlab{}.
\newblock \showarticletitle{Mind the Mining}. In \bibinfo{booktitle}{\emph{ACM
  EC}}.
\newblock


\bibitem[\protect\citeauthoryear{Greenberg and Bugden}{Greenberg and
  Bugden}{2019}]%
        {greenberg2019energy}
\bibfield{author}{\bibinfo{person}{Pierce Greenberg} {and}
  \bibinfo{person}{Dylan Bugden}.} \bibinfo{year}{2019}\natexlab{}.
\newblock \showarticletitle{Energy consumption boomtowns in the United States:
  Community responses to a cryptocurrency boom}.
\newblock \bibinfo{journal}{\emph{Energy Research \& Social Science}}
  \bibinfo{volume}{50} (\bibinfo{year}{2019}), \bibinfo{pages}{162--167}.
\newblock


\bibitem[\protect\citeauthoryear{Gueta, Abraham, Grossman, Malkhi, Pinkas,
  Reiter, Seredinschi, Tamir, and Tomescu}{Gueta et~al\mbox{.}}{2019}]%
        {gueta2018sbft}
\bibfield{author}{\bibinfo{person}{Guy~Golan Gueta}, \bibinfo{person}{Ittai
  Abraham}, \bibinfo{person}{Shelly Grossman}, \bibinfo{person}{Dahlia Malkhi},
  \bibinfo{person}{Benny Pinkas}, \bibinfo{person}{Michael Reiter},
  \bibinfo{person}{Dragos-Adrian Seredinschi}, \bibinfo{person}{Orr Tamir},
  {and} \bibinfo{person}{Alin Tomescu}.} \bibinfo{year}{2019}\natexlab{}.
\newblock \showarticletitle{SBFT: a scalable and decentralized trust
  infrastructure}. In \bibinfo{booktitle}{\emph{DSN}}.
\newblock


\bibitem[\protect\citeauthoryear{Hall et~al\mbox{.}}{Hall
  et~al\mbox{.}}{1989}]%
        {hall1989political}
\bibfield{author}{\bibinfo{person}{Peter~A Hall} {et~al\mbox{.}}}
  \bibinfo{year}{1989}\natexlab{}.
\newblock \bibinfo{booktitle}{\emph{The political power of economic ideas:
  Keynesianism across nations}}.
\newblock \bibinfo{publisher}{Princeton University Press}.
\newblock


\bibitem[\protect\citeauthoryear{Hill}{Hill}{2020}]%
        {bbc2020blackrockGoesGreen}
\bibfield{author}{\bibinfo{person}{Toby Hill}.}
  \bibinfo{year}{2020}\natexlab{}.
\newblock \bibinfo{booktitle}{\emph{BlackRock goes green? Investment giant
  joins Climate Action 100+ amid controversy}}.
\newblock
\urldef\tempurl%
\url{https://www.greenbiz.com/article/blackrock-goes-green-investment-giant-joins-climate-action-100-amid-controversy}
\showURL{%
\tempurl}


\bibitem[\protect\citeauthoryear{Hou, Zhou, Ji, Daian, Tramer, Fanti, and
  Juels}{Hou et~al\mbox{.}}{2019}]%
        {hou2019squirrl}
\bibfield{author}{\bibinfo{person}{Charlie Hou}, \bibinfo{person}{Mingxun
  Zhou}, \bibinfo{person}{Yan Ji}, \bibinfo{person}{Phil Daian},
  \bibinfo{person}{Florian Tramer}, \bibinfo{person}{Giulia Fanti}, {and}
  \bibinfo{person}{Ari Juels}.} \bibinfo{year}{2019}\natexlab{}.
\newblock \showarticletitle{SquirRL: Automating Attack Discovery on Blockchain
  Incentive Mechanisms with Deep Reinforcement Learning}.
\newblock \bibinfo{journal}{\emph{arXiv}} (\bibinfo{year}{2019}).
\newblock


\bibitem[\protect\citeauthoryear{Houy}{Houy}{2019}]%
        {houy2019rational}
\bibfield{author}{\bibinfo{person}{Nicolas Houy}.}
  \bibinfo{year}{2019}\natexlab{}.
\newblock \showarticletitle{Rational mining limits Bitcoin emissions}.
\newblock \bibinfo{journal}{\emph{Nature Climate Change}} \bibinfo{volume}{9},
  \bibinfo{number}{9} (\bibinfo{year}{2019}), \bibinfo{pages}{655--655}.
\newblock


\bibitem[\protect\citeauthoryear{Huberman, Leshno, and Moallemi}{Huberman
  et~al\mbox{.}}{2017}]%
        {huberman2017monopoly}
\bibfield{author}{\bibinfo{person}{G Huberman}, \bibinfo{person}{J Leshno},
  {and} \bibinfo{person}{CC Moallemi}.} \bibinfo{year}{2017}\natexlab{}.
\newblock \showarticletitle{Monopoly Without a Monopolist: An Economic Analysis
  of the Bitcoin Payment System. SSRN Scholarly Paper ID 3025604}.
\newblock \bibinfo{journal}{\emph{Social Science Research Network, Rochester,
  NY, URL https://papers. ssrn. com/abstract}} (\bibinfo{year}{2017}).
\newblock


\bibitem[\protect\citeauthoryear{Hunt}{Hunt}{2018}]%
        {hunt2018canWePrevent}
\bibfield{author}{\bibinfo{person}{Tam Hunt}.} \bibinfo{year}{2018}\natexlab{}.
\newblock \bibinfo{booktitle}{\emph{Can We Prevent a Global Energy Crisis From
  Bitcoin Mining?}}
\newblock
\urldef\tempurl%
\url{https://www.greentechmedia.com/articles/read/can-we-prevent-a-global-energy-crisis-from-bitcoin-mining}
\showURL{%
\tempurl}


\bibitem[\protect\citeauthoryear{Initiative}{Initiative}{2020}]%
        {dcimit202051attacks}
\bibfield{author}{\bibinfo{person}{Digital~Currency Initiative}.}
  \bibinfo{year}{2020}\natexlab{}.
\newblock \bibinfo{title}{51\% attacks}.
\newblock
\newblock
\urldef\tempurl%
\url{https://dci.mit.edu/51-attacks}
\showURL{%
\tempurl}


\bibitem[\protect\citeauthoryear{Jakobsson and Juels}{Jakobsson and
  Juels}{1999}]%
        {jakobsson1999proofs}
\bibfield{author}{\bibinfo{person}{Markus Jakobsson} {and} \bibinfo{person}{Ari
  Juels}.} \bibinfo{year}{1999}\natexlab{}.
\newblock \showarticletitle{Proofs of work and bread pudding protocols}.
\newblock In \bibinfo{booktitle}{\emph{Secure Information Networks}}.
\newblock


\bibitem[\protect\citeauthoryear{Judmayer, Stifter, Zamyatin, Tsabary, Eyal,
  Ga{\v{z}}i, Meiklejohn, and Weippl}{Judmayer et~al\mbox{.}}{2019}]%
        {judmayer2019pay}
\bibfield{author}{\bibinfo{person}{Aljosha Judmayer}, \bibinfo{person}{Nicholas
  Stifter}, \bibinfo{person}{Alexei Zamyatin}, \bibinfo{person}{Itay Tsabary},
  \bibinfo{person}{Ittay Eyal}, \bibinfo{person}{Peter Ga{\v{z}}i},
  \bibinfo{person}{Sarah Meiklejohn}, {and} \bibinfo{person}{Edgar Weippl}.}
  \bibinfo{year}{2019}\natexlab{}.
\newblock \bibinfo{title}{Pay-To-Win: Incentive Attacks on Proof-of-Work
  Cryptocurrencies}.
\newblock
\newblock


\bibitem[\protect\citeauthoryear{Juels}{Juels}{1999}]%
        {juels1999client}
\bibfield{author}{\bibinfo{person}{Ari Juels}.}
  \bibinfo{year}{1999}\natexlab{}.
\newblock \showarticletitle{Client puzzles: A cryptographic countermeasure
  against connection depletion attacks}. In \bibinfo{booktitle}{\emph{NDSS}}.
\newblock


\bibitem[\protect\citeauthoryear{Kamiya}{Kamiya}{2019}]%
        {kamiya2019bitcoinEnergyUse}
\bibfield{author}{\bibinfo{person}{George Kamiya}.}
  \bibinfo{year}{2019}\natexlab{}.
\newblock \bibinfo{booktitle}{\emph{Bitcoin energy use - mined the gap}}.
\newblock
\urldef\tempurl%
\url{https://www.iea.org/commentaries/bitcoin-energy-use-mined-the-gap}
\showURL{%
\tempurl}


\bibitem[\protect\citeauthoryear{Kapitza, Behl, Cachin, Distler, Kuhnle,
  Mohammadi, Schr{\"o}der-Preikschat, and Stengel}{Kapitza
  et~al\mbox{.}}{2012}]%
        {kapitza2012cheapbft}
\bibfield{author}{\bibinfo{person}{R{\"u}diger Kapitza},
  \bibinfo{person}{Johannes Behl}, \bibinfo{person}{Christian Cachin},
  \bibinfo{person}{Tobias Distler}, \bibinfo{person}{Simon Kuhnle},
  \bibinfo{person}{Seyed~Vahid Mohammadi}, \bibinfo{person}{Wolfgang
  Schr{\"o}der-Preikschat}, {and} \bibinfo{person}{Klaus Stengel}.}
  \bibinfo{year}{2012}\natexlab{}.
\newblock \showarticletitle{CheapBFT: resource-efficient byzantine fault
  tolerance}. In \bibinfo{booktitle}{\emph{Proceedings of the 7th ACM european
  conference on Computer Systems}}. ACM.
\newblock


\bibitem[\protect\citeauthoryear{Karame, Androulaki, and Capkun}{Karame
  et~al\mbox{.}}{2012}]%
        {karame2012fast}
\bibfield{author}{\bibinfo{person}{Ghassan~O. Karame}, \bibinfo{person}{Elli
  Androulaki}, {and} \bibinfo{person}{Srdjan Capkun}.}
  \bibinfo{year}{2012}\natexlab{}.
\newblock \showarticletitle{Double-spending Fast Payments in Bitcoin}. In
  \bibinfo{booktitle}{\emph{Proceedings of ACM CCS}}.
\newblock


\bibitem[\protect\citeauthoryear{Karantias, Kiayias, and Zindros}{Karantias
  et~al\mbox{.}}{2019}]%
        {DBLP:journals/iacr/KarantiasKZ19}
\bibfield{author}{\bibinfo{person}{Kostis Karantias}, \bibinfo{person}{Aggelos
  Kiayias}, {and} \bibinfo{person}{Dionysis Zindros}.}
  \bibinfo{year}{2019}\natexlab{}.
\newblock \showarticletitle{Proof-of-Burn}.
\newblock \bibinfo{journal}{\emph{{IACR} Cryptology ePrint Archive}}
  (\bibinfo{year}{2019}).
\newblock
\urldef\tempurl%
\url{https://eprint.iacr.org/2019/1096}
\showURL{%
\tempurl}


\bibitem[\protect\citeauthoryear{Keynes}{Keynes}{2018}]%
        {keynes2018general}
\bibfield{author}{\bibinfo{person}{John~Maynard Keynes}.}
  \bibinfo{year}{2018}\natexlab{}.
\newblock \bibinfo{booktitle}{\emph{The general theory of employment, interest,
  and money}}.
\newblock \bibinfo{publisher}{Springer}.
\newblock


\bibitem[\protect\citeauthoryear{Kiayias, Russell, David, and
  Oliynykov}{Kiayias et~al\mbox{.}}{2017}]%
        {kiayias2017ouroboros}
\bibfield{author}{\bibinfo{person}{Aggelos Kiayias}, \bibinfo{person}{Alexander
  Russell}, \bibinfo{person}{Bernardo David}, {and} \bibinfo{person}{Roman
  Oliynykov}.} \bibinfo{year}{2017}\natexlab{}.
\newblock \showarticletitle{Ouroboros: A provably secure proof-of-stake
  blockchain protocol}. In \bibinfo{booktitle}{\emph{Annual International
  Cryptology Conference}}.
\newblock


\bibitem[\protect\citeauthoryear{Kiffer, Rajaraman, and Shelat}{Kiffer
  et~al\mbox{.}}{2018}]%
        {kiffer2018better}
\bibfield{author}{\bibinfo{person}{Lucianna Kiffer}, \bibinfo{person}{Rajmohan
  Rajaraman}, {and} \bibinfo{person}{Abhi Shelat}.}
  \bibinfo{year}{2018}\natexlab{}.
\newblock \showarticletitle{A better method to analyze blockchain consistency}.
  In \bibinfo{booktitle}{\emph{Proceedings of the 2018 ACM CCS}}.
\newblock


\bibitem[\protect\citeauthoryear{King and Nadal}{King and Nadal}{2012}]%
        {king2012peercoin}
\bibfield{author}{\bibinfo{person}{S King} {and} \bibinfo{person}{S Nadal}.}
  \bibinfo{year}{2012}\natexlab{}.
\newblock \bibinfo{booktitle}{\emph{Peercoin--secure \& sustainable
  cryptocoin}}.
\newblock
\urldef\tempurl%
\url{Available: https://www.peercoin.net/whitepapers/peercoin-paper.pdf}
\showURL{%
\tempurl}


\bibitem[\protect\citeauthoryear{Kogias, Jovanovic, Gailly, Khoffi, Gasser, and
  Ford}{Kogias et~al\mbox{.}}{2016}]%
        {kogias2016byzcoin}
\bibfield{author}{\bibinfo{person}{Eleftherios~Kokoris Kogias},
  \bibinfo{person}{Philipp Jovanovic}, \bibinfo{person}{Nicolas Gailly},
  \bibinfo{person}{Ismail Khoffi}, \bibinfo{person}{Linus Gasser}, {and}
  \bibinfo{person}{Bryan Ford}.} \bibinfo{year}{2016}\natexlab{}.
\newblock \showarticletitle{Enhancing bitcoin security and performance with
  strong consistency via collective signing}. In
  \bibinfo{booktitle}{\emph{USENIX}}.
\newblock


\bibitem[\protect\citeauthoryear{Kotla, Alvisi, Dahlin, Clement, and
  Wong}{Kotla et~al\mbox{.}}{2007}]%
        {kotla2007zyzzyva}
\bibfield{author}{\bibinfo{person}{Ramakrishna Kotla}, \bibinfo{person}{Lorenzo
  Alvisi}, \bibinfo{person}{Mike Dahlin}, \bibinfo{person}{Allen Clement},
  {and} \bibinfo{person}{Edmund Wong}.} \bibinfo{year}{2007}\natexlab{}.
\newblock \showarticletitle{Zyzzyva: speculative byzantine fault tolerance}. In
  \bibinfo{booktitle}{\emph{ACM SIGOPS Operating Systems Review}}.
\newblock


\bibitem[\protect\citeauthoryear{Kristoufek}{Kristoufek}{2020}]%
        {kristoufek2020bitcoin}
\bibfield{author}{\bibinfo{person}{Ladislav Kristoufek}.}
  \bibinfo{year}{2020}\natexlab{}.
\newblock \showarticletitle{Bitcoin and its mining on the equilibrium path}.
\newblock \bibinfo{journal}{\emph{Energy Economics}} (\bibinfo{year}{2020}).
\newblock


\bibitem[\protect\citeauthoryear{Kroll, Davey, and Felten}{Kroll
  et~al\mbox{.}}{2013}]%
        {kroll2013economics}
\bibfield{author}{\bibinfo{person}{Joshua~A Kroll}, \bibinfo{person}{Ian~C
  Davey}, {and} \bibinfo{person}{Edward~W Felten}.}
  \bibinfo{year}{2013}\natexlab{}.
\newblock \showarticletitle{The Economics of {Bitcoin} Mining or, {Bitcoin} in
  the Presence of Adversaries}. In \bibinfo{booktitle}{\emph{Workshop on the
  Economics of Information Security}}.
\newblock


\bibitem[\protect\citeauthoryear{Lamport}{Lamport}{1978}]%
        {lamport1978implementation}
\bibfield{author}{\bibinfo{person}{Leslie Lamport}.}
  \bibinfo{year}{1978}\natexlab{}.
\newblock \showarticletitle{The implementation of reliable distributed
  multiprocess systems}.
\newblock \bibinfo{journal}{\emph{Computer Networks (1976)}}
  \bibinfo{volume}{2}, \bibinfo{number}{2} (\bibinfo{year}{1978}),
  \bibinfo{pages}{95--114}.
\newblock


\bibitem[\protect\citeauthoryear{Lamport}{Lamport}{1984}]%
        {lamport1984using}
\bibfield{author}{\bibinfo{person}{Leslie Lamport}.}
  \bibinfo{year}{1984}\natexlab{}.
\newblock \showarticletitle{Using time instead of timeout for fault-tolerant
  distributed systems.}
\newblock \bibinfo{journal}{\emph{ACM Transactions on Programming Languages and
  Systems (TOPLAS)}} \bibinfo{volume}{6}, \bibinfo{number}{2}
  (\bibinfo{year}{1984}), \bibinfo{pages}{254--280}.
\newblock


\bibitem[\protect\citeauthoryear{Laurie and Clayton}{Laurie and
  Clayton}{2004}]%
        {laurie2004proof}
\bibfield{author}{\bibinfo{person}{Ben Laurie} {and} \bibinfo{person}{Richard
  Clayton}.} \bibinfo{year}{2004}\natexlab{}.
\newblock \showarticletitle{Proof-of-work proves not to work; version 0.2}. In
  \bibinfo{booktitle}{\emph{Workshop on Economics and Information, Security}}.
\newblock


\bibitem[\protect\citeauthoryear{Lewenberg, Bachrach, Sompolinsky, Zohar, and
  Rosenschein}{Lewenberg et~al\mbox{.}}{2015}]%
        {lewenberg2015cooperative}
\bibfield{author}{\bibinfo{person}{Yoad Lewenberg}, \bibinfo{person}{Yoram
  Bachrach}, \bibinfo{person}{Yonatan Sompolinsky}, \bibinfo{person}{Aviv
  Zohar}, {and} \bibinfo{person}{Jeffrey~S Rosenschein}.}
  \bibinfo{year}{2015}\natexlab{}.
\newblock \showarticletitle{Bitcoin Mining Pools: A Cooperative Game Theoretic
  Analysis}. In \bibinfo{booktitle}{\emph{Proceedings of the 2015 International
  Conference on Autonomous Agents and Multiagent Systems}}. International
  Foundation for Autonomous Agents and Multiagent Systems.
\newblock


\bibitem[\protect\citeauthoryear{Liao and Katz}{Liao and Katz}{2017}]%
        {liao2017incentivizing}
\bibfield{author}{\bibinfo{person}{Kevin Liao} {and} \bibinfo{person}{Jonathan
  Katz}.} \bibinfo{year}{2017}\natexlab{}.
\newblock \showarticletitle{Incentivizing blockchain forks via whale
  transactions}. In \bibinfo{booktitle}{\emph{FC}}.
\newblock


\bibitem[\protect\citeauthoryear{{Litecoin Project}}{{Litecoin
  Project}}{2014}]%
        {litecoin2013site}
\bibfield{author}{\bibinfo{person}{{Litecoin Project}}.}
  \bibinfo{year}{2014}\natexlab{}.
\newblock \bibinfo{title}{Litecoin, open source {P2P} digital currency}.
\newblock
\newblock
\urldef\tempurl%
\url{https://litecoin.org}
\showURL{%
\tempurl}


\bibitem[\protect\citeauthoryear{Lucas}{Lucas}{1980}]%
        {lucas1980two}
\bibfield{author}{\bibinfo{person}{Robert~E Lucas}.}
  \bibinfo{year}{1980}\natexlab{}.
\newblock \showarticletitle{Two illustrations of the quantity theory of money}.
\newblock \bibinfo{journal}{\emph{The American Economic Review}}
  (\bibinfo{year}{1980}).
\newblock


\bibitem[\protect\citeauthoryear{Lynn}{Lynn}{1991}]%
        {lynn1991scarcity}
\bibfield{author}{\bibinfo{person}{Michael Lynn}.}
  \bibinfo{year}{1991}\natexlab{}.
\newblock \showarticletitle{Scarcity effects on value: A quantitative review of
  the commodity theory literature}.
\newblock \bibinfo{journal}{\emph{Psychology \& Marketing}}
  \bibinfo{volume}{8}, \bibinfo{number}{1} (\bibinfo{year}{1991}),
  \bibinfo{pages}{43--57}.
\newblock


\bibitem[\protect\citeauthoryear{{Maximilian Feige}}{{Maximilian
  Feige}}{2019}]%
        {feige2019bitcoindoesntincentivizegreenenergy}
\bibfield{author}{\bibinfo{person}{{Maximilian Feige}}.}
  \bibinfo{year}{2019}\natexlab{}.
\newblock \bibinfo{title}{Bitcoin doesn’t incentivize green energy}.
\newblock
\newblock
\urldef\tempurl%
\url{https://www.theblockcrypto.com/2019/01/30/bitcoin-doesnt-incentivize-green-energy/}
\showURL{%
\tempurl}


\bibitem[\protect\citeauthoryear{McCorry, Hicks, and Meiklejohn}{McCorry
  et~al\mbox{.}}{2018}]%
        {mccorry2018smart}
\bibfield{author}{\bibinfo{person}{Patrick McCorry}, \bibinfo{person}{Alexander
  Hicks}, {and} \bibinfo{person}{Sarah Meiklejohn}.}
  \bibinfo{year}{2018}\natexlab{}.
\newblock \showarticletitle{Smart contracts for bribing miners}. In
  \bibinfo{booktitle}{\emph{FC}}.
\newblock


\bibitem[\protect\citeauthoryear{Mike, Izida, and Anaid}{Mike
  et~al\mbox{.}}{2020}]%
        {eckel2020blackouts}
\bibfield{author}{\bibinfo{person}{Eckel Mike}, \bibinfo{person}{Chania Izida},
  {and} \bibinfo{person}{Gogoryan Anaid}.} \bibinfo{year}{2020}\natexlab{}.
\newblock \bibinfo{booktitle}{\emph{Bitcoin Blackouts: Russian Cryptocurrency
  'Miners' Minting Millions While Sucking Abkhazia's Electricity Grid Dry}}.
\newblock
\urldef\tempurl%
\url{https://www.rferl.org/a/bitcoin-blackouts-russian-cryptocurrency-miners-minting-millions-sucking-abkhazia-electricity-grid-dry/30968307.html}
\showURL{%
\tempurl}


\bibitem[\protect\citeauthoryear{Miller, Kosba, Katz, and Shi}{Miller
  et~al\mbox{.}}{2015}]%
        {miller2015nonoutsourceable}
\bibfield{author}{\bibinfo{person}{Andrew Miller}, \bibinfo{person}{Ahmed
  Kosba}, \bibinfo{person}{Jonathan Katz}, {and} \bibinfo{person}{Elaine Shi}.}
  \bibinfo{year}{2015}\natexlab{}.
\newblock \showarticletitle{Nonoutsourceable scratch-off puzzles to discourage
  bitcoin mining coalitions}. In \bibinfo{booktitle}{\emph{ACM CCS}}.
\newblock


\bibitem[\protect\citeauthoryear{Miller, Shi, Juels, Parno, and Katz}{Miller
  et~al\mbox{.}}{2014}]%
        {miller2014permacoin}
\bibfield{author}{\bibinfo{person}{Andrew Miller}, \bibinfo{person}{Elaine
  Shi}, \bibinfo{person}{Ari Juels}, \bibinfo{person}{Bryan Parno}, {and}
  \bibinfo{person}{Jonathan Katz}.} \bibinfo{year}{2014}\natexlab{}.
\newblock \showarticletitle{Permacoin: Repurposing {Bitcoin} Work for Data
  Preservation}. In \bibinfo{booktitle}{\emph{Proceedings of the IEEE Symposium
  on Security and Privacy}}. \bibinfo{publisher}{IEEE}, \bibinfo{address}{San
  Jose, CA, USA}.
\newblock
\urldef\tempurl%
\url{http://research.microsoft.com/apps/pubs/default.aspx?id=217984}
\showURL{%
\tempurl}


\bibitem[\protect\citeauthoryear{Mirkin, Ji, Pang, Klages-Mundt, Eyal, and
  Juels}{Mirkin et~al\mbox{.}}{2020}]%
        {mirkin2020bdos}
\bibfield{author}{\bibinfo{person}{Michael Mirkin}, \bibinfo{person}{Yan Ji},
  \bibinfo{person}{Jonathan Pang}, \bibinfo{person}{Ariah Klages-Mundt},
  \bibinfo{person}{Ittay Eyal}, {and} \bibinfo{person}{Ari Juels}.}
  \bibinfo{year}{2020}\natexlab{}.
\newblock \showarticletitle{BDoS: Blockchain Denial-of-Service}. In
  \bibinfo{booktitle}{\emph{ACM CCS}}.
\newblock


\bibitem[\protect\citeauthoryear{Mora, Rollins, Taladay, Kantar, Chock,
  Shimada, and Franklin}{Mora et~al\mbox{.}}{2018}]%
        {mora2018bitcoin}
\bibfield{author}{\bibinfo{person}{Camilo Mora}, \bibinfo{person}{Randi~L
  Rollins}, \bibinfo{person}{Katie Taladay}, \bibinfo{person}{Michael~B
  Kantar}, \bibinfo{person}{Mason~K Chock}, \bibinfo{person}{Mio Shimada},
  {and} \bibinfo{person}{Erik~C Franklin}.} \bibinfo{year}{2018}\natexlab{}.
\newblock \showarticletitle{Bitcoin emissions alone could push global warming
  above 2 C}.
\newblock \bibinfo{journal}{\emph{Nature Climate Change}} \bibinfo{volume}{8},
  \bibinfo{number}{11} (\bibinfo{year}{2018}), \bibinfo{pages}{931--933}.
\newblock


\bibitem[\protect\citeauthoryear{Nakamoto}{Nakamoto}{2008}]%
        {nakamoto2008bitcoin}
\bibfield{author}{\bibinfo{person}{Satoshi Nakamoto}.}
  \bibinfo{year}{2008}\natexlab{}.
\newblock \bibinfo{title}{Bitcoin: A Peer-to-Peer Electronic Cash System}.
\newblock
\newblock
\urldef\tempurl%
\url{http://www.bitcoin.org/bitcoin.pdf}
\showURL{%
\tempurl}


\bibitem[\protect\citeauthoryear{Narayanan, Bonneau, Felten, Miller, and
  Goldfeder}{Narayanan et~al\mbox{.}}{2016}]%
        {narayanan2016bitcoin}
\bibfield{author}{\bibinfo{person}{Arvind Narayanan}, \bibinfo{person}{Joseph
  Bonneau}, \bibinfo{person}{Edward Felten}, \bibinfo{person}{Andrew Miller},
  {and} \bibinfo{person}{Steven Goldfeder}.} \bibinfo{year}{2016}\natexlab{}.
\newblock \bibinfo{booktitle}{\emph{Bitcoin and cryptocurrency technologies: a
  comprehensive introduction}}.
\newblock \bibinfo{publisher}{Princeton University Press}.
\newblock


\bibitem[\protect\citeauthoryear{Nayak, Kumar, Miller, and Shi}{Nayak
  et~al\mbox{.}}{2016}]%
        {nayak2016stubborn}
\bibfield{author}{\bibinfo{person}{Kartik Nayak}, \bibinfo{person}{Srijan
  Kumar}, \bibinfo{person}{Andrew Miller}, {and} \bibinfo{person}{Elaine Shi}.}
  \bibinfo{year}{2016}\natexlab{}.
\newblock \showarticletitle{Stubborn mining: Generalizing selfish mining and
  combining with an eclipse attack}. In \bibinfo{booktitle}{\emph{IEEE European
  SP}}.
\newblock


\bibitem[\protect\citeauthoryear{Negy, Rizun, and Sirer}{Negy
  et~al\mbox{.}}{2020}]%
        {negy2020selfish}
\bibfield{author}{\bibinfo{person}{Kevin~Alarc{\'o}n Negy},
  \bibinfo{person}{Peter~R Rizun}, {and} \bibinfo{person}{Emin~G{\"u}n Sirer}.}
  \bibinfo{year}{2020}\natexlab{}.
\newblock \showarticletitle{Selfish mining re-examined}. In
  \bibinfo{booktitle}{\emph{International Conference on Financial Cryptography
  and Data Security}}. Springer, \bibinfo{pages}{61--78}.
\newblock


\bibitem[\protect\citeauthoryear{Neudecker and Hartenstein}{Neudecker and
  Hartenstein}{2019}]%
        {neudecker2019short}
\bibfield{author}{\bibinfo{person}{Till Neudecker} {and}
  \bibinfo{person}{Hannes Hartenstein}.} \bibinfo{year}{2019}\natexlab{}.
\newblock \showarticletitle{Short paper: An empirical analysis of blockchain
  forks in bitcoin}. In \bibinfo{booktitle}{\emph{FC}}.
\newblock


\bibitem[\protect\citeauthoryear{news}{news}{2021}]%
        {bbc2021teslaNoLongerAcceptsBitcoin}
\bibfield{author}{\bibinfo{person}{BBC news}.} \bibinfo{year}{2021}\natexlab{}.
\newblock \bibinfo{booktitle}{\emph{Tesla will no longer accept Bitcoin over
  climate concerns, says Musk}}.
\newblock
\urldef\tempurl%
\url{https://www.bbc.com/news/business-57096305}
\showURL{%
\tempurl}


\bibitem[\protect\citeauthoryear{Noda, Okumura, and Hashimoto}{Noda
  et~al\mbox{.}}{2020}]%
        {noda2020economic}
\bibfield{author}{\bibinfo{person}{Shunya Noda}, \bibinfo{person}{Kyohei
  Okumura}, {and} \bibinfo{person}{Yoshinori Hashimoto}.}
  \bibinfo{year}{2020}\natexlab{}.
\newblock \showarticletitle{An economic analysis of difficulty adjustment
  algorithms in proof-of-work blockchain systems}. In
  \bibinfo{booktitle}{\emph{Proceedings of the 21st ACM conference on Economics
  and Computation (EC'20), July}}.
\newblock


\bibitem[\protect\citeauthoryear{Ong}{Ong}{2018}]%
        {ong2018plattsburghBanMining}
\bibfield{author}{\bibinfo{person}{Thuy Ong}.} \bibinfo{year}{2018}\natexlab{}.
\newblock \bibinfo{booktitle}{\emph{Plattsburgh has become the first city in
  the US to ban cryptocurrency mining}}.
\newblock
\urldef\tempurl%
\url{https://www.theverge.com/2018/3/16/17128678/plattsburgh-new-york-ban-cryptocurrency-mining}
\showURL{%
\tempurl}


\bibitem[\protect\citeauthoryear{P4Titan}{P4Titan}{2014}]%
        {p4titan2014slimcoin}
\bibfield{author}{\bibinfo{person}{P4Titan}.} \bibinfo{year}{2014}\natexlab{}.
\newblock \bibinfo{title}{Slimcoin: A Peer-to-Peer Crypto-Currency with
  Proof-of-Burn}.
\newblock
\newblock
\urldef\tempurl%
\url{https://github.com/slimcoin-project/slimcoin-project.github.io/raw/master/whitepaperSLM.pdf}
\showURL{%
\tempurl}


\bibitem[\protect\citeauthoryear{Pass, Seeman, and Shelat}{Pass
  et~al\mbox{.}}{2017}]%
        {pass2017analysis}
\bibfield{author}{\bibinfo{person}{Rafael Pass}, \bibinfo{person}{Lior Seeman},
  {and} \bibinfo{person}{Abhi Shelat}.} \bibinfo{year}{2017}\natexlab{}.
\newblock \showarticletitle{Analysis of the blockchain protocol in asynchronous
  networks}. In \bibinfo{booktitle}{\emph{Annual International Conference on
  the Theory and Applications of Cryptographic Techniques}}. Springer.
\newblock


\bibitem[\protect\citeauthoryear{Pass and Shelat}{Pass and Shelat}{2015}]%
        {pass2015micropayments}
\bibfield{author}{\bibinfo{person}{Rafael Pass} {and} \bibinfo{person}{Abhi
  Shelat}.} \bibinfo{year}{2015}\natexlab{}.
\newblock \showarticletitle{Micropayments for decentralized currencies}. In
  \bibinfo{booktitle}{\emph{ACM CCS}}.
\newblock


\bibitem[\protect\citeauthoryear{Pass and Shi}{Pass and Shi}{2017}]%
        {pass2017fruitchains}
\bibfield{author}{\bibinfo{person}{Rafael Pass} {and} \bibinfo{person}{Elaine
  Shi}.} \bibinfo{year}{2017}\natexlab{}.
\newblock \showarticletitle{Fruitchains: A fair blockchain}. In
  \bibinfo{booktitle}{\emph{ACM PODC}}.
\newblock


\bibitem[\protect\citeauthoryear{Philips and Nelson}{Philips and
  Nelson}{1995}]%
        {philips1995moment}
\bibfield{author}{\bibinfo{person}{Thomas~K Philips} {and}
  \bibinfo{person}{Randolph Nelson}.} \bibinfo{year}{1995}\natexlab{}.
\newblock \showarticletitle{The moment bound is tighter than Chernoff's bound
  for positive tail probabilities}.
\newblock \bibinfo{journal}{\emph{The American Statistician}}
  (\bibinfo{year}{1995}).
\newblock


\bibitem[\protect\citeauthoryear{Poelstra et~al\mbox{.}}{Poelstra
  et~al\mbox{.}}{2014}]%
        {poelstra2014distributed}
\bibfield{author}{\bibinfo{person}{Andrew Poelstra} {et~al\mbox{.}}}
  \bibinfo{year}{2014}\natexlab{}.
\newblock \showarticletitle{Distributed consensus from proof of stake is
  impossible}.
\newblock \bibinfo{journal}{\emph{Self-published Paper}}
  (\bibinfo{year}{2014}).
\newblock


\bibitem[\protect\citeauthoryear{Redman}{Redman}{2020}]%
        {redman2020bitcoingold51}
\bibfield{author}{\bibinfo{person}{Jamie Redman}.}
  \bibinfo{year}{2020}\natexlab{}.
\newblock \bibinfo{title}{Bitcoin Gold 51\% Attacked - Network Loses \$70,000
  in Double Spends}.
\newblock
\newblock
\urldef\tempurl%
\url{https://news.bitcoin.com/bitcoin-gold-51-attacked-network-loses-70000-in-double-spends/}
\showURL{%
\tempurl}


\bibitem[\protect\citeauthoryear{Reijers, O'Brolch{\'a}in, and Haynes}{Reijers
  et~al\mbox{.}}{2016}]%
        {reijers2016governance}
\bibfield{author}{\bibinfo{person}{Wessel Reijers}, \bibinfo{person}{Fiachra
  O'Brolch{\'a}in}, {and} \bibinfo{person}{Paul Haynes}.}
  \bibinfo{year}{2016}\natexlab{}.
\newblock \showarticletitle{Governance in blockchain technologies \& social
  contract theories}.
\newblock \bibinfo{journal}{\emph{Ledger}} (\bibinfo{year}{2016}).
\newblock


\bibitem[\protect\citeauthoryear{Rosenberg, Jennings, and Peterson}{Rosenberg
  et~al\mbox{.}}{2008}]%
        {rosenberg2008session}
\bibfield{author}{\bibinfo{person}{Jonathan Rosenberg}, \bibinfo{person}{Cullen
  Jennings}, {and} \bibinfo{person}{J Peterson}.}
  \bibinfo{year}{2008}\natexlab{}.
\newblock \bibinfo{booktitle}{\emph{The session initiation protocol (SIP) and
  spam}}.
\newblock \bibinfo{type}{{T}echnical {R}eport}. \bibinfo{institution}{RFC 5039,
  January}.
\newblock


\bibitem[\protect\citeauthoryear{Rosenfeld}{Rosenfeld}{2011}]%
        {rosenfeld2011analysis}
\bibfield{author}{\bibinfo{person}{Meni Rosenfeld}.}
  \bibinfo{year}{2011}\natexlab{}.
\newblock \showarticletitle{Analysis of {Bitcoin} Pooled Mining Reward
  Systems}.
\newblock \bibinfo{journal}{\emph{arXiv}} (\bibinfo{year}{2011}).
\newblock


\bibitem[\protect\citeauthoryear{Rosenfeld}{Rosenfeld}{2014}]%
        {rosenfeld2014hashrateAnalysis}
\bibfield{author}{\bibinfo{person}{Meni Rosenfeld}.}
  \bibinfo{year}{2014}\natexlab{}.
\newblock \showarticletitle{Analysis of hashrate-based double spending}.
\newblock \bibinfo{journal}{\emph{arXiv}} (\bibinfo{year}{2014}).
\newblock


\bibitem[\protect\citeauthoryear{Roughgarden}{Roughgarden}{2020}]%
        {roughgarden2020transaction}
\bibfield{author}{\bibinfo{person}{Tim Roughgarden}.}
  \bibinfo{year}{2020}\natexlab{}.
\newblock \showarticletitle{Transaction Fee Mechanism Design for the Ethereum
  Blockchain: An Economic Analysis of EIP-1559}.
\newblock \bibinfo{journal}{\emph{arXiv preprint arXiv:2012.00854}}
  (\bibinfo{year}{2020}).
\newblock


\bibitem[\protect\citeauthoryear{Sapirshtein, Sompolinsky, and
  Zohar}{Sapirshtein et~al\mbox{.}}{2016}]%
        {sapirshtein2016optimal}
\bibfield{author}{\bibinfo{person}{Ayelet Sapirshtein},
  \bibinfo{person}{Yonatan Sompolinsky}, {and} \bibinfo{person}{Aviv Zohar}.}
  \bibinfo{year}{2016}\natexlab{}.
\newblock \showarticletitle{Optimal Selfish Mining Strategies in {Bitcoin}}. In
  \bibinfo{booktitle}{\emph{FC}}.
\newblock


\bibitem[\protect\citeauthoryear{Sasson, Chiesa, Garman, Green, Miers, Tromer,
  and Virza}{Sasson et~al\mbox{.}}{2014}]%
        {sasson2014zerocash}
\bibfield{author}{\bibinfo{person}{Eli~Ben Sasson}, \bibinfo{person}{Alessandro
  Chiesa}, \bibinfo{person}{Christina Garman}, \bibinfo{person}{Matthew Green},
  \bibinfo{person}{Ian Miers}, \bibinfo{person}{Eran Tromer}, {and}
  \bibinfo{person}{Madars Virza}.} \bibinfo{year}{2014}\natexlab{}.
\newblock \showarticletitle{Zerocash: Decentralized anonymous payments from
  bitcoin}. In \bibinfo{booktitle}{\emph{IEEE SP}}.
\newblock


\bibitem[\protect\citeauthoryear{Service}{Service}{2019}]%
        {crs2019bitcoin}
\bibfield{author}{\bibinfo{person}{Congressional~Research Service}.}
  \bibinfo{year}{2019}\natexlab{}.
\newblock \bibinfo{booktitle}{\emph{Bitcoin, Blockchain, and the Energy
  Sector}}.
\newblock
\urldef\tempurl%
\url{https://crsreports.congress.gov/product/pdf/R/R45863}
\showURL{%
\tempurl}


\bibitem[\protect\citeauthoryear{Shahsavari, Zhang, and Talhi}{Shahsavari
  et~al\mbox{.}}{2019}]%
        {shahsavari2019theoretical}
\bibfield{author}{\bibinfo{person}{Yahya Shahsavari}, \bibinfo{person}{Kaiwen
  Zhang}, {and} \bibinfo{person}{Chamseddine Talhi}.}
  \bibinfo{year}{2019}\natexlab{}.
\newblock \showarticletitle{A theoretical model for fork analysis in the
  bitcoin network}. In \bibinfo{booktitle}{\emph{IEEE ICBC}}.
\newblock


\bibitem[\protect\citeauthoryear{Stebila, Kuppusamy, Rangasamy, Boyd, and
  Nieto}{Stebila et~al\mbox{.}}{2011}]%
        {stebila2011stronger}
\bibfield{author}{\bibinfo{person}{Douglas Stebila}, \bibinfo{person}{Lakshmi
  Kuppusamy}, \bibinfo{person}{Jothi Rangasamy}, \bibinfo{person}{Colin Boyd},
  {and} \bibinfo{person}{Juan~Gonzalez Nieto}.}
  \bibinfo{year}{2011}\natexlab{}.
\newblock \showarticletitle{Stronger difficulty notions for client puzzles and
  denial-of-service-resistant protocols}. In
  \bibinfo{booktitle}{\emph{Cryptographers’ Track at the RSA Conference}}.
  Springer.
\newblock


\bibitem[\protect\citeauthoryear{Stewart}{Stewart}{2012}]%
        {stewart2012proof}
\bibfield{author}{\bibinfo{person}{Iain Stewart}.}
  \bibinfo{year}{2012}\natexlab{}.
\newblock \showarticletitle{Proof of burn}.
\newblock \bibinfo{journal}{\emph{bitcoin.it}} (\bibinfo{year}{2012}).
\newblock


\bibitem[\protect\citeauthoryear{Stoll, Klaa{\ss}en, and
  Gallersd{\"o}rfer}{Stoll et~al\mbox{.}}{2019}]%
        {stoll2019carbon}
\bibfield{author}{\bibinfo{person}{Christian Stoll}, \bibinfo{person}{Lena
  Klaa{\ss}en}, {and} \bibinfo{person}{Ulrich Gallersd{\"o}rfer}.}
  \bibinfo{year}{2019}\natexlab{}.
\newblock \showarticletitle{The carbon footprint of bitcoin}.
\newblock \bibinfo{journal}{\emph{Joule}} \bibinfo{volume}{3},
  \bibinfo{number}{7} (\bibinfo{year}{2019}), \bibinfo{pages}{1647--1661}.
\newblock


\bibitem[\protect\citeauthoryear{Team et~al\mbox{.}}{Team
  et~al\mbox{.}}{2017}]%
        {core2017economy}
\bibfield{author}{\bibinfo{person}{CORE Team} {et~al\mbox{.}}}
  \bibinfo{year}{2017}\natexlab{}.
\newblock \bibinfo{title}{The Economy (Https://Core-Econ. Org/the-Economy/?
  Lang= En)}.
\newblock
\newblock


\bibitem[\protect\citeauthoryear{{The Bitcoin community}}{{The Bitcoin
  community}}{[n.d.]}]%
        {btcWiki2020blockchain}
\bibfield{author}{\bibinfo{person}{{The Bitcoin community}}.}
  \bibinfo{year}{[n.d.]}\natexlab{}.
\newblock \bibinfo{title}{Block chain}.
\newblock
\newblock
\urldef\tempurl%
\url{https://en.bitcoin.it/wiki/Block_chain}
\showURL{%
\tempurl}


\bibitem[\protect\citeauthoryear{Truby}{Truby}{2018}]%
        {truby2018decarbonizing}
\bibfield{author}{\bibinfo{person}{Jon Truby}.}
  \bibinfo{year}{2018}\natexlab{}.
\newblock \showarticletitle{Decarbonizing Bitcoin: Law and policy choices for
  reducing the energy consumption of Blockchain technologies and digital
  currencies}.
\newblock \bibinfo{journal}{\emph{Energy research \& social science}}
  \bibinfo{volume}{44} (\bibinfo{year}{2018}), \bibinfo{pages}{399--410}.
\newblock


\bibitem[\protect\citeauthoryear{Tsabary and Eyal}{Tsabary and Eyal}{2018}]%
        {tsabary2018thegapgame}
\bibfield{author}{\bibinfo{person}{Itay Tsabary} {and} \bibinfo{person}{Ittay
  Eyal}.} \bibinfo{year}{2018}\natexlab{}.
\newblock \showarticletitle{The Gap Game}. In \bibinfo{booktitle}{\emph{ACM
  CCS}}.
\newblock


\bibitem[\protect\citeauthoryear{Tsabary, Yechieli, and Eyal}{Tsabary
  et~al\mbox{.}}{2020}]%
        {tsabary2020mad}
\bibfield{author}{\bibinfo{person}{Itay Tsabary}, \bibinfo{person}{Matan
  Yechieli}, {and} \bibinfo{person}{Ittay Eyal}.}
  \bibinfo{year}{2020}\natexlab{}.
\newblock \showarticletitle{MAD-HTLC: Because HTLC is Crazy-Cheap to Attack}.
\newblock \bibinfo{journal}{\emph{arXiv preprint arXiv:2006.12031}}
  (\bibinfo{year}{2020}).
\newblock


\bibitem[\protect\citeauthoryear{Van~Damme}{Van~Damme}{2002}]%
        {van2002strategic}
\bibfield{author}{\bibinfo{person}{Eric Van~Damme}.}
  \bibinfo{year}{2002}\natexlab{}.
\newblock \showarticletitle{Strategic equilibrium}.
\newblock \bibinfo{journal}{\emph{Handbook of game theory with economic
  applications}} (\bibinfo{year}{2002}).
\newblock


\bibitem[\protect\citeauthoryear{Voell}{Voell}{[n.d.]}]%
        {voell2020etc51}
\bibfield{author}{\bibinfo{person}{Zack Voell}.}
  \bibinfo{year}{[n.d.]}\natexlab{}.
\newblock \bibinfo{title}{Ethereum Classic Hit by Third 51\% Attack in a
  Month}.
\newblock
\newblock
\urldef\tempurl%
\url{https://www.coindesk.com/ethereum-classic-blockchain-subject-to-yet-another-51-attack}
\showURL{%
\tempurl}


\bibitem[\protect\citeauthoryear{Wang and Reiter}{Wang and Reiter}{2003}]%
        {wang2003defending}
\bibfield{author}{\bibinfo{person}{XiaoFeng Wang} {and}
  \bibinfo{person}{Michael~K Reiter}.} \bibinfo{year}{2003}\natexlab{}.
\newblock \showarticletitle{Defending against denial-of-service attacks with
  puzzle auctions}. In \bibinfo{booktitle}{\emph{IEEE SP}}.
\newblock


\bibitem[\protect\citeauthoryear{Wildau}{Wildau}{2018}]%
        {wildau2018chinaShutterBitcoinMines}
\bibfield{author}{\bibinfo{person}{Gabriel Wildau}.}
  \bibinfo{year}{2018}\natexlab{}.
\newblock \bibinfo{booktitle}{\emph{China moves to shutter bitcoin mines}}.
\newblock
\urldef\tempurl%
\url{https://www.cnbc.com/2018/01/10/china-moves-to-shutter-bitcoin-mines.html}
\showURL{%
\tempurl}


\bibitem[\protect\citeauthoryear{Winzer, Herd, and Faust}{Winzer
  et~al\mbox{.}}{2019}]%
        {winzer2019temporary}
\bibfield{author}{\bibinfo{person}{Fredrik Winzer}, \bibinfo{person}{Benjamin
  Herd}, {and} \bibinfo{person}{Sebastian Faust}.}
  \bibinfo{year}{2019}\natexlab{}.
\newblock \showarticletitle{Temporary censorship attacks in the presence of
  rational miners}. In \bibinfo{booktitle}{\emph{2019 IEEE European Symposium
  on Security and Privacy Workshops (EuroS\&PW)}}. IEEE.
\newblock


\bibitem[\protect\citeauthoryear{Yaish and Zohar}{Yaish and Zohar}{2020}]%
        {yaish2020pricing}
\bibfield{author}{\bibinfo{person}{Aviv Yaish} {and} \bibinfo{person}{Aviv
  Zohar}.} \bibinfo{year}{2020}\natexlab{}.
\newblock \showarticletitle{Pricing ASICs for Cryptocurrency Mining}.
\newblock \bibinfo{journal}{\emph{arXiv}} (\bibinfo{year}{2020}).
\newblock


\bibitem[\protect\citeauthoryear{Yin, Malkhi, Reiter, Gueta, and Abraham}{Yin
  et~al\mbox{.}}{2019}]%
        {yin2019hotstuff}
\bibfield{author}{\bibinfo{person}{Maofan Yin}, \bibinfo{person}{Dahlia
  Malkhi}, \bibinfo{person}{Michael~K Reiter}, \bibinfo{person}{Guy~Golan
  Gueta}, {and} \bibinfo{person}{Ittai Abraham}.}
  \bibinfo{year}{2019}\natexlab{}.
\newblock \showarticletitle{HotStuff: BFT consensus with linearity and
  responsiveness}. In \bibinfo{booktitle}{\emph{ACM PODC}}.
\newblock


\bibitem[\protect\citeauthoryear{Zhang, Eyal, Escriva, Juels, and
  Van~Renesse}{Zhang et~al\mbox{.}}{2017}]%
        {zhang2017rem}
\bibfield{author}{\bibinfo{person}{Fan Zhang}, \bibinfo{person}{Ittay Eyal},
  \bibinfo{person}{Robert Escriva}, \bibinfo{person}{Ari Juels}, {and}
  \bibinfo{person}{Robbert Van~Renesse}.} \bibinfo{year}{2017}\natexlab{}.
\newblock \showarticletitle{{REM}: Resource-Efficient Mining for Blockchains}.
  In \bibinfo{booktitle}{\emph{USENIX}}.
\newblock


\bibitem[\protect\citeauthoryear{Zhang and Preneel}{Zhang and Preneel}{2019}]%
        {zhang2019lay}
\bibfield{author}{\bibinfo{person}{Ren Zhang} {and} \bibinfo{person}{Bart
  Preneel}.} \bibinfo{year}{2019}\natexlab{}.
\newblock \showarticletitle{Lay down the common metrics: Evaluating
  proof-of-work consensus protocols’ security}. In
  \bibinfo{booktitle}{\emph{IEEE SP}}.
\newblock


\bibitem[\protect\citeauthoryear{Zur, Eyal, and Tamar}{Zur
  et~al\mbox{.}}{2020}]%
        {zur2020efficient}
\bibfield{author}{\bibinfo{person}{Roi~Bar Zur}, \bibinfo{person}{Ittay Eyal},
  {and} \bibinfo{person}{Aviv Tamar}.} \bibinfo{year}{2020}\natexlab{}.
\newblock \showarticletitle{Efficient MDP analysis for selfish-mining in
  blockchains}. In \bibinfo{booktitle}{\emph{ACM AFT}}.
\newblock


\end{thebibliography}
\appendix

	\section{PoW Alternatives}
	\label{app:related_pow_alternatives}

\paragraph{Permissioned systems}
These systems assume a set of entities that run a Byzantine fault tolerant~\cite{castro1999practical,kotla2007zyzzyva,kapitza2012cheapbft,gueta2018sbft,yin2019hotstuff,baudet2018state} algorithm to determine system state, and are also in charge of making membership changes.
Unlike~$\hebShortName$, these systems are~\emph{permissioned}, as a new participant requires the authorization of existing participants to join.
Such solutions are common in the enterprise market~\cite{cachin2016architecture,baudet2018state}, but are not directly suitable for permissionless cryptocurrency systems.

\paragraph{Proof of activity}
Bentov et al.~\cite{bentov2014proof} propose a system utilizing PoW to elect a committee of participants among the system stakeholders, who then run a variation of a classical permissioned distributed consensus protocol.
Similarly to Algorand~\cite{gilad2017algorand}, they also make assumptions regarding the availability of stake holders to ensure system progress.
$\hebShortName$ does not require assumptions of that sort since its blocks can be generated without tokens.

\paragraph{Proof of useful work}
This approach suggests performing useful work for PoW, that is, work that has external value outside the security of the cryptocurrency system~\cite{ball2017proofs,zhang2019lay,dotan2020proofs}.
However, these protocols also operate under different assumptions.
Ball et al.~\cite{ball2017proofs} relies on users providing problem instances altruistically, while Zhang et al.~\cite{zhang2017rem} rely on trusted hardware, i.e., a manufacturer like Intel. 

In this work we focus on reduction of wastefulness rather than its re-purposing, without relying on a centralized authority nor assuming altruistic behavior.

\paragraph{Proof of burn}
A different suggestion for PoW replacement is~\emph{proof-of-burn}~\cite{stewart2012proof,p4titan2014slimcoin,DBLP:journals/iacr/KarantiasKZ19}, in which miners prove the depletion of another cryptocurrency, typically PoW-based, to create blocks. 
This scheme replaces burning electricity with burning another currency.
This increases the scarcity of the burnt tokens, and hence that currency's value, maintaining high minting costs and negative environmental impact.
In contrast,~$\hebShortName$ reduces the expended external resources without deferring the waste.

\paragraph{Proof of space}
Permacoin~\cite{miller2014permacoin} and MeshCash~\cite{bentov2017tortoise} both use storage instead of computations as basis for their puzzles.
However, this approach still results with physical resource expenditure. 
$\hebShortName$ reduces any external expenditure, and can be applied to such systems as well.

\paragraph{PPCoin}

PPCoin~\cite{king2012peercoin} suggests a PoW system that transitions into PoS~-- the stake is measured by considering the time each token is held, creating a so-called \emph{coin-age}.
It requires similar assumptions to other PoS systems, and lacks evaluation of its security and incentive compatibility.
Specifically, it does not consider the reward (expressed either as coin or coin-age) nor the expenses (external expenses and coin-age) for creating PoW blocks, so the effect of using coin-age in the ecosystem is not well defined.
$\hebShortName$ uses the native system token for block generation, and has a full security and incentive-compatibility analysis.

\paragraph{Proof of transfer}
The Stacks blockchain~\cite{ali2020pox} uses~\emph{proof of transfer}~-- it requires transferring a base token (e.g., Bitcoin) to create blocks.
It does not contain a security proof nor incentive-compatibility analysis, preventing an accurate comparison.
Specifically, it does not take into account that a user can transfer her base tokens to herself, hence generating multiple transfer-proofs with only a minor investment.
$\hebShortName$ does not rely on a base token, the redistributed tokens are not transferred to a destination of choice but shared proportionally among all system entities, and the protocol has a full security and incentive-compatibility analysis.

	\section{Naive Protocols}
	\label{app:naive_protocols} 

We present three naive protocols and discuss their shortcomings.
These are simpler solutions compared to~$\hebShortName$, however, help clarify some of its design choices.

The first protocol, named~$\protocolNameBitcoinHalf$, is exactly as~$\protocolNameBitcoin$ excepts it mints new tokens at half rate~(\S\ref{app:naive_protocols_half}).
That is,~$\tfrac{\rewardPerBlock}{2}$ new tokens per block compared to~$\rewardPerBlock$ of~$\protocolNameBitcoin$.
We show this protocol is as wasteful as~$\protocolNameBitcoin$, i.e., they both result with~$\protocolPropertyPow = 1$.

We then consider a different protocol, named~$\protocolNameOneNoPrefixFullName$ ($\protocolNameOneNoPrefixShortName$), that similarly to~$\protocolNameBitcoin$ also mints~$\rewardPerBlock$ tokens per block~(\S\ref{app:naive_protocols_pr}).
However, for a parameter~$\posRatio \in \left[0,1\right)$, it rewards only~$\oneMinusPosRatioBrackets \rewardPerBlock$ of these to the block-generating miner, and distributes the other~$\posRatio \rewardPerBlock$ among all system miners and users proportionally to their current token holdings.
We show that~$\protocolNameOne$ is less wasteful than~$\protocolNameBitcoin$ ($\protocolPropertyPow = 1-\posRatio$ compared to~$1$) at the expense of less resiliency to free safety-violation attacks ($\protocolPropertyDoubleSpendAttack = 1-\posRatio$ compared to~$1$).

We conclude with the~$\protocolNameMandatoryFullName$ ($\protocolNameMandatory$) protocol, which is like~$\hebShortName$, but~\emph{requires} (rather than enabling) miners to internally-spend to create blocks~(\S\ref{sec:protocol_v3_strawman}).
Although providing lower~$\protocolPropertyDoubleSpendAttack$, this protocol undesirably provides~$\protocolPropertyPermissiveness=0$ on top of having an inherent liveness issue.

\subsection{$\protocolNameBitcoinHalf$}	
\label{app:naive_protocols_half} 

As mentioned,~$\protocolNameBitcoinHalf$ is just like~$\protocolNameBitcoin$ but mints tokens at a half rate.
The prescribed mining strategy is identical~(Alg.~\ref{alg:bitcoin_protocol_strategy}).

Recall that the real value of a single token in~$\externalCurrency$ is inverse to the number of circulating tokens (cf. \S\ref{sec:model_exchange_rate}).
As such, when minting at half rate, the price of a single token in~$\protocolNameBitcoinHalf$ is exactly twice of a token in~$\protocolNameBitcoin$.
So, despite a miner getting only~$\tfrac{\rewardPerBlock}{2}$ tokens with each block, her reward in~$\externalCurrency$ matches that of miners in~$\protocolNameBitcoin$.
As such,~$\protocolNameBitcoinHalf$ is just as wasteful as~$\protocolNameBitcoin$, both providing~$\protocolPropertyPow = 1$.

\subsection{$\protocolNameOneNoPrefixFullName$}	
\label{app:naive_protocols_pr} 

As previously stated,~$\protocolNameOne$ mints~$\rewardPerBlock$ tokens per block as~$\protocolNameBitcoin$, but distributes these minted tokens differently.
It gives~$\oneMinusPosRatioBrackets \rewardPerBlock$ to the miner who created the block, and the remaining~$\posRatio \rewardPerBlock$ are shared between the miners and users proportionally to their token holdings.
The prescribed mining strategy is as in~$\protocolNameBitcoin$~(Alg.~\ref{alg:bitcoin_protocol_strategy}).

As both protocols mint at the same rate, their respective tokens have the same~$\externalCurrency$ value (cf. \S\ref{sec:model_exchange_rate}).
However,~$\protocolNameOne$ miners get less reward ($\oneMinusPosRatioBrackets \rewardPerBlock$ compared to~$\rewardPerBlock$), hence incentivized to spend less electricity in equilibrium (Eq.~\ref{eq:model_total_utilities_equals_expenses}).
As such, it provides~$\protocolPropertyPow = 1-\posRatio$, achieving less wastefulness.

However, as reward for creating a block is now lower also are the costs to create one, and as such, the system is less resilient to free safety-violation attacks.
Specifically, it only provides~$\protocolPropertyDoubleSpendAttack = 1 - \posRatio$ compared to the original~$\protocolPropertyDoubleSpendAttack = 1$.

That is,~$\protocolNameOne$ reduces wastefulness at the expense of making the system less resilient against rational attackers.

\subsection{$\protocolNameMandatoryFullName$}
\label{sec:protocol_v3_strawman}

The motivation for~$\protocolNameMandatory$ is to incorporate internal expenses as part of the block creation process.
It is just like~$\hebShortName$, but requires internal expenditure to create blocks.
That is, there is a single type of blocks, and a miner~$i$ can create at most~$\floor*{\tfrac{\balanceSubSuper{i}{\internalCurrency}}{\posRatio\cdot \rewardPerBlock}}$ blocks in an epoch on chain~$\mainChainName{}$.

The prescribed strategy~$\strategyNameProtocol{\defaultStrategyName}{\protocolNameMandatory}$~(Alg.~\ref{alg:v2_protocol_strategy}) indicates that miner~$i$ should allocate her balance s.t.~$\left\langle \posRatio \balanceSub{i}, \oneMinusPosRatioBrackets \balanceSub{i}\right\rangle$, point her created blocks to~$\lastBlockOnMainChain{\funcLongestChains{}}$, and publish them immediately.
In case of conflicting longest chains,~$\strategyNameProtocol{\defaultStrategyName}{\protocolNameMandatory}$ states that the miner points her next block to either of them picked uniformly-at-random.

\begin{algorithm}[t]
	\SetNoFillComment
	\SetAlgoNoLine 
	\SetAlgoNoEnd 
	\DontPrintSemicolon 
	\caption{$\strategyNameProtocol{i,\defaultStrategyName}{\protocolNameMandatory}$} 
	\label{alg:v2_protocol_strategy} 
	
	\BlankLine
	\Fn{\FAllocateBudgt{\funcArgsAllocateBudgt{i}}}{
		\Return~$\left\langle \posRatio \balanceSub{i}, \oneMinusPosRatioBrackets \balanceSub{i} \right\rangle~$\;
	}
	\BlankLine
	
	\BlankLine
	\Fn{\FGenerateBlock{\funcArgsGenerateBlock{i}}}{
		$\mainChainName{} \gets \text{ uniformly at random from } \funcLongestChains{}$\;
		$\text{pointer} \gets \lastBlockOnMainChain{\mainChainName{}}$\;
		\Return NewBlock(pointer)\;
	}
	\BlankLine
	
	\BlankLine
	\Fn{\FPublish{\funcArgsPublish{i}}}{
		\Return All previously unpublished blocks\;
	}
\end{algorithm}

However,~$\protocolNameMandatory$ has two significant shortcomings.
First, miners must obtain~$\internalCurrency$ prior to creating blocks, failing to do so prevents participation.
Therefore,~$\protocolPropertyPermissiveness=0$, i.e., it is a permissioned protocol.

Moreover, in equilibrium, miners are expected to commit~$\internalCurrency$ sufficient to enable the creation of exactly the number of blocks they expect.
Now, assume a miner becomes absent, either maliciously or unintentionally.
Other miners cannot create new blocks exceeding their quota, and the system halts.

These problems arise as~$\protocolNameMandatory$ conditions block creation on early~$\internalCurrency$ expenditure, justifying~$\hebShortName$'s design of \emph{incentivizing, but not forcing} miners to internally spend.


\section{Practical Considerations}
\label{app:practical_considerations}

\paragraph{Epoch duration and~$\protocolPropertyDecentralization$}
Longer epochs are required to reduce~$\protocolPropertyDecentralization$-- they assure that with high probability each miner gets to create her expected number of blocks, and thus not under-utilize her internally-spent tokens.
However, such longer epochs impose longer payout intervals, and are less appealing to miners.

Payout intervals exist even in existing PoW systems (e.g., 100 blocks maturity period in Bitcoin), but, to allow for shorter epochs while not degrading expected utilization, we suggest letting surplus internal expenses carry out to the following consecutive epoch.
Specifically, instead of redistributing all committed internal expenses after every epoch, the system redistributes only the used tokens, while the others are carried out to the next epoch.
That reduces under-utilization, making shorter epochs feasible.


\paragraph{Circulating supply}

Recall that~$\hebShortName$ works under the assumption that~$\balanceOfMinersAll\left(\epochIndex\right) \ll \balanceOfOthers\left(\epochIndex\right)$, thus requiring a sufficiently large circulating supply.
To achieve this, we suggest bootstrapping with a~\emph{ramp up} period, in which~$\hebShortName$ performs as $\protocolNameBitcoin$, allowing enough internal currency to accumulate.

\paragraph{Internal expenditure mechanism}

The cryptocurrency transaction mechanism can be utilized to let users internally expend their tokens.
A specific implementation could be to let the miners transact their tokens to a null address~\cite{stewart2012proof,p4titan2014slimcoin}.

\paragraph{Currency redistribution}

Recall~$\hebShortName$ redistributes the internally-spent currency among all the cryptocurrency holders, proportionally to their relative holdings.
There are two issues at hand~-- first, how to divide the tokens proportionally (who gets what), and then, how to perform the actual redistribution (associate entities with their tokens).

Blockchain systems record user amounts, whether they are account~\cite{buterin2013ethereum} or UTXO~\cite{nakamoto2008bitcoin,litecoin2013site,dogecoin2013site,sasson2014zerocash} based.
As such, system users know all relative token holdings and therefore how to proportionally divide the required amount.
We note proportional distribution could require fragmenting the atomic unit of the currency (e.g., 1 Satoshi in Bitcoin).
One can circumvent such scenarios by using pseudo-random tie breaking, rewarding only one user with the atomic unit~\cite{pass2015micropayments}.
We illustrate through an example: say there are 10 users with the same token holdings, due to share 15 tokens, that is, each is due 1.5 tokens.
As such, 5 users selected at random will receive 2 tokens, while the other receive only 1.

The actual token allocation can be performed implicitly, that is, without including explicit transactions in the blockchain.


\section{$\hebShortName$ $\protocolPropertyPermissiveness$ Analysis}
\label{app:permissiveness}

We note the miner balance distribution affects miner utilities, and thus affects~$\protocolPropertyPermissiveness$ as well.

We consider a system with two miners~$i$ and~$\neg i$, with balances~$\balanceSubSuper{i}{}, \balanceSubSuper{\neg i}{}$, respectively.
Miner~$\neg i$ has access to~$\internalCurrency$ and follows~$\strategyNameProtocol{\defaultStrategyName}{\hebShortName}$.

If miner~$i$ can obtain~$\internalCurrency$ then she follows~$\strategyNameProtocol{\defaultStrategyName}{\hebShortName}$, resulting with~$\utilityYesIcfuncName{i}{\hebShortName} = \epochLength \balanceNormalizedSub{i}$ (see~\S\ref{sec:protocol_v3_prescribed_strategy_analysis}).

Otherwise, she follows~$\strategyNameProtocol{\noIcStrategyName}{\hebShortName}$, thus allocating her balance~$\balanceSubSuper{i}{\internalCurrency} = 0 ,\balanceSubSuper{i}{\externalCurrency} = \balanceSubSuper{i}{}$.
Consequently,~$\balanceNormalizedSubSuper{i}{\externalCurrency} = \tfrac{\balanceSubSuper{i}{}}{\balanceSubSuper{i}{} + \oneMinusPosRatioBrackets \balanceSubSuper{\neg i}{}}$ and~$\balanceNormalizedSubSuper{\neg i}{\externalCurrency} = \tfrac{\oneMinusPosRatioBrackets \balanceSubSuper{\neg i}{}}{\balanceSubSuper{i}{} + \oneMinusPosRatioBrackets \balanceSubSuper{\neg i}{}}$.

Miner~$i$ creates only regular blocks, with an expected number of~$\tfrac{\balanceSubSuper{i}{}}{\balanceSubSuper{i}{} + \oneMinusPosRatioBrackets \balanceSubSuper{\neg i}{}}$, therefore her expected block weight is~$\E\left[\weight{i}\right] = \epochLength \tfrac{\balanceSubSuper{i}{}}{\balanceSubSuper{i}{} + \oneMinusPosRatioBrackets \balanceSubSuper{\neg i}{}}$.
In contrast, miner~$\neg i$ creates only factored blocks with an expected block weight of~$\E\left[\weight{\neg i}\right] = \epochLength \factor \tfrac{\oneMinusPosRatioBrackets \balanceSubSuper{\neg i}{}}{\balanceSubSuper{i}{} + \oneMinusPosRatioBrackets \balanceSubSuper{\neg i}{}}$. 
Consequently,~$\utilityNoIcfuncName{i}{\hebShortName} = \tfrac{\balanceSubSuper{i}{}}{\balanceSubSuper{i}{} + \factor \balanceSubSuper{\neg i}{}}\epochLength$, 
and as such~$\protocolPropertyPermissiveness = \tfrac{1}{\balanceNormalizedSubSuper{i}{} + \factor \left(1 - \balanceNormalizedSubSuper{i}{}\right)}$.

\end{document}